\newcommand{\ccNP}{\textrm{\textsc{NP}}}
\newcommand{\row}{\mathcal{R}}
\newcommand{\block}{\mathcal{B}}
\newif\ifabstract
\newif\iffull
\newtoks\magicAppendix
\newtoks\magictoks
\newif\iflater
\long\def\later#1{\magictoks={#1}%
  \edef\magictodo{\noexpand\magicAppendix={\the\magicAppendix \par
    \noexpand\setcounter{theorem}{\arabic{theorem}}%
    \noexpand\setcounter{lemma}{\arabic{lemma}}%
    \noexpand\setcounter{definition}{\arabic{definition}}%
    \noexpand\setcounter{corollary}{\arabic{corollary}}%
    \the\magictoks}}%
  \magictodo}
\long\def\both#1{\later{#1}\the\magictoks}
\long\def\later#1{#1}
\long\def\both#1{#1}
\def\magicappendix{\latertrue \the\magicAppendix}
\begin{document}

\title{Staged Self-Assembly and\\Polyomino Context-Free Grammars}

\author{Andrew Winslow\thanks{Supported in part by National Science Foundation grant CBET-0941538.
}}

\authorrunning{A. Winslow}

\institute{
Department of Computer Science, Tufts University, \\
\protect\url{awinslow@cs.tufts.edu}
}

\maketitle

\begin{abstract}
Previous work by Demaine et al. (2012) developed a strong connection between smallest context-free grammars and staged self-assembly systems for one-dimensional strings and assemblies.
We extend this work to two-dimensional polyominoes and assemblies, comparing staged self-assembly systems to a natural generalization of context-free grammars we call \emph{polyomino context-free grammars (PCFGs)}.

We achieve nearly optimal bounds on the largest ratios of the smallest PCFG and staged self-assembly system for a given polyomino with $n$ cells.
For the ratio of PCFGs \emph{over} assembly systems, we show that the smallest PCFG can be an $\Omega(n/\log^3{n})$-factor larger than the smallest staged assembly system, even when restricted to square polyominoes.
For the ratio of assembly systems over PCFGs, we show that the smallest staged assembly system is never more than a $O(\log{n})$-factor larger than the smallest PCFG and is sometimes an $\Omega(\log{n}/\log\log{n})$-factor larger.
\end{abstract}


\section{Introduction}

In the mid-1990s, the Ph.D. thesis of Erik Winfree~\cite{Winfree-1998} introduced a theoretical model of self-assembling nanoparticles.
In this model, which he called the \emph{abstract tile assembly model (aTAM)}, square particles called \emph{tiles} attach edgewise to each other if their edges share a common \emph{glue} and the bond strength is sufficient to overcome the kinetic energy or \emph{temperature} of the system.
The products of these systems are \emph{assemblies}: aggregates of tiles forming via crystal-like growth starting at a \emph{seed tile}.
Surprisingly, these tile systems have been shown to be computationally universal~\cite{Winfree-1998,Cook-2011}, self-simulating~\cite{Doty-2010,Doty-2012a}, and capable of optimally encoding arbitrary shapes~\cite{Rothemund-2000,Adleman-2001,Soloveichik-2005}.

\iffull
In parallel with work on the aTAM, a number of variations on the model have been proposed and investigated.
These variations change a number of features of the original aTAM, for instance allowing glues to repulse~\cite{Doty-2011,Patitz-2011,Schweller-2012}, or adding labels to each tile to produce patterned assemblies~\cite{Goos-2011,Czeizler-2012,Seki-2013}.
For a more thorough treament of the aTAM and its variants, see the recent surveys of Patitz~\cite{Patitz-2012} and Doty~\cite{Doty-2012b}.

\else
In parallel with work on the aTAM, a number of variations on the model have been proposed and investigated.
\fi
One well-studied variant called the \emph{hierarchical}~\cite{Chen-2012} or \emph{two-handed assembly model (2HAM)}~\cite{Demaine-2008} eliminates the seed tile and allows tiles and assemblies to attach in arbitrary order.
This model was shown to be capable of (theoretically) faster assembly of squares~\cite{Chen-2012} and simulation of aTAM systems~\cite{Cannon-2013}, including capturing the seed-originated growth dynamics.
A generalization of the 2HAM model proposed by Demaine et al.~\cite{Demaine-2008} is the \emph{staged assembly model}, which allows the assemblies produced by one system to be used as reagents (in place of tiles) for another system, yielding systems divided into sequential assembly \emph{stages}.
They showed that such sequential assembly systems can replace the role of glues in encoding complex assemblies, allowing the construction of arbitrary shapes efficiently while only using a constant number of glue types, a result impossible in the aTAM or 2HAM.

To understand the power of the staged assembly model, Demaine et al.~\cite{Demaine-2012a} studied the problem of finding the smallest system producing a one-dimensional assembly with a given sequence of labels on its tiles, called a \emph{label string}. 
They proved that for systems with a constant number of glue types, this problem is equivalent to the well-studied problem of finding the smallest context-free grammar whose language is the given label string, also called the \emph{smallest grammar problem} (see~\cite{Lehman-2002,Charikar-2005}).
For systems with unlimited glue types, they proved that the ratio of the smallest context-free grammar \emph{over} the smallest system producing an assembly with a given label string of length $n$ (which they call \emph{separation}) is $\Omega(\sqrt{n/\log{n}})$ and $O((n/\log{n})^{2/3})$ in the worst case.

In this paper we consider the two-dimensional version of this problem: finding the smallest staged assembly system producing an assembly with a given \emph{label polyomino}.
For systems with constant glue types and no cooperative bonding, we achieve separation of grammars \emph{over} these systems of $\Omega(n/(\log\log{n})^2)$ for polyominoes with $n$ cells (Sect.~\ref{sec:sas-much-better-pcfgs-gen}), and $\Omega(n/\log^3{n})$ when restricted to rectangular (Sect.~\ref{sec:sas-much-better-pcfgs-rectangle}) or square (Sect.~\ref{sec:sas-much-better-pcfgs-square}) polyominoes with a constant number of labels.
Adding the restriction that each step of the assembly process produces a single product, we achieve $\Omega(n/\log^3{n})$ separation for general polyominoes with a single label (Sect.~\ref{sec:sas-much-better-pcfgs-gen}). 
For the separation of staged assembly systems \emph{over} grammars, we achieve bounds of $\Omega(\log{n}/\log\log{n})$ (Sect.~\ref{sec:pcfgs-slightly-better-sas-ssas}) and, constructively, $O(\log{n})$ (Sect.~\ref{sec:pcfgs-never-much-better-sas}).
For all of these results, we use a simple definition of context-free grammars on polyominoes that generalizes the deterministic context-free grammars (called \emph{RCFGs}) of~\cite{Demaine-2012a}. 

When taken together, these results give a nearly complete picture of how smallest context-free grammars and staged assembly systems compare.
For some polyominoes, staged assembly systems are exponentially smaller than context-free grammars ($O(\log{n})$ vs. $\Omega(n/\log^3{n})$).
On the other hand, given a polyomino and grammar deriving it, one can construct a staged assembly system that is a (nearly optimal) $O(\log{n})$-factor larger and produces an assembly with a label polyomino replicating the polyomino.

 

\section{Staged Self-Assembly}
\label{sec:sas-defns}

An instance of the staged tile assembly model is called a \emph{staged assembly system} or \emph{system}, abbreviated \emph{SAS}.
A SAS $\mathcal{S} = (T, G, \tau, M, B)$ is specified by five parts:
a \emph{tile set} $T$ of square \emph{tiles},
a \emph{glue function} $G : \Sigma(G)^2 \rightarrow \{0, 1, \dots, \tau \}$,
a \emph{temperature} $\tau \in \mathbb{N}$,
a directed acyclic \emph{mix graph} $M = (V, E)$,
and a \emph{start bin function} $B : V_L \rightarrow T$ from the \emph{leaf vertices} $V_L \subseteq V$ of $M$ with no incoming edges. 

Each tile $t \in T$ is specified by a 5-tuple $(l, g_n, g_e, g_s, g_w)$ consisting of a label $l$ taken from an alphabet $\Sigma(T)$ (denoted $l(t)$) and a set of four non-negative integers in $\Sigma(G) = \{0, 1, \dots, k\}$ specifying the \emph{glues} on the sides of $t$ with normal vectors $\langle 0, 1 \rangle$ (north), $\langle 1, 0 \rangle$ (east), $\langle 0, -1, \rangle$ (south), and $\langle -1, 0 \rangle$ (west), respectively, denoted $g_{\vec{u}}(t)$.
In this work we only consider glue functions with the constraints that if $G(g_i, g_j) > 0$ then $g_i = g_j$, and $G(0, 0) = 0$.

A \emph{configuration} is a partial function $C : \mathbb{Z}^2 \rightarrow T$ mapping locations on the integer lattice to tiles.
Any two locations $p_1 = (x_1, y_1)$, $p_2 = (x_2, y_2)$ in the domain of $C$ (denoted ${\rm dom}(C)$) are \emph{adjacent} if $||p_2 - p_1|| = 1$ and the \emph{bond strength} between any pair of tiles $C(p_1)$ and $C(p_2)$ at adjacent locations is $G(g_{p_2 - p_1}(C(p_1)), g_{p_1 - p_2}(C(p_2))$.
A \emph{configuration} is a \emph{$\tau$-stable assembly} or an \emph{assembly at temperature $\tau$} if ${\rm dom}(C)$ is connected on the lattice and, for any partition of ${\rm dom}(C)$ into two subconfigurations $C_1$, $C_2$, the sum of the bond strengths between tiles at pairs of locations $p_1 \in {\rm dom}(C_1)$, $p_2 \in {\rm dom}(C_2)$ is at least $\tau$. 
Any pair of configurations $C_1$, $C_2$ are equivalent if there exists a vector $\vec{v} = \langle x, y \rangle$ such that ${\rm dom}(C_1) = \{ p + \vec{v} \mid p \in {\rm dom}(C_2) \}$ and $C_1(p) = C_2(p + \vec{v})$ for all $p \in {\rm dom}(C_1)$.
Two $\tau$-stable assemblies $A_1$, $A_2$ are said to \emph{assemble} into a \emph{superassembly} $A_3$ if there exists a translation vector $\vec{v} = \langle x, y \rangle$ such that ${\rm dom}(A_1) \cap \{ p + \vec{v} \mid p \in A_2 \} = \emptyset$ and $A_3$ defined by the partial functions $A_1$ and $A_2'$ with $A_2'(p) = A_2(p + \vec{v})$ is a $\tau$-stable assembly. 

Each vertex of the mix graph $M$ describes a \emph{two-handed assembly process}.
This process starts with a set of $\tau$-stable \emph{input assemblies} $I$.
The set of \emph{assembled assemblies} $Q$ is defined recursively as $I \subseteq Q$, and for any pair of assemblies $A_1, A_2 \in Q$ with superassembly $A_3$, $A_3 \in Q$.
Finally, the set of \emph{products} $P \subseteq Q$ is the set of assemblies $A$ such that for any assembly $A'$, no superassembly of $A$ and $A'$ is in $Q$.

The mix graph $M = (V, E)$ of $\mathcal{S}$ defines a set of two-handed assembly processes (called \emph{mixings}) for the non-leaf vertices of $M$ (called \emph{bins}).
The input assemblies of the mixing at vertex $v$ is the union of all products of mixings at vertices $v'$ with $(v', v) \in E$.
The start bin function $B$ defines the lone single-tile product of each mixings at a leaf bin. 
The system $\mathcal{S}$ is said to \emph{produce} an assembly $A$ if some mixing of $\mathcal{S}$ has a single product, $A$. 
We define the size of $\mathcal{S}$, denoted $\mathcal{S}$, to be $|E|$, the number of edges in $M$.
If every mixing in a $\mathcal{S}$ has a single product, then $\mathcal{S}$ is a \emph{singular self-assembly system (SSAS)}.
 
\begin{figure}[ht]
\centering
\includegraphics[scale=1.0]{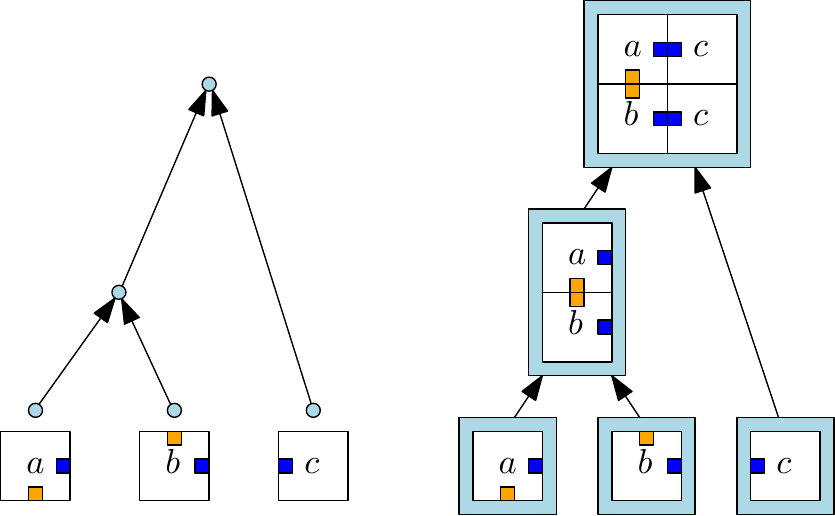}
\caption{A self-assembly system (SAS) consisting of a mix graph and tile types (left), and the assemblies produced by carrying out the algorithmic process of staged self-assembly (right).}
\label{fig:2D-sas-md}
\end{figure}

The results of Section~\ref{sec:constant-glues} use the notion of a self-assembly system $\mathcal{S}'$ \emph{simulating} a system $\mathcal{S}$ by carrying out the same sequence of mixings and producing a set of scaled assemblies.
Formally, we say a system $\mathcal{S}' = (T', G', \tau, M', B')$ \emph{simulates} a system $\mathcal{S} = (T, G, \tau, M, B)$ at \emph{scale} $b$ if there exist two functions $f$, $g$ with the following properties:

\begin{enumerate}
\item[(1)] The function $f : (\Sigma(T') \cup \{\varnothing\}) ^{b^2} \rightarrow \Sigma(T) \cup \{\varnothing\}$ maps the labels of $b \times b$ regions of tiles (called \emph{blocks}) to a label of a tile in $T$.
The empty label $\varnothing$ denotes no tile.
\item[(2)] The function $g : S' \rightarrow V$ maps a subset $S'$ of the vertices of the mix graph $M'$ to vertices of the mix graph $M$ such that $g$ is an isomorphism between the subgraph induced by $S'$ in $M'$ and the graph $M$.
\item[(3)] Let $P(v)$ be the set of products of the bin corresponding to vertex $v$ in a mix graph.
Then for each vertex $v \in M$ with $v' = g^{-1}(v)$, $P(v) = \{f(p) \mid p \in P(v')\}$.
\end{enumerate}

\iffull
Intuitively, $f$ defines a correspondence between the $b$-scaled macrotiles in $\mathcal{S}'$ simulating tiles in $\mathcal{S}$, and $g$ defines a correspondence between bins in the systems.
Property (3) requires that $f$ and $g$ do, in fact, define correspondence between what the systems produce.
\fi

The self-assembly systems constructed in Sections~\ref{sec:pcfgs-never-much-better-sas} and~\ref{sec:sass-much-better-than-pcfgs} produce only \emph{mismatch-free assemblies}: assemblies in which every pair of incident sides of two tiles in the assembly have the same glue.
A system is defined to be \emph{mismatch-free} if every product of the system is mismatch-free.

\section{Polyomino Context-Free Grammars}

Here we describe polyominoes, a generalization of strings, and polyomino context-free grammars, a generalization of deterministic context-free grammars.
These objects replace the strings and restricted context-free grammars (RCFGs) of Demaine et al.~\cite{Demaine-2012a}.

A \emph{labeled polyomino} or \emph{polyomino} $P = (S, L)$ is defined by a connected set of points $S$ on the square lattice (called \emph{cells}) containing $(0, 0)$ and a label function $L: S \rightarrow \Sigma(P)$ mapping each cell of $P$ to a \emph{label} contained in an alphabet $\Sigma(P)$.
The \emph{size} of $P$ is the number of cells $P$ contains and is denoted $|P|$.
The label of the cell at lattice point $(x, y)$ is denoted $L((x, y))$ and we define $P(x, y) = L((x, y))$ for notational convenience.
We refer to the \emph{label} or \emph{color} of a cell interchangeably.

Define a \emph{polyomino context-free grammar (PCFG)} to be a quadruple $G = (\Sigma, \Gamma, S, \Delta)$.
The set $\Sigma$ is a set of \emph{terminal symbols} and the set $\Gamma$ is a set of \emph{non-terminal symbols}.
The symbol $S \in \Gamma$ is a special \emph{start symbol}.
Finally, the set $\Delta$ consists of \emph{production rules}, each of the form $N \rightarrow (R_1, (x_1, y_1)) \dots (R_j, (x_j, y_j))$ where $N \in \Gamma$ and is the left-hand side symbol of only this rule, $R_i \in N \cup T$, and each $(x_i, y_i)$ is a pair of integers.
The \emph{size} of $G$ is defined to be the total number of symbols on the right-hand sides of the rules of $\Delta$.

A polyomino $P$ can be derived by starting with $S$, the start symbol of $G$, and repeatedly replacing a non-terminal symbol with a set of non-terminal and terminal symbols.
The set of valid replacements is $\Delta$, the production rules of $G$, where a non-terminal symbol $N$ with lower-leftmost cell at $(x, y)$ can be replaced with a set of symbols $R_1$ at $(x + x_1, y + y_1)$, $R_2$ at $(x + x_2, y + y_2)$, $\dots$, $R_j$ at $(x + x_j, y + y_j)$ if there exists a rule $N \rightarrow (R_1, (x_1, y_1)) (R_2, (x_2, y_2)) \dots (R_j, (x_j, y_j))$.
Additionally, the set of terminal symbol cells derivable starting with $S$ must be connected and pairwise disjoint.

The polyomino $P$ derived by the start symbol of a grammar $G$ is called the \emph{language of $G$}, denoted $L(G)$, and $G$ is said to \emph{derive} $P$.
In the remainder of the paper we assume that each production rule has at most two right-hand side symbols (equivalent to binary normal form for 1D CFGs), as any PCFG can be converted to this form with only a factor-2 increase in size.
Such a conversion is done by iteratively replacing two right-hand side symbols $R_i$, $R_{i'}$ with a new non-terminal symbol $Q$, and adding a new rule replacing $Q$ with $R_i$ and $R_{i'}$. 

Intuitively, a polyomino context-free grammar is a recursive decomposition of a polyomino into smaller polyominoes.
Because each non-terminal symbol is the left-hand side symbol of at most one rule, each non-terminal corresponds to a subpolyomino of the derived polyomino.
Then each production rule is a decomposition of a subpolyomino into smaller subpolyominoes (see Figure~\ref{fig:pcfg-prod-rule-explanation}).

\begin{figure}[ht]
\centering
\includegraphics[scale=1.0]{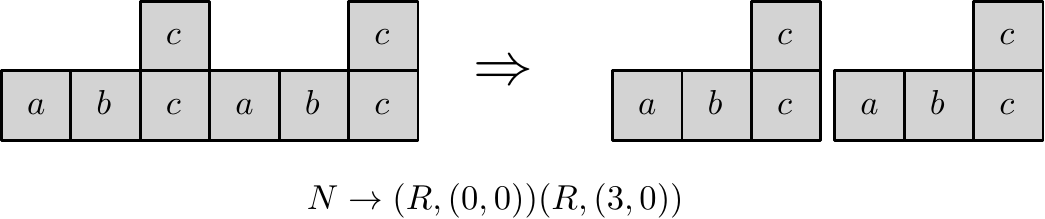}
\caption{Each production rule in a PCFG generating a single shape is a decomposition of the left-hand side non-terminal symbol's polyomino into the right-hand side symbols' polyominoes.}
\label{fig:pcfg-prod-rule-explanation}
\end{figure}

In this interpretation, the smallest grammar deriving a given polyomino is equivalent to a decomposition using the fewest \emph{distinct} subpolyominoes in the decomposition.
As for the smallest CFG for a given string, the smallest PCFG for a given polyomino is deterministic and finding such a grammar is \ccNP-hard.
Moreover, even approximating the smallest grammar is \ccNP-hard~\cite{Charikar-2005}, and achieving optimal approximation algorithms remains open~\cite{Jez-2013}.

In Section~\ref{sec:pcfgs-never-much-better-sas} we construct self-assembly systems that produce assemblies whose label polyominoes are scaled versions of other polyominoes, with some amount of ``fuzz'' in each scaled cell.
A polyomino $P' = (S', L')$ is said to be a \emph{$(c, d)$-fuzzy replica} of a polyomino $P = (S, L)$ if there exists a vector $\langle x_t, y_t \rangle$ with the following properties:

\begin{enumerate}
\item For each block of cells $\mathcal{S}'_{(i, j)} = \{ (x, y) \mid x_t + di \leq x < x_t + d(i+1), y_t + dj \leq y < y_t + d(j+1)\}$ (called a \emph{supercell}), $\mathcal{S}'_{(i, j)} \cap S' \neq \varnothing$ if and only if $(i, j) \subseteq S$.
\item For each supercell $\mathcal{S}'_{(i, j)}$ containing a cell of $P'$, the subset of \emph{label cells} $\{ (x, y) \mid x_t + di + (d-c)/2 \leq x < x_t + d(i+1) + (d-c)/2, y_t + dj + (d-c)/2 \leq y < y_t + d(j+1) + (d-c)/2\}$ consists of $c^2$ cells of $P'$, with all cells having identical label, called the \emph{label of the supercell} and denoted $\mathcal{L}_{(i, j)}$.
\item For each supercell $\mathcal{S}'_{(i, j)}$, any cell that is not a label cell of $\mathcal{S}'_{(i, j)}$ has a common \emph{fuzz label} in $L'$.
\item For each supercell $\mathcal{S}'_{(i, j)}$, the label of the supercell $\mathcal{L}'_{(i, j)} = P(i, j)$.
\end{enumerate}

Properties~(1) and~(2) define how sets of cells in $P'$ replicate individual cells in $P$, and the labels of these sets of cells and individual cells.
Property~(3) restricts the region of each supercell not in the label region to contain only cells with a common fuzz label.
Property~(4) requires that each supercell's label matches the label of the corresponding cell in $P$.

\section{SAS over PCFG Separation Lower Bound}
\label{sec:pcfgs-slightly-better-sas-ssas}

This result uses a set of shapes we call \emph{$n$-stagglers}, an example is seen in Figure~\ref{fig:staggler}.
The shapes consist of $\log{n}$ bars of dimensions $n/\log{n} \times 1$ stacked vertically atop each other, with each bar horizontally offset from the bar below it by some amount in the range $-(n/\log{n}-1), \dots, n/\log{n}-1$.
We use the shorthand that $\log{n} = \lfloor \log{n} \rfloor$ for conciseness. 
Every sequence of $\log{n}-1$ integers, each in the range $[-(n/\log{n}-1), n/\log{n}-1]$, encodes a unique staggler and by the pidgeonhole principle, some $n$-staggler requires $\log((2n/\log{n}-1)^{\log{n}-1} = \Omega(\log^2{n})$ bits to specify.

\begin{figure}[ht]
\centering
\includegraphics[scale=1.0]{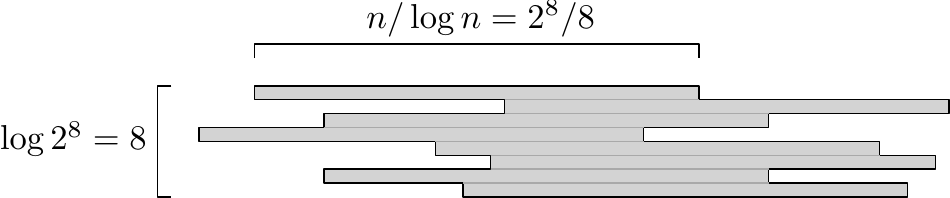}
\caption{The $2^8$-staggler specified by the sequence $-18, 13, 9, -17, -4, 12, -10$.}
\label{fig:staggler}
\end{figure}

\begin{lemma}
Any $n$-staggler can be derived by a PCFG of size $O(\log{n})$. 
\end{lemma}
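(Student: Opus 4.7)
The plan is to exhibit an explicit PCFG of size $O(\log n)$ for an arbitrary $n$-staggler by separating the construction into two independent tasks: building a single horizontal bar once, and then stacking $\log n$ translated copies of it at prescribed offsets.

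First, I would introduce non-terminals $B_1, B_2, B_4, \ldots$ that derive horizontal bars of doubling lengths, setting $B_1$ to the single terminal cell and $B_{2k} \to (B_k,(0,0))(B_k,(k,0))$. Using the binary expansion of $n/\log n$, we can splice together at most $O(\log(n/\log n)) = O(\log n)$ of these doubled bars (via additional concatenation non-terminals of the form $C \to (C',(0,0))(B_{2^i},(\ell,0))$) to obtain a single non-terminal $B$ whose language is exactly the horizontal $1 \times (n/\log n)$ bar. All told this uses $O(\log n)$ right-hand side symbols. Because every bar in any $n$-staggler has the same shape, we reuse this single non-terminal $B$ for all $\log n$ rows.

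Next, given the specifying sequence of offsets for a particular $n$-staggler, I would stack the bars with non-terminals $S_1, S_2, \ldots, S_{\log n}$ where $S_i$ derives the bottom $i$ bars of the staggler. Set $S_1 = B$, and for $i \ge 2$ take the rule $S_i \to (S_{i-1},(0,0))(B,(x_i, i-1))$, where $x_i$ is the absolute horizontal position of the $i$-th bar (the cumulative sum of the offsets given in the encoding of the staggler). This introduces $\log n - 1$ rules of two right-hand side symbols each, contributing another $O(\log n)$ symbols. The start symbol of the grammar is $S_{\log n}$.

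It remains to verify the PCFG validity conditions, and this is the only step requiring any care. Each $S_i$ has a unique rule, so the grammar is deterministic in the required sense. The disjointness and connectedness requirements hold because every bar sits in its own row (height $i-1$ is used by the top bar of $S_i$ and by no symbol appearing in $S_{i-1}$), so no two terminal cells derived from different bars ever collide; connectedness follows from the definition of an $n$-staggler, which requires the offset between consecutive bars to have absolute value at most $n/\log n - 1$, guaranteeing that each newly placed bar shares at least one vertical edge with the bar immediately below it. Summing the two contributions gives the claimed $O(\log n)$ bound on the size of the grammar.
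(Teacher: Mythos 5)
Your proof is correct and follows essentially the same route as the paper's: build a single bar non-terminal $B$ by doubling plus a binary-expansion splice, then stack $\log n$ translated copies of $B$ with one binary-normal-form rule per bar. The extra verification of disjointness and connectedness is a welcome bit of care the paper leaves implicit, but it does not change the argument.
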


\begin{proof}
A set of $O(\log{n})$ production rules deriving a bar (of size $\Theta(n/\log{n}) \times 1$) can be constructed by repeatedly doubling the length of the bar, using an additional $\log{n}$ rules to form the bar's exact length.
The result of these production rules is a single non-terminal $B$ deriving a complete bar.

Using the non-terminal $B$, a stack of $k$ bars can be described using a production rule $N \rightarrow (B, (x_1, 0)) (B, (x_2, 1)) \dots (B, (x_k, k-1))$, where the x-coordinates $x_1, x_2, \dots, x_k$ encode the offsets of each bar relative to the bar below it.
An equivalent set of $k-1$ production rules in binary normal form can be produced by creating a distinct non-terminal for $T_i$ each stack of the first $i$ bars, and a production rule $T_i \rightarrow (T_{i-1}, (0, 0)) (B, (x_i, i))$ encoding the offset of the topmost bar relative to the stack of bars beneath it.  

In total, $O(\log{n})$ rules are used to create $B$, the non-terminal deriving a bar, and $O(\log{n})$ are used to create the stack of bars, one per bar.
So the $n$-staggler can be constructed using a PCFG of size $O(\log{n})$.
\end{proof}

\begin{lemma}
For every $n$, there exists an $n$-staggler $P$ such that any SAS or SSAS producing an assembly with label polyomino $P$ has size $\Omega(\log^2{n}/\log\log{n})$. 
\end{lemma}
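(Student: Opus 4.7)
The plan is a pigeonhole counting argument. I would upper bound the number of distinct SAS of size $s$ by $2^{O(s \log s)}$, compare this against the $(2n/\log n - 1)^{\log n - 1} = 2^{\Omega(\log^2 n)}$ distinct $n$-stagglers established in the paragraph preceding the lemma, and conclude that for some $n$-staggler, no SAS of size below the pigeonhole threshold can produce a labeled assembly matching it. Solving $s \log s = \Theta(\log^2 n)$ for $s$ gives $s = \Theta(\log^2 n / \log\log n)$, yielding the claimed bound.

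To bound the number of size-$s$ SAS, I would account for each of the five components of $\mathcal{S} = (T, G, \tau, M, B)$. The mix graph $M$ has $s$ edges and WLOG at most $2s+1$ useful vertices (isolated vertices with no incident edges contribute nothing), and admits an encoding in $O(s \log s)$ bits. There are at most $O(s)$ leaf bins, so WLOG $|T| = O(s)$ and the glue alphabet $\Sigma(G)$ has size at most $4|T| = O(s)$. Each tile is specified by four glue labels (each $O(\log s)$ bits) and a label from the $n$-staggler's finite color alphabet, giving $O(s \log s)$ bits for $T$; the start bin function $B$ adds another $O(s \log s)$ bits. Finally, the constraint that $G(g_i, g_j) > 0$ implies $g_i = g_j$ means $G$ is determined entirely by its $O(s)$ self-strengths $G(g, g)$, each drawn from $\{0, 1, \ldots, \tau\}$.

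For the bound $2^{O(s \log s)}$ to go through, one needs $\tau$ polynomial in $s$. I would justify this by a rescaling argument: the behavior of $\mathcal{S}$ on the finitely many candidate interfaces arising in the produced $n$-cell assembly depends only on which sums of the at most $O(s)$ distinct strengths meet or exceed $\tau$, so $\tau$ and all strengths can WLOG be renormalized into a polynomial-in-$s$ range without changing the set of assemblies produced at any bin. With this in hand, the total number of distinct size-$s$ SAS is at most $2^{O(s \log s)}$; if $s \log s$ is smaller than a sufficiently small constant multiple of $\log^2 n$, then some $n$-staggler is not the label polyomino of any assembly produced by a SAS of size $s$. The same argument applies verbatim to SSAS, which is simply a restricted subclass of SAS.

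The main obstacle I anticipate is the treatment of $\tau$: unlike $|E|$, $|T|$, or $|\Sigma(G)|$, the temperature is not a priori tied to $s$ by the SAS definition, so without a careful rescaling step one obtains only an $O(s \log \tau)$ bound in which $\tau$ could in principle be exponential in $n$, weakening the conclusion to merely $\Omega(\log n)$. Cleanly pinning down the WLOG claim that the distinct glue strengths and $\tau$ can be renormalized into a polynomial-size universe, while preserving every stability test relevant to producing the $n$-staggler, is the step I expect to require the most care.
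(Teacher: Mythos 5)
Your proposal is correct and takes essentially the same route as the paper: an information-theoretic/pigeonhole argument that encodes any size-$s$ system in $O(s\log s)$ bits, compares this against the $2^{\Omega(\log^2 n)}$ distinct $n$-stagglers, and solves $s\log s = \Omega(\log^2 n)$ to get $s = \Omega(\log^2 n/\log\log n)$. The paper simply asserts the naive $O(|\mathcal{S}|\log|\mathcal{S}|)$-bit encoding without addressing the temperature and glue-strength normalization you flag, so your extra care on that point only strengthens the argument.
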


\begin{proof}
The proof is information-theoretic.
Recall that more than half of all $n$-stagglers require $\Omega(\log^2{n})$ bits to specify. 
Now consider the number of bits contained in a SAS $\mathcal{S}$.
Recall that $|\mathcal{S}|$ is the number of edges in the mix graph of $\mathcal{S}$.
Any SAS can be encoded naively using $O(|\mathcal{S}|\log{|\mathcal{S}|})$ bits to specify the mix graph, $O(|T|\log{|T|})$ bits to specify the tile set, and $O(|\mathcal{S}|\log{|T|})$ bits to specify the tile type at each leaf node of the mix graph.
Because the number of tile types cannot exceed the size of the mix graph, $|T| \leq |\mathcal{S}|$.
So the total number of bits needed to specify $\mathcal{S}$ (and thus the number of bits of information contained in $\mathcal{S}$) is $O(|\mathcal{S}|\log{|\mathcal{S}|} + |T|\log{|T|} + |\mathcal{S}|\log{|\mathcal{S}|}) = O(|\mathcal{S}|\log{|\mathcal{S}|})$. 
So some $n$-staggler requires a SAS $\mathcal{S}$ such that $O(|\mathcal{S}|\log{|\mathcal{S}|}) = \Omega(\log^2{n})$ and thus $|\mathcal{S}| = \Omega(\log^2{n}/\log{\log{n}})$.
\end{proof}

\begin{theorem}
The separation of SASs and SSASs over PCFGs is $\Omega(\log{n}/\log\log{n})$.
\end{theorem}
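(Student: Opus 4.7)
The plan is to derive this theorem as an immediate corollary of the two preceding lemmas by dividing their bounds. Fix $n$ and invoke the second lemma to obtain a specific $n$-staggler $P$ whose smallest SAS (or SSAS) has size $\Omega(\log^2{n}/\log\log{n})$. By the first lemma, every $n$-staggler (and in particular this $P$) admits a PCFG of size $O(\log{n})$. The worst-case ratio of SAS (or SSAS) size over PCFG size for polyominoes of size $n$ is therefore at least
\[
\frac{\Omega(\log^2{n}/\log\log{n})}{O(\log{n})} = \Omega(\log{n}/\log\log{n}),
\]
which is exactly the claimed separation.

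Before writing this down I would check two small bookkeeping items. First, that the parameter $n$ used in the lemmas really is the number of cells of the staggler: a staggler consists of $\log n$ bars of dimensions $(n/\log n)\times 1$, contributing exactly $n$ cells in total, so the bound is correctly expressed in terms of polyomino size. Second, that the existential statement of the second lemma (``for every $n$, there exists an $n$-staggler'') lines up with the worst-case definition of separation used in the introduction; since separation is a worst-case ratio, a single hard instance per $n$ suffices.

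There is no real obstacle here — the theorem is essentially a one-line corollary of the two lemmas, and the only thing to verify is the trivial arithmetic $(\log^2 n/\log\log n)/\log n = \log n/\log\log n$. The substantive content lies entirely in the two lemmas that preceded it.
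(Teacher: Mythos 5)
Your proof is correct and matches the paper's own argument: both derive the theorem directly by combining the two preceding lemmas and dividing the $\Omega(\log^2 n/\log\log n)$ SAS/SSAS lower bound by the $O(\log n)$ PCFG upper bound. Your bookkeeping checks (that $n$ is the cell count and that one hard instance per $n$ suffices for a worst-case separation) are sound and consistent with the paper's reasoning.
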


\begin{proof}
By the previous two lemmas, more than half of all $n$-stagglers require SASs and SSASs of size $\Omega(\log^2{n}/\log\log{n})$ and all $n$-stagglers have PCFGs of size $O(\log{n})$. 
So the separation is $\Omega(\log{n}/\log\log{n})$.
\end{proof}

We also note that scaling the $n$-staggler by a $c$-factor produces a shape which is derivable by a CFG of size $O(\log{n} + \log{c})$.
That is, the result still holds for $n$-stagglers scaled by any amount polynomial in $n$.
For instance, the $O(n)$-factor of the construction of Theorem~\ref{thm:sas-over-pcfg-upper-bound}.

At first it may not be clear how PCFGs achieve smaller encodings.
After all, each rule in a PCFG $G$ or mixing in SAS $\mathcal{S}$ specifies either a set of right-hand side symbols or set of input bins to use and so has up to $O(\log{|G|})$ or $O(\log{|\mathcal{S}|})$ bits of information.
The key is the coordinate describing the location of each right-hand side symbol.
These offsets have up to $O(\log{n})$ bits of information and in the case that $G$ is small, say $O(\log{n})$, each rule has a number of bits \emph{linear} in the size of the PCFG!

\section{SAS over PCFG Separation Upper Bound}
\label{sec:pcfgs-never-much-better-sas}

\ifabstract
\later{\section{SAS over PCFG Separation Upper Bound Details}}
\fi

Next we show that the separation lower bound of the last section is nearly large as possible by giving an algorithm for converting any PCFG $G$ into a SSAS $\mathcal{S}$ with system size $O(|G|\log{n})$ such that $\mathcal{S}$ produces an assembly that is a fuzzy replica of the polyomino derived by $G$.
Before describing the full construction, we present approaches for efficiently constructing general binary counters and for simulating glues using geometry.

\begin{figure}[h!]
\centering
\includegraphics[scale=1.0]{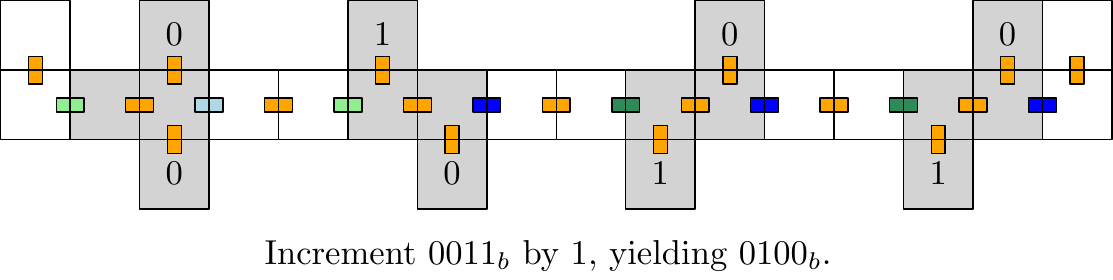}
\caption{A binary counter row constructed using single-bit constant-sized assemblies.
Dark blue and green glues indicate 1-valued carry bits, light blue and green glues indicate 0-valued carry bits.}
\label{fig:binary-counter-row-ex}
\end{figure}

The \emph{binary counter row assemblies} used here are a generalization of those by Demaine et al.~\cite{Demaine-2008} consisting of constant-sized bit assemblies, and an example is seen in Figure~\ref{fig:binary-counter-row-ex}.
Our construction achieves $O(\log{n})$ construction of arbitrary ranges of rows and increment values, in contrast to the contruction of~\cite{Demaine-2008} that only produces row sets of the form $0, 1, \dots, 2^{2^m}-1$ that increment by 1. 
To do so, we show how to construct two special cases from which the generalization follows easily. 

\both{
\begin{lemma}
\label{lem:efficient-incrementors-range-sas}
Let $i,j,n$ be integers such that $0 \leq i \leq j < n$.
There exists a SSAS of size $O(\log{n})$ with a set of bins that, when mixed, assemble a set of $j-i+1$ binary counter rows with values $i, i+1, \dots, j$ incremented by 1.
\end{lemma}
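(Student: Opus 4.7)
The plan is to build the SSAS in three phases, using a doubling-based construction analogous to a classical binary counter.

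\textbf{Phase 1 (bit gadgets).} I would first have a constant number of leaf bins produce constant-size tile gadgets, one for each (bit value, incoming carry) state. Their glues are designed so that horizontal attachment implements carry propagation along a row and vertical stacking implements the $+1$ transition from row $v$ to row $v+1$. With these gadgets, a single row encoding any chosen value fits together in $O(\log n)$ bins.

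\textbf{Phase 2 (dyadic counter blocks).} Next I would inductively construct a family $C_0, C_1, \ldots, C_{\lceil \log n \rceil}$, where $C_k$ is the $2^k$-row rectangle realizing the binary counter on $[0, 2^k-1]$. In parallel I would build \emph{template columns} --- all-zero and all-one vertical strips of heights $2^0, 2^1, \ldots, 2^{\lceil \log n \rceil}$ via doubling. Then $C_k$ is assembled from $C_{k-1}$ in a constant number of bins: take two copies of $C_{k-1}$, attach the height-$2^{k-1}$ zero column to the high side of one and the height-$2^{k-1}$ one column to the high side of the other, and stack them. The template columns cost $O(\log n)$ bins total and are reused across all levels, keeping this phase at $O(\log n)$ bins.

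\textbf{Phase 3 (restriction to $[i,j]$).} I would then decompose $[i,j]$ into the canonical sequence of $O(\log n)$ maximal dyadic sub-intervals $[a_\ell \cdot 2^{k_\ell}, (a_\ell + 1)\cdot 2^{k_\ell} - 1]$ tracing the walk from $i$ up to the LCA of $\{i,j\}$ in the implicit binary tree and back down to $j$. The assembly for each sub-interval is obtained by attaching to $C_{k_\ell}$ a high-bit prefix block of height $2^{k_\ell}$ encoding $a_\ell$, built from the pre-computed zero/one template columns. Since consecutive sub-intervals in this canonical decomposition share most of their prefix structure, their prefixes can be built incrementally, giving an aggregate cost of $O(\log n)$ bins for all prefixes. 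Finally the sub-interval assemblies are stacked vertically in order to form the single final product.

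\textbf{Main obstacle.} I expect the main difficulty to be designing the glue system so that the $+1$ carry cascades are computed correctly at every horizontal interface within a row and at every vertical seam between blocks --- particularly at seams coinciding with long carry chains at dyadic boundaries --- while simultaneously maintaining the SSAS singularity condition (no bin ever has more than one product). Accommodating arbitrary non-dyadic $i$ and $j$ adds further care at the extremes, where partial dyadic sub-intervals may need to be built directly from individual bit gadgets using the $O(\log n)$ single-row construction of Phase 1.
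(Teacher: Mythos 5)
Your Phase~1 gadgets and your dyadic decomposition of $[i,j]$ into $O(\log n)$ maximal dyadic sub-intervals are exactly the right combinatorial ingredients --- the paper phrases the same decomposition as taking the prefix tree $T_{(i,j)}$ of the binary representations of $i,\dots,j$, peeling off the leftmost and rightmost root-to-leaf paths, and observing that what remains is a forest of $O(\log n)$ complete subtrees (each complete subtree is one of your dyadic intervals). The gap is in what you build from them. The lemma asks for a \emph{set of $j-i+1$ separate assemblies}, one per value $v\in[i,j]$, each a single horizontal row computing $v\mapsto v+1$; these rows are consumed individually in Theorem~\ref{thm:sas-over-pcfg-upper-bound}, where each one attaches to a macroglue whose geometry encodes $v$ and re-presents the geometry for $v+1$. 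Your Phases~2 and~3 instead fuse all the rows into one $\Theta(n)$-row rectangle (``the single final product''), which is a different object: once the rows are bonded into a block they can no longer serve as independent glue-reprogramming reagents, and the carry seams between stacked rows that you identify as the main obstacle are an artifact of this fusing --- they do not arise when each row is a standalone $+1$ adder.

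The missing idea is non-determinism in the final mixing. With only $O(\log n)$ bins you cannot afford one bin per row, so the $j-i+1$ rows must all appear as distinct products of a single mixing. The paper does this by building, for each maximal complete subtree (dyadic interval), one \emph{partial row} assembly that encodes only the common prefix, plus, for every free suffix bit position, four constant-size gadgets covering the (input bit, carry bit) combinations, each in its own bin; mixing everything lets each prefix assembly complete non-deterministically with every choice of suffix gadgets, yielding exactly the rows with values in $[i,j]$ and nothing else. (The singularity condition you worry about is not violated because each of the $O(\log n)$ constructed bins has a single product; the one multi-product combination happens only in the later mixing with a macrotile, where the overall product is again unique.) Your template-column doubling is therefore neither needed nor sufficient here.
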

}

\later{
\begin{proof}
Representing integers as binary strings, consider the prefix tree induced by the binary string representations of the integers $i$ through $j$, which we denote $T_{(i, j)}$.
The prefix tree $T_{(0, 2^m-1)}$ is a complete tree of height $m$, and the prefix tree $T_{(i, j)}$ with $0 \leq i \leq j \leq 2^m-1$ is a subtree of $T_{(0, 2^m-1)}$ with $j-i+1$ leaf nodes 
See Figure~\ref{fig:conversion-algo-incrementors} for an example with $m = 4$.

Now let $n = 2^m-1$.
If $T_{(0, n)}$ is drawn with leaves in left-to-right order by increasing integer values, then the leaves of the subtree $T_{(i, j)}$ appear contiguously.
So the subtree $T_{(i, j)}$ has at most $2\log{n}$ internal nodes with one child forming the leftmost and rightmost paths in $T_{(i, j)}$.
Furthermore, if one removes these two paths from $T_{(i, j)}$, the remainder of $T_{(i, j)}$ is a forest of complete trees with at most two trees of each height and $2\log{n}$ trees total.

\begin{figure}[ht]
\centering
\includegraphics[scale=0.8]{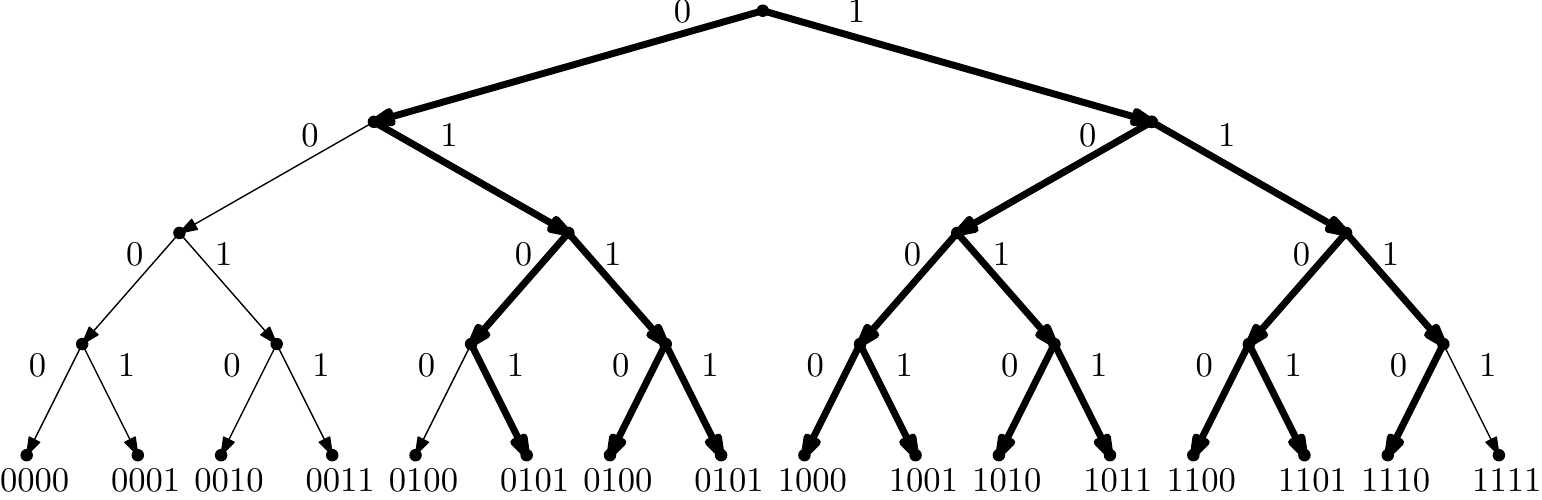}
\caption{The prefix tree $T_{(0, 15)}$ for integers $0$ to $2^4-1$ represented in binary. 
The bold subtree is the prefix subtree $T_{(5, 14)}$ for integers $5$ to $14$.}
\label{fig:conversion-algo-incrementors}
\end{figure}

Note that a complete subtree of the prefix tree corresponds to a set of all possible $2^h$ suffixes of length $h$, where $h$ is the height of the subtree.
The leaves of such a subtree then correspond to the set of strings of length $l$ with a specific prefix of length $l - h$ and any suffix of length $h$.
For the assemblies we use the same geometry-based encoding of each bit as~\cite{Demaine-2008}, and a distinct set of glues used for each bit of the assembly encoding both the bit index and carry bit value from the previous bit.
 
\textsc{Left and right bins}. 
We build a mix graph (seen in Figure~\ref{fig:efficient-counter-range}) consisting of two disjoint paths of bins (called \emph{left bins} and \emph{right bins}) that are used to iteratively assemble partial counter rows $i$ and $j$ by the addition of distinct constant-sized assemblies for each bit.
The partial rows are used to produce the assemblies in the subtree and missing bit bins (described next).
In the suffix trees, the bit strings of these assemblies are progressively longer subpaths of the leftmost and rightmost paths in the subtree of binary strings of the integers $i$ to $j$.

\textsc{Subtree bins}. 
Assemblies in subtree bins correspond to assemblies encoding prefixes of binary counter row values. 
However, unlike left and right bins that encode prefixes of only a single value, subtree bins encode prefixes of many binary counter values between $i$ the $j$ -- namely a set of values forming a maximal complete subtree of the subtree of binary strings of integers from $i$ to $j$, hence the name \emph{subtree bins}.
For example, if $i=12$ and $j=16$, then the set of binary strings for values $12$ ($01100_b$) to $15$ ($01111_b$) have a common prefix $011_b$. 
In this case a subtree bin containing an assembly encoding the three bits $011$ would be created.
Since there are at most $2\log{n}$ such complete subtrees, the number of subtree bins is at most this many.
Creating each bin only requires a single mixing step of combining an assembly from a left or right bin with a single bit assembly, for example adding a $1$-bit assembly to the left bin assembly encoding the prefix $01_b$.

\textsc{Missing bit bins}. 
To add the bits not encoded by the assemblies in the subtree bins, we create sets of four constant-sized assemblies in individual \emph{missing bit bins}.
Since the assemblies in subtree bins encode bit string prefixes of sets of values forming complete subtrees, completing these prefixes with \emph{any} suffix forms a bit string whose value is between $i$ and $j$.
This allows complete non-determinism in the bit assemblies that attach to complete the counter row, provided they properly handle carry bits.
For every bit index missing in \emph{some} subtree bin assembly, the four assemblies encoding the four possibilities for the input and carry values are assembled and placed into separate bins.
When all bins are mixed, subtree assemblies mix non-deterministically with all possible assemblies from missing bit bins, producing all counter rows whose binary strings are found in the subtree.
In total, up to $4\log{n}$ missing bit bins are created, and each contains a constant-sized assembly and so requires constant work to produce.

\begin{figure}[ht]
\centering
\includegraphics[scale=0.8]{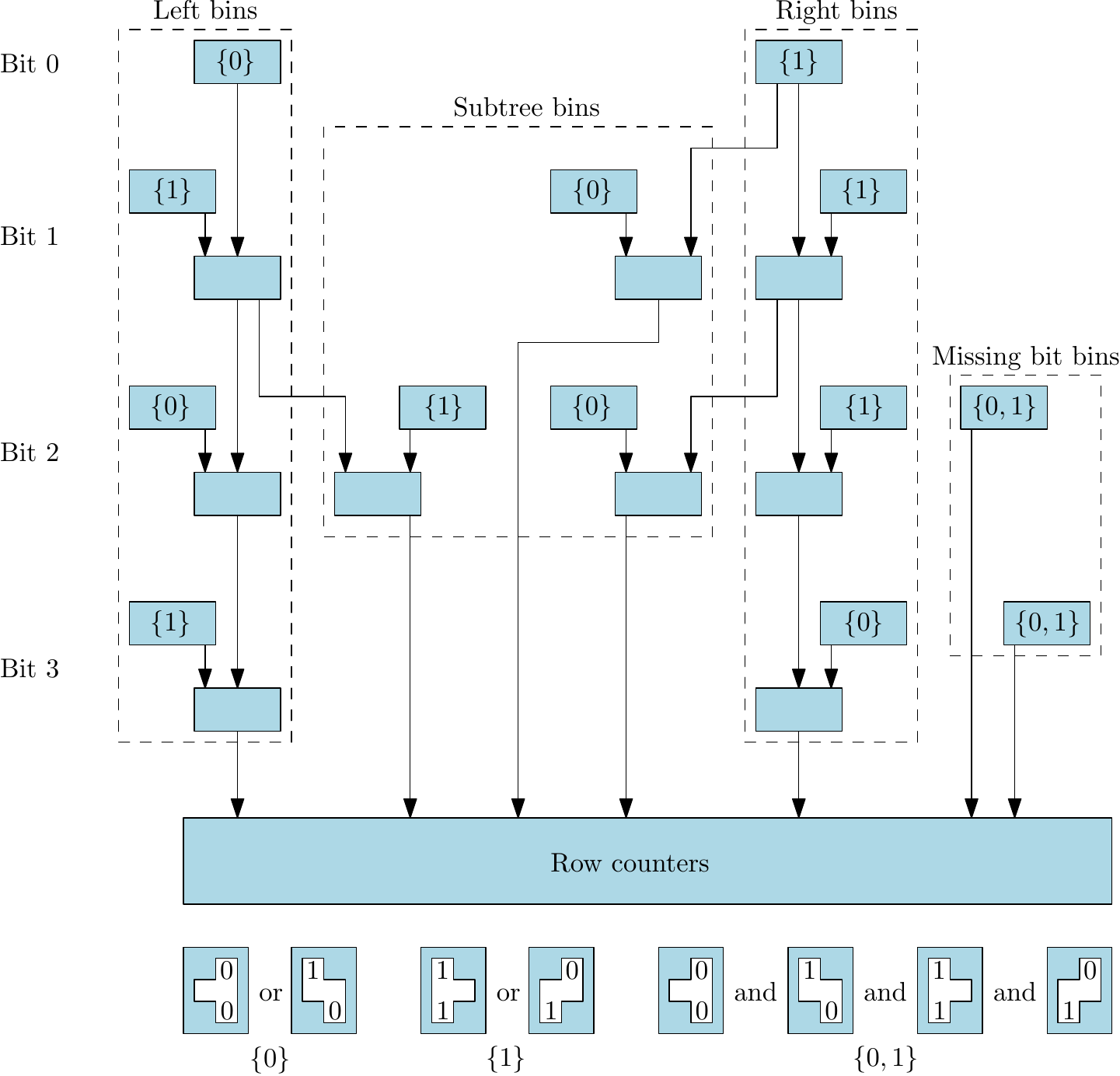}
\caption{The mix graph constructed for the prefix subtree $T_{(5, 14)}$ seen in Figure~\ref{fig:conversion-algo-incrementors}.}
\label{fig:efficient-counter-range}
\end{figure}

The total number of total bins is clearly $O(\log{n})$.
Consider mixing the left and right bins containing completed counter rows for $i$ and $j$, all subtree bins, and all missing bit bins.
Any assembly produced by the system must be a complete binary counter row, as all assemblies are either already complete rows (left and right bins) or are partial assemblies (subtree bins and missing bit bins) that can be extended towards the end of the bit string by missing bit bin assemblies, or towards the start of the bit string by missing bit and then subtree bin assemblies. 
\end{proof}
}

\iffull
The second counter generalization is incrementing by non-unitary values:
\fi

\both{
\begin{lemma}
\label{lem:efficient-incrementors-value-sas}
Let $k, n$ be integers such that $0 \leq k \leq n$ and $n = 2^m$.
There exists a SSAS of size $O(\log{n})$ with a set of bins that, when mixed, assemble a set of $2^m$ binary counter rows with values $0, 1, \dots, 2^m-1$ incremented by $k$.
\end{lemma}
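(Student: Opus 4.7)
The plan is to reuse the geometric bit-encoding scheme from Lemma~\ref{lem:efficient-incrementors-range-sas} but modify each per-bit assembly so that it implements single-bit addition of the corresponding bit of $k$ instead of addition of $1$. Since $k$ is fixed at system-construction time, its bits $k_0, k_1, \ldots, k_{m-1}$ can be hard-coded into the bit assemblies at the respective positions. At position $i$, the local computation is the elementary arithmetic $(b, c_{\mathrm{in}}) \mapsto (b', c_{\mathrm{out}})$ with $b' = b \oplus k_i \oplus c_{\mathrm{in}}$ written to the top glue and $c_{\mathrm{out}}$ propagated rightward. Because the value range is all of $[0, 2^m - 1]$, the prefix-tree machinery of Lemma~\ref{lem:efficient-incrementors-range-sas} collapses: no left/right bins and no subtree bins are required, only missing-bit-style bins at every position.

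For the construction, first build, for each $i \in \{0, 1, \ldots, m-1\}$, four constant-sized bit assemblies---one per choice of $(b, c_{\mathrm{in}}) \in \{0,1\}^2$---whose left glue encodes $(i, c_{\mathrm{in}})$, right glue encodes $(i, c_{\mathrm{out}})$, bottom glue encodes the value bit $b$ (to match the top glue of the row below), and top glue encodes $b'$ (to match the bottom glue of the row above). Each such assembly is produced by a constant number of deterministic mixings from a constant-size tile set, so the four assemblies at position $i$ occupy $O(1)$ bins. Two additional cap bins are added: a left cap that injects the initial carry $c_{\mathrm{in}}^{(0)} = 0$, and a right cap that absorbs the carry emerging from position $m-1$ (equivalently, performs the addition modulo $2^m$). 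Summing over all $m$ positions yields a mix graph of size $O(\log n)$.

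Combining all $m$ bit bins with the two cap bins in a single outermost mix then produces precisely the desired $2^m$ counter rows. Including the index $i$ in the carry glues prevents any cross-position joins, and the hardwired left-cap carry forces each maximal assembly to correspond to exactly one consistent sequence of $(b, c_{\mathrm{in}})$ choices---one per position---hence to a unique value $v \in \{0, 1, \ldots, 2^m - 1\}$ on the bottom glues with $v + k \pmod{2^m}$ on the top glues.

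The principal obstacle is not the arithmetic but the SSAS bookkeeping: every bin inside the mix graph must be singular, and, exactly as in Lemma~\ref{lem:efficient-incrementors-range-sas}, the nondeterministic step that simultaneously realizes all $2^m$ rows is treated as the final mix performed at the exit of the SSAS rather than as one of its bins. Care must also be taken that the four bit assemblies at position $i$ do not interact with those at positions $i \pm 1$ and do not prematurely combine with one another before attachment to their left neighbor; both are handled by keeping the position index $i$ inside the carry glues and by keeping the value-bit glues disjoint from the carry glues.
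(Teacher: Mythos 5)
Your proposal is correct and follows essentially the same route as the paper: hard-code the bits of $k$ into per-position constant-sized bit assemblies, build four such assemblies per bit position (one for each combination of input bit and incoming carry), distinguish positions and carry values by distinct glues, and let a single nondeterministic mixing realize all $2^m$ rows at once. The extra detail you give on carry caps and glue disjointness is consistent with (and slightly more explicit than) the paper's argument.
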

}

\later{
\begin{proof}
For each row, the incremented value of the $b$th bit of the row depends on three values: the previous value of the $b$th bit, the carry bit from the $(b-1)$st addition, and the $b$th bit of $k$.
The resulting output is a pair of bits: the resulting value of the $b$th bit and the $b$th carry bit (seen in Table~\ref{table:incrementor_bit_values}).

\begin{table}
\begin{center}
\begin{tabular}{| c | c | c | c | c |}
\hline
\multicolumn{3}{|c|}{Input bits} & \multicolumn{2}{|c|}{Output bits} \\
\hline
$b$th bit of $k$ & $b$th bit & $(b-1)$st carry bit & $b$th bit & $b$th carry bit \\
\hline
0 & 0 & 0 & 0 & 0 \\
\hline
0 & 0 & 1 & 1 & 0 \\
\hline
0 & 1 & 0 & 1 & 0 \\
\hline
0 & 1 & 1 & 0 & 1 \\
\hline
1 & 0 & 0 & 1 & 0 \\
\hline
1 & 0 & 1 & 0 & 1 \\
\hline
1 & 1 & 0 & 0 & 1 \\
\hline
1 & 1 & 1 & 1 & 1 \\
\hline
\end{tabular}
\end{center}
\caption{All bit combinations for a binary adder incrementing $n$ by $k$.}
\label{table:incrementor_bit_values}
\end{table}
 
Create a set of four $O(1)$-tile subassemblies for each bit of the counter, selecting from the first or second half of the combinations in Table~\ref{table:incrementor_bit_values}, resulting in $4\log{n}$ assemblies total.
Each subassembly handles a distinct combination of the $b$th bit value of the previous row, $(b-1)$st carry bit, and $b$th bit value of $k$ by encoding each possibility as a distinct glue.
When mixed in a single bin, these subassemblies combine in all possible combinations and producing all counter rows from $0$ to $2^m-1$. 
\end{proof}
}

\begin{lemma}
\label{lem:efficient-incrementors-sas}
Let $i,j,k,n$ be integers such that $0 \leq i \leq j < n$ and $0 \leq k \leq n$.
There exists a SSAS of size $O(\log{n})$ with a set of bins that, when mixed, assemble a set of $j-i+1$ binary counter rows with values $i, i+1, \dots, j$ incremented by $k$.
\end{lemma}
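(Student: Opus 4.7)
The plan is to combine the two previous lemmas, using the prefix-tree/subtree structure from Lemma~\ref{lem:efficient-incrementors-range-sas} for range selection and the per-bit addition gadgets from Lemma~\ref{lem:efficient-incrementors-value-sas} for the $+k$ arithmetic. In other words, I will re-run the construction that covers the range $[i,j]$ but swap out the $+1$ bit gadgets for $+k$ bit gadgets, handling the new coupling between prefixes and carry bits.

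First, I will reuse the \emph{missing bit bins}: for each bit position $b$ that appears on any suffix below a subtree node of $T_{(i,j)}$, I place in one bin the (at most four) constant-sized bit assemblies from Lemma~\ref{lem:efficient-incrementors-value-sas} for that position. These encode the three inputs (previous row bit at position $b$, $(b{-}1)$-st carry of the $+k$ addition, and the fixed $b$-th bit of $k$) and the two outputs (result bit and $b$-th carry) via glues. Because the bit of $k$ is hard-coded per position and there are only $4\log n$ such gadgets in total, this contributes $O(\log n)$ bins.

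Next, I build \emph{subtree bins} and \emph{left/right bins} exactly as in Lemma~\ref{lem:efficient-incrementors-range-sas}, but each assembly simultaneously records the fixed prefix bits of the row \emph{and} the corresponding sum-prefix bits produced by $+k$. The single bookkeeping subtlety is that the sum-prefix bits depend on the carry-in from the suffix, which is not known at the time the prefix is constructed. I handle this by duplicating each subtree- and path-bin assembly into two versions, one with carry-in $0$ and one with carry-in $1$ at its bottommost position, glued accordingly so that only the matching suffix can attach. This is a factor-$2$ blowup, leaving $O(\log n)$ bins overall, and each bin is still produced by one $O(1)$-work mixing that extends an existing path assembly by a single bit gadget or concatenates a path-assembly with a bit gadget.

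When all of these bins are mixed together, glue matching on the carry output of the suffix forces each suffix completion to pick the correct carry-in version of its prefix partner, so every product is a full counter row for some value in $[i,j]$ incremented by $k$, and conversely every such row arises. Singularity holds because each intermediate mixing has a forced unique product (path extension or carry-annotated concatenation), and the final mixing yields the set of target rows.

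The main obstacle is precisely the carry coupling between prefix and suffix: unlike the $+1$ case, where the prefix bits of the sum are essentially determined by the prefix bits of the input together with a locally visible carry, the $+k$ case forces the prefix to commit to a sum-value before the suffix chooses its carry-out. Maintaining two carry-annotated copies of every prefix assembly, and using distinct glues to enforce the correct carry-in/carry-out match at the prefix/suffix boundary, is the key device that keeps the construction both correct and of size $O(\log n)$.
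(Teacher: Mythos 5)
Your proof takes essentially the same approach as the paper, whose entire argument is a single sentence instructing the reader to combine the range construction of Lemma~\ref{lem:efficient-incrementors-range-sas} with the per-bit $+k$ gadgets of Lemma~\ref{lem:efficient-incrementors-value-sas}; you carry out exactly that combination. In fact you go further than the paper by explicitly handling the carry coupling at the prefix/suffix boundary (via the two carry-annotated copies of each subtree and path assembly), a subtlety the paper's proof leaves implicit, and your fix preserves both correctness and the $O(\log n)$ size bound.
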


\begin{proof}
Combine the constructions used in the proofs of Lemmas~\ref{lem:efficient-incrementors-range-sas} and~\ref{lem:efficient-incrementors-value-sas} by using mixing sequences as in the proof of Lemma~\ref{lem:efficient-incrementors-range-sas} and sets of four subassemblies encoding input, carry, and increment bit values as in the proof of Lemma~\ref{lem:efficient-incrementors-value-sas}.
\end{proof}

Theorem 8 of Demaine et al.~\cite{Demaine-2008} describes how to reduce the number of glues used in a system by replacing each tile with a large \emph{macrotile} assembly, and encoding the tile's glues via unique geometry on the macrotile's sides.
We prove a similar result for labeled tiles, used for proving Theorems~\ref{thm:sas-over-pcfg-upper-bound},~\ref{thm:sas-over-pcfg-upper-bound-constant-glues}, and~\ref{thm:constant-glues}.

\both{
\begin{lemma}
\label{lem:constant-glues}
Any mismatch-free $\tau = 1$ SAS (or SSAS) $\mathcal{S} = (T, G, \tau, M)$ can be simulated by a SAS (or SSAS) $\mathcal{S}'$ at $\tau=1$ with $O(1)$ glues, system size $O(\Sigma(T)|T| + |\mathcal{S}|)$, and $O(\log{|G|})$ scale.
\end{lemma}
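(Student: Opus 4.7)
The plan is to replace each tile $t \in T$ with an $\ell \times \ell$ macrotile of scale $\ell = \Theta(\log|G|)$, and to encode each glue label of $\mathcal{S}$ as a pattern of bumps and dents along the corresponding side of that macrotile, in the spirit of Theorem~8 of Demaine et al. Because two macrotiles can attach along a side only if their bumps and dents fit together (otherwise they overlap on the lattice), this geometric encoding lets $\mathcal{S}'$ get by with $O(1)$ glue types used purely to bond within macrotiles and to seal fitted macrotile sides together. Concretely, I would assign each glue of $G$ a distinct $\Theta(\log|G|)$-bit codeword and place a symmetrized bump/dent transcription of that codeword on the relevant side, chosen so that two sides physically fit if and only if they encode the same glue.

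Next I would show how to build the macrotile set efficiently. For each of the $O(|G|)$ distinct glue-carrying side pieces, I would invoke the counter-building techniques of Lemmas~\ref{lem:efficient-incrementors-range-sas}--\ref{lem:efficient-incrementors-sas} to assemble the $\Theta(\log|G|)$-bit pattern in $O(\log|G|)$ bins, amortizing common suffixes across codewords so that producing \emph{all} side pieces costs $O(|G|) = O(|T|)$ bins in total (since $|G|\le 4|T|$). With the sides available, each macrotile is completed by mixing its four sides with an interior ``label block'' and a $O(1)$-tile frame, adding $O(1)$ bins per tile. The interior label blocks are the source of the $|\Sigma(T)|\cdot|T|$ term: each of the $|T|$ tiles needs a distinguishable interior whose $f$-image is its label from $\Sigma(T)$, and building these templates once per (tile, label) pair contributes $O(|\Sigma(T)|\,|T|)$ bins. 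Finally, I would take the mix graph $M$ of $\mathcal{S}$ and splice it on top of the macrotile-producing subgraph: each leaf bin of $M$ is rerouted to the bin that outputs the corresponding macrotile, contributing $O(|\mathcal{S}|)$ additional edges to $M'$, for total size $O(|\Sigma(T)|\,|T| + |\mathcal{S}|)$.

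For the simulation witnesses, the function $f$ maps any $\ell\times\ell$ block of labels in $\mathcal{S}'$ matching a macrotile interior template to the label of the tile it encodes (and every other block, including $\varnothing$-blocks, to $\varnothing$), and $g$ maps each non-leaf vertex of $M'$ that descends from $M$ back to its image in $M$, which is an isomorphism by construction. The main obstacle will be verifying the third simulation property, namely that for every bin the produced macrotile assemblies are in exact bijection with the tile assemblies of the original bin. The ``only if'' direction is immediate: $\mathcal{S}$ being mismatch-free means every adjacency in a product uses matching glues, so the corresponding macrotile sides fit geometrically and can be sealed by the $O(1)$-glue frame. The ``if'' direction is the delicate part, because at $\tau=1$ a single bond suffices to attach; I would argue that the symmetrized codewords and rigid bump/dent geometry make any attempted attachment between macrotiles whose codes differ physically impossible on the lattice, so no spurious superassemblies appear, and $\mathcal{S}'$ inherits mismatch-freeness and singularity (in the SSAS case) from $\mathcal{S}$.
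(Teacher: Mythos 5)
Your proposal matches the paper's proof in all essentials: geometric bump/dent encoding of each glue as a $\Theta(\log|G|)$-bit pattern on macrotile sides, $O(1)$ glues (an inner glue to assemble the macrotile and an outer glue to seal matched sides), a macrotile-producing subgraph spliced beneath a copy of $M$, and the same size accounting yielding $O(|\Sigma(T)|\,|T| + |\mathcal{S}|)$. The only minor divergence is that the paper rebuilds the macroglue side assemblies once per label in $\Sigma(T)$ so that \emph{every} tile of a macrotile carries the simulated tile's label (a property it needs later for Theorem~\ref{thm:constant-glues}), whereas you share the side pieces across labels and confine the label to the interior block; this is fine for the lemma as stated.
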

}

\later{
\begin{proof}
The proof is constructive.
Produce a set of north \emph{macroglue assemblies} for the glue set: $O(\log{|G|}) \times O(1)$ assemblies, each encoding the integer label of a glue $i$ via a sequence of bumps and dents along the north side of the assembly representing the binary sequence of bits for $i$, as seen in Figure~\ref{fig:glue-bit-encoding}.
All north macroglue assemblies share a pair of common glues: an \emph{inner glue} on the west end of the south side of the assembly (green in Figure~\ref{fig:glue-bit-encoding}) and an \emph{outer glue} on the west end of the north side of the assembly (blue in Figure~\ref{fig:glue-bit-encoding}).
The null glue also has the sequence of bumps and dents (encoding 0), but lacking the outer glue. 
Repeating this process three more times yields sets of east, west, and south macroglue assemblies.

\begin{figure}[ht!]
\centering
\includegraphics[scale=1.0]{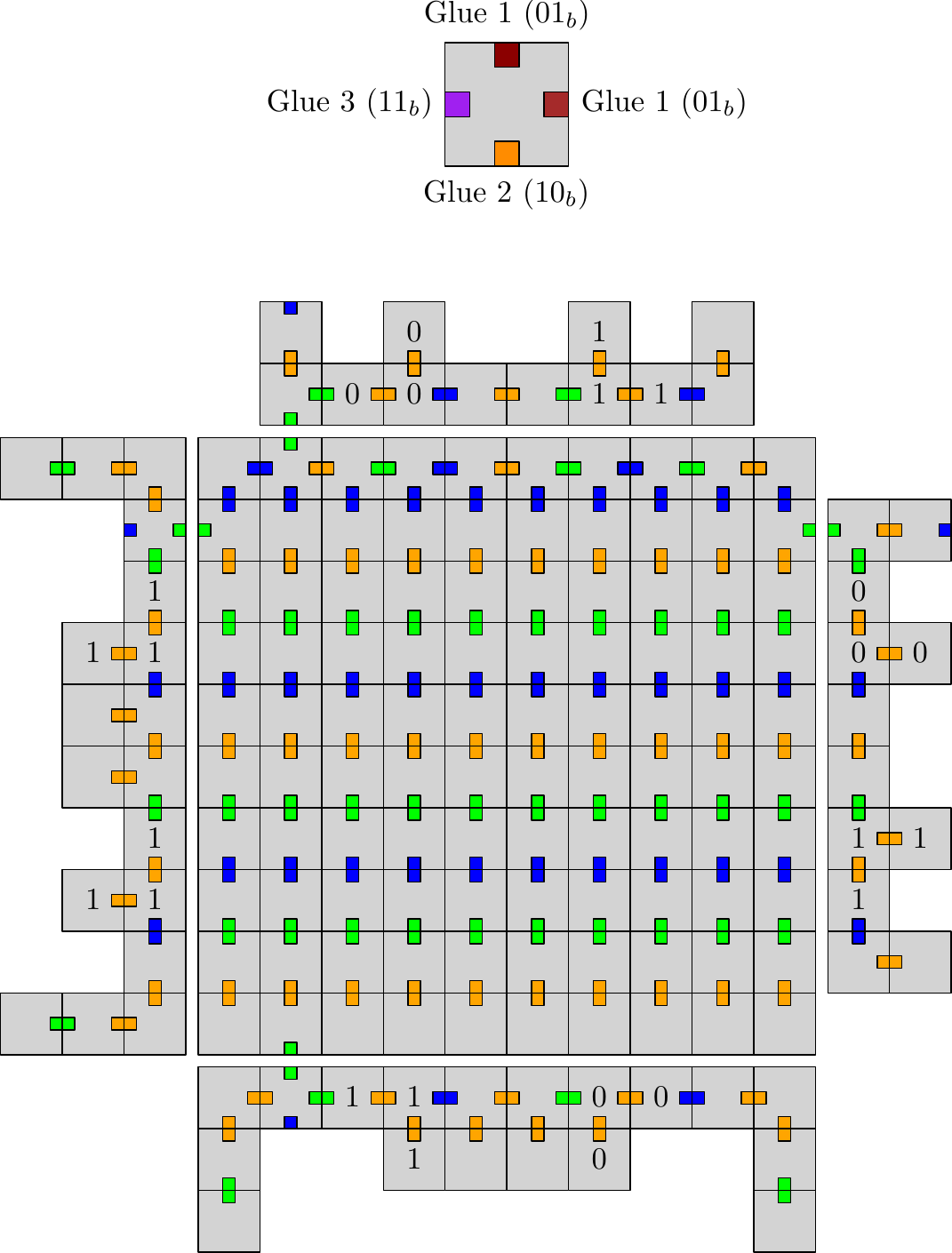}
\caption{Converting a tile in a system with 7 glues to a macrotile with $O(\log{|G|})$ scale and 3 glues.
The gray label of the tile is used as a label for all tiles in the core and macroglue assemblies, with the 1 and 0 markings for illustration of the glue bit encoding.}
\label{fig:glue-bit-encoding}
\end{figure}

For each label $l \in \Sigma(T)$, repeat the process of producing the macroglue assemblies once using a tile set exclusively labeled $l$.
Also produce a square $\Theta(\log{|G|}) \times \Theta(\log{|G|})$ \emph{core assembly}, with a single copy of the inner glue on the counterclockwise end of each face.
Use the macroglue and core assemblies to produce a set of \emph{macrotiles}, one for each $t \in T$, consisting of a core assembly whose tiles have the label of $t$, and four glue assemblies encode the four glues of $t$ and whose tiles have the label of $t$.
Extend the mix graph $M'$ of $\mathcal{S}'$ by carrying out the mixings of $M$ but starting with the equivalent macrotiles.
Define the simulation function $f$ to map each macrotile to the label found on the macrotile, and the function $g$ to take the portion of $M'$ and $g$ to be the portion of the mix graph carrying out the mixings of $\mathcal{S}$.
 
The work done to produce the glue assemblies is $O(\Sigma(T)|G|)$, to produce the core assemblies is $O(\Sigma(T)\log\log{|G|})$, and to produce the macrotiles is $O(|T|)$.
Carrying out the mixings of $\mathcal{S}$ requires $O(|\mathcal{S}|)$ work.
Since each macrotile is used in at least one mixing simulating a mixing in $\mathcal{S}$, $|T| \leq |\mathcal{S}|$.
Additionally, $|G| \leq 4|T|$.
So the total system size is $O(\Sigma(T)|G| + \Sigma(T)\log\log{|G|} + |T| + |\mathcal{S}|) = O(\Sigma(T)|T| + |\mathcal{S}|)$. 
\end{proof}
}


Armed with these tools, we are ready to convert PCFGs into SSASs.
Recall that in Section~\ref{sec:pcfgs-slightly-better-sas-ssas} we showed that in the worst case, converting a PCFG into a SSAS (or SAS) \emph{must} incur an $\Omega(\log{n}/\log\log{n})$-factor increase in system size.
Here we achieve a $O(\log{n})$-factor increase.

\begin{theorem}
\label{thm:sas-over-pcfg-upper-bound}
For any polyomino $P$ with $|P| = n$ derived by a PCFG $G$, there exists a SSAS $\mathcal{S}$ with $|\mathcal{S}| = O(|G|\log{n})$ producing an assembly with label polyomino $P'$, where $P'$ is a $(O(\log{n}), O(n))$-fuzzy replica of $P$. 
\end{theorem}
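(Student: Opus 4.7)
The plan is to build the SSAS by direct structural recursion on the PCFG $G$. For every non-terminal $N \in \Gamma$ I will create one ``rule bin'' $B_N$ whose unique product is a scaled assembly $A_N$ realizing the subpolyomino derived by $N$, where every original cell is represented by a supercell of side $d = \Theta(n)$ containing a centered $c \times c$ label block ($c = \Theta(\log n)$) of the correct color and the common fuzz label everywhere else. Because $G$ is in binary normal form, each $B_N$ needs only to take the products of two previously built bins and combine them at a prescribed offset; the whole difficulty is concentrated in making that combination \emph{unique} at the right vector while spending only $O(\log n)$ extra bins per rule.

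For the base case I handle each terminal symbol $\sigma \in \Sigma$ by assembling one $d \times d$ supercell macrotile whose central block is labeled $\sigma$ and whose perimeter is fuzz. A square of side $\Theta(n)$ together with a constant-sized colored core can be built in $O(\log n)$ bins using the binary-counter-row construction of Lemma~\ref{lem:efficient-incrementors-sas} (building the rows of the supercell by doubling, then fixing the exact width with a chain of single-bit increments). Since $|\Sigma| \le |G|$, the terminal bins contribute $O(|G|\log n)$ to the total system size.

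For the inductive step, each rule $N \to (R_1,(x_1,y_1))(R_2,(x_2,y_2))$ becomes a small sub-mixgraph that glues the products $A_{R_1}$ and $A_{R_2}$ at the relative offset $(x_2 - x_1, y_2 - y_1)$, scaled by $d$. To force singularity I give each non-terminal a distinct ``docking'' glue family on the outside of its bounding frame and, for each rule, build a \emph{connector strip} that spans the interface between the two sub-assemblies and carries, in binary along its sides, the coordinates of the offset. Producing such a strip of length $O(n)$ with a prescribed $O(\log n)$-bit address is exactly the service Lemma~\ref{lem:efficient-incrementors-sas} provides in $O(\log n)$ bins: the counter row doubles as the key, and the fuzz region of a supercell is wide enough to house this offset-specific geometry without colliding with a neighboring label block. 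The connector attaches to $A_{R_1}$ via $R_1$'s docking glues and then admits exactly one alignment with $A_{R_2}$'s docking glues, yielding the unique superassembly $A_N$. This costs $O(\log n)$ bins per rule, hence $O(|G|\log n)$ bins over all of $G$.

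The main obstacle is proving singularity: I must verify that no spurious superassembly can form between two sub-products (either inside one bin or across bins that share docking glues), and that the connector admits only one docking position. Per-non-terminal glue families plus the binary address on the connector handle the first point; the $\Theta(\log n)$-bit encoding makes the second a routine uniqueness argument for the binary counter match. Once singularity is established, verifying the four properties of a $(O(\log n),O(n))$-fuzzy replica is straightforward because supercells of disjoint subpolyominoes lie on disjoint $d$-aligned blocks and the only non-fuzz tiles inside a supercell are the central label block placed by the corresponding terminal's base-case assembly. Summing the base-case, rule, and connector costs gives the claimed $|\mathcal{S}| = O(|G|\log n)$.
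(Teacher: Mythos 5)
Your high-level skeleton matches the paper's: structural recursion over the rules of $G$, a budget of $O(\log n)$ mixings per rule drawing on the counter machinery of Lemma~\ref{lem:efficient-incrementors-sas}, supercells of side $\Theta(n)$ with a small centered label block, and a connector/activator that forces the two sub-assemblies to bond at exactly one offset. However, there is a genuine gap at the point you yourself flag as ``the main obstacle,'' and your proposed fix does not work. You assign each non-terminal a fixed, distinct \emph{docking glue family} on its bounding frame. First, a non-terminal can be reused many times within a single derived polyomino (this reuse is the entire point of grammar compression), so after $A_{R_1}$ is embedded into $A_N$, the glues of $R_1$'s family appear at several translated positions on and inside $A_N$; any later connector that targets ``$R_1$'s docking glues'' (or any glues inherited from descendants) then has multiple valid attachment sites, and singularity fails. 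Second, the alternative --- giving $A_N$ a \emph{fresh} family after combination --- requires relabeling a boundary of $\Theta(|p_N|\cdot d)$ cells, which cannot be done tile-by-tile within the $O(\log n)$ budget, and you give no mechanism for it.

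This is precisely the problem the paper's proof is built to solve, and it is the idea your proposal is missing: macroglues whose \emph{identities are mutable}, changed in bulk by mixing the assembly with a set of increment-by-$k$ counter rows (Lemma~\ref{lem:efficient-incrementors-sas}) that attach nondeterministically to every macroglue in a given identity range and overwrite its geometry, all for $O(\log n)$ work. The paper maintains the invariant that within any assembly the macroglue identities are pairwise distinct and form a contiguous interval; before each combination the two intervals are shifted to be disjoint and to meet exactly at the two glues chosen on a pair of adjacent cells $c_{R_1}, c_{R_2}$, and a \emph{cage} around each reservoir macroglue guarantees that only the two explicitly \emph{activated} macroglues can ever bond. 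Global distinctness of identities is what makes the superassembly unique, and the reservoir/cage/activator structure is what makes each bin's product single (no leftover counter rows). Without some equivalent of this re-numbering invariant, your connector-strip argument cannot establish either uniqueness of the docking position or the SSAS single-product property, so the proof as proposed does not go through.
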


\begin{proof}
We combine the macrotile construction of Lemma~\ref{lem:constant-glues}, the generalized counters of Lemma~\ref{lem:efficient-incrementors-sas}, and a macrotile assembly invariant that together enable efficient simulation of each production rule in a PCFG by a set of $O(\log{n})$ mixing steps.

\textsc{Macrotiles}. The macrotiles used are extended versions of the macrotiles in Lemma~\ref{lem:constant-glues} with two modifications: a secondary, \emph{resevoir macroglue} assembly on each side of the tile in addition to a primary \emph{bonding macroglue}, and a thin \emph{cage} of dimensions $\Theta(n) \times \Theta(\log{n})$ surrounding each resevoir macroglue (see Figure~\ref{fig:conversion-macrotile}).

\begin{figure}[ht]
\centering
\includegraphics[scale=1.0]{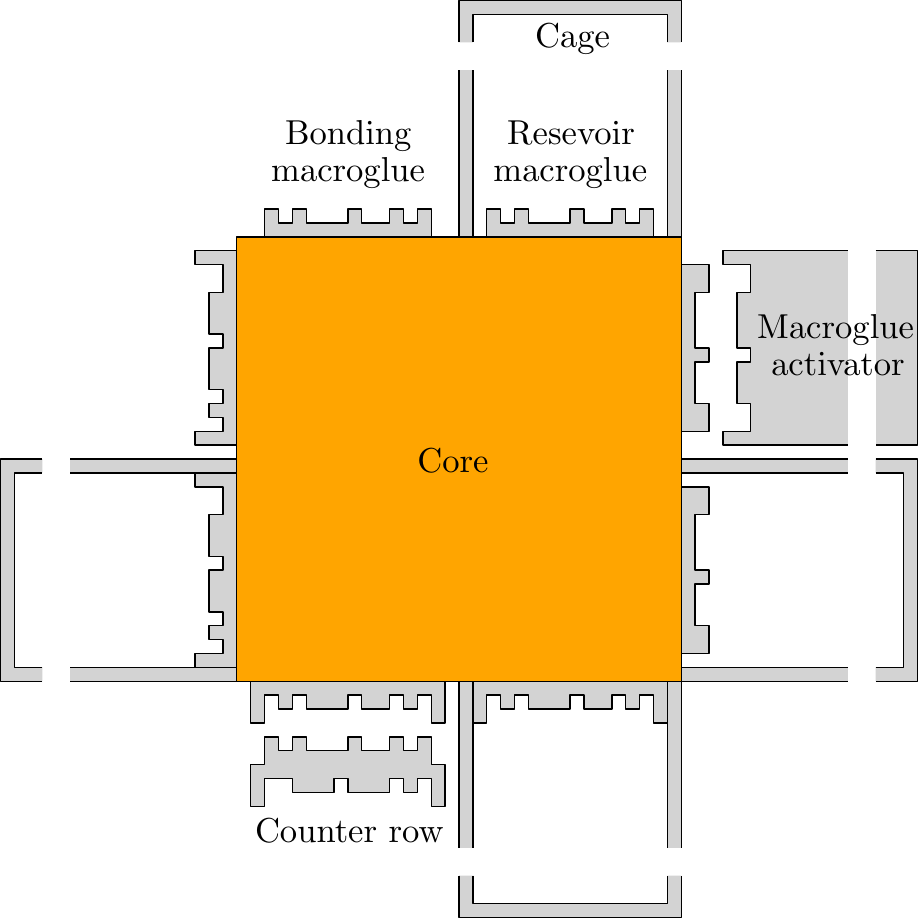}
\caption{A macrotile used in converting a PCFG to a SAS, and examples of value maintenance and offset preparation.}
\label{fig:conversion-macrotile}
\end{figure}
 
Mixing a macrotile with a set of bins containing counter row assemblies constructed by Lemma~\ref{lem:efficient-incrementors-sas} causes completed (and incomplete) counter rows to attach to the macrotile's macroglues.
Because each macroglue's geometry matches the geometry of exactly one counter row, a partially completed counter row that attaches can only be completed with bit assemblies that match the macroglue's value.
As a result, mixing the bin sets of Lemma~\ref{lem:efficient-incrementors-sas} with an assembly consisting of macrotiles produces the same set of products as mixing a completed set of binary counter rows with the assembly.

An attached counter row effectively causes the macroglue's value to change, as it presents geometry encoding a new value and covers the macroglue's previous value. 
The cage is constructed to have height sufficient to accomodate up to $n$ counter rows attached to the reservoir macroglue, but no more.

Because of the cage, no two macrotiles can attach by their bonding macroglues unless the macroglue has more than $n$ counter rows attached.
Alternatively, one can produce a thickened counter row with thickness sufficient to extend beyond the cage.
We call such an assembly a \emph{macroglue activator}, as it ``activates'' a bonding macroglue to being able to attach to another promoted macroglue on another macrotile.
Notice that a macroglue activator will never attach to a bonding macroglue's resevoir twin, as the cage is too small to contain the activator.

\textsc{An invariant}. Counter rows and activators allow precise control of two properties of a macrotile: the identities of the macroglues on each side, and whether these glues are activated.
In a large assembly containing many macroglues, the ability to change and activate glues allows precise encoding of how an assembly can attach to others.
In the remainder of the construction we maintain the invariant that every macrotile has the same glue identity on all four sides, and any macrotile assembly consists of macrotiles with glue identities forming a contiguous interval, e.g. $4$, $5$, $6$, $7$.
Intervals are denoted $[i, i']$, e.g. $[g_4, g_7]$.

By Lemma~\ref{lem:efficient-incrementors-sas}, a set of row counters incrementing the glue identities of \emph{all} glues on a macrotile can be produced using $O(\log{n})$ work.
Activators, by virtue of being nearly rectangular with $O(\log{n})$ cells of bit geometry can also be produced using $O(\log{n})$ work.

\textsc{Production rule simulation}. Consider a PCFG with non-terminal $N$ and production rule $N \rightarrow (R_1, (x_1, y_1)) (R_2, (x_2, y_2))$ and a SSAS with two bins containing assemblies $A_1$, $A_2$ with the label polyominoes of $A_1$ and $A_2$ being fuzzy replicas of the polyominoes derived by $R_1$ and $R_2$.
Also assume $A_1$ and $A_2$ are assembled from the macrotiles just described, including the invariant that the identities of the glues on $A_1$ and $A_2$ are identical on all sides of a macrotile and contiguous across the assembly, i.e. the identities of the glues are $[i_1, j_1]$ and $[i_2, j_2]$ on assemblies $A_1$ and $A_2$, respectively. 

Select two cells $c_{R_1}$, $c_{R_2}$, in the polyominoes derived by $R_1$ and $R_2$ adjacent in polyomino derived by $N$.
Define the glue identities of the two macrotiles forming the supercells mapped to $c_{R_1}$ and $c_{R_2}$ to be $g_1$ and $g_2$.
Then the glue sets on $A_1$ and $A_2$ can be decomposed into three subsets $[i_1, g_1-1]$, $[g_1]$, $[g_1+1, j_1]$ and $[i_2, g_2-1]$, $[g_2]$, $[g_2+1, j_2]$, respectively.
We change these glue values in three steps:

\begin{enumerate}
\item Construct two sets of row counters that increment $i_1$ through $g_1$ by $j_1 - i_1 + 1$ and $i_2$ through $g_2$ by $g_2 - i_2 + 1$, and mix them in separate bins with $A_1$ and $A_2$ to produce two new assemblies $A_1'$ and $A_2'$.
Assemblies $A_1'$ and $A_2'$ have glues $[g_1 + 1, g_1 + j_1 - i_1 + 1]$ and $[g_2, g_2 + j_2 - i_2]$, respectively, and the macroglues with values $g_1$ and $g_2$ now have values $g_1' = g_1 + (g_1 - i_1) + j_1 + 1$ and $g_2' = g_2$, i.e. the glues of $A_1'$ and $A_2'$ are $[g_1' - (j_1 - i_1), g_1']$ and $[g_2', g_2' + j_2 - i_2]$.
\item Construct a set of row counters that increment the values of all glues on $A_2'$ by $g_2' - g_1' + 1$ if this value is positive, and mix the counters with $A_2'$ to produce $A_2''$.
Then the macroglue with value $g_2'$ now has value $g_2'' = g_1' + 1$ and the glue values of $A_1'$ and $A_2''$ are $[g_1' - (j_1 - i_1), g_1']$ and $[g_2'', g_2'' + j_2 - i_2]$.
\item Construct a pair of macroglue activators with values $g_1'$ and $g_2''$ that attach to the pair of macroglue sides matching the two adjacent sides of cells $c_{R_1}$ and $c_{R_2}$. 
Mix each activator with the corresponding assembly $A_1'$ or $A_2''$.
\end{enumerate}

Mixing $A_1'$ and $A_2''$ with the pair of activated macroglues causes them to bond in exactly one way to form a superassembly $A_3$ whose label polyomino is a fuzzy replica of the polyomino derived by $N$.
Moreover, the glue values of the macrotiles in $A_3$ are $[g_1' - (j_1 - i_1), g_2'' + j_2 - i_2]$, maintaining the invariant.
Because each macrotile has a resevoir macroglue on each side, any bonding macroglue with an activator already attached has a resevoir macroglue that accepts the matching row counter, so each mixing has a single product and specifically no row counter products.

\textsc{System scale} The PCFG $P$ contains at most $n$ production rules.
Also, each step shifts glue identities by at most $n$ (the number of distinct glues on the macrotile), so the largest glue identity on the final macrotile assembly is $n^2$.
So we produce macrotiles with core assemblies of size $O(\log{n}) \times O(\log{n})$ and cages of size $O(n)$.
Assembling the core assemblies, cages, and initial macroglue assemblies of the macrotiles takes $O(|P|\log{n} + \log{n} + \log{n}) = O(|P|\log{n})$ work, dominated by the core assembly production.
Simulating each production rule of the grammar takes $O(\log{n})$ work spread across a constant number of $O(\log{n})$-sized sequences of mixings to produce sets of row counters and macroglue activators.

\end{proof} 

Applying Lemma~\ref{lem:constant-glues} to the construction (creating macrotiles of macrotiles) gives a constant-glue version of Theorem~\ref{thm:sas-over-pcfg-upper-bound}:

\both{
\begin{theorem}
\label{thm:sas-over-pcfg-upper-bound-constant-glues}
For any polyomino $P$ with $|P| = n$ derived by a PCFG $G$, there exists a SSAS $\mathcal{S'}$ using $O(1)$ glues with $|\mathcal{S'}| = O(|G|\log{n})$ producing an assembly with label polyomino $P'$, where $P'$ is a $(O(\log{n}\log\log{n}), O(n\log\log{n}))$-fuzzy replica of $P$. 
\end{theorem}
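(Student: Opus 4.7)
The plan is to cascade the two constructions: feed the SSAS produced by Theorem~\ref{thm:sas-over-pcfg-upper-bound} into Lemma~\ref{lem:constant-glues}, building macrotiles of macrotiles. Let $\mathcal{S}$ denote the inner SSAS guaranteed by Theorem~\ref{thm:sas-over-pcfg-upper-bound}. Before invoking the lemma I would verify its hypotheses for $\mathcal{S}$: it operates at $\tau = 1$ and is mismatch-free by design, and critically it uses only $O(\log n)$ distinct glue types, since the binary counter rows of Section~\ref{sec:pcfgs-never-much-better-sas} use $O(1)$ position-and-carry glues per bit across $O(\log n)$ bit positions, and the macroglue framework (inner/outer bonding glues, activators, reservoirs, cages) contributes only $O(1)$ more glue families.

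Next I would apply Lemma~\ref{lem:constant-glues} to $\mathcal{S}$, producing an SSAS $\mathcal{S}'$ that simulates $\mathcal{S}$ with $O(1)$ glues at scale $b = O(\log|G_{\mathcal{S}}|) = O(\log\log n)$ and size $O(\Sigma(T_{\mathcal{S}})|T_{\mathcal{S}}| + |\mathcal{S}|)$. Bounding this by $O(|G|\log n)$ uses $|T_{\mathcal{S}}| \leq |\mathcal{S}| = O(|G|\log n)$ together with a template-reuse argument for the macroglue assemblies: one family of macroglue templates is built once and shared across labels, with only each macrotile's $\Theta(\log\log n) \times \Theta(\log\log n)$ core carrying the label of the simulated inner tile, so the $\Sigma(T_{\mathcal{S}})$ factor collapses into the $|\mathcal{S}|$ term.

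The fuzzy-replica parameters then follow by composing scales. The construction of Lemma~\ref{lem:constant-glues} assigns every subcell of a macrotile the label of its simulated tile, so the label polyomino $P''$ produced by $\mathcal{S}'$ is exactly $P'$ (the output of $\mathcal{S}$) with each cell inflated to a $b \times b$ block. Scaling the $(O(\log n), O(n))$-fuzzy replica $P'$ uniformly by $b = O(\log\log n)$ yields label regions of side $O(\log n \log\log n)$ inside supercells of side $O(n \log\log n)$, with the common fuzz label preserved in all non-label cells---exactly the claimed $(O(\log n \log\log n), O(n \log\log n))$-fuzzy replica of $P$.

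The main obstacle I expect is the label-alphabet bookkeeping. A naive application of Lemma~\ref{lem:constant-glues} pays $\Sigma(T_{\mathcal{S}})|T_{\mathcal{S}}|$ to produce one full set of macroglue templates per label, which could overshoot $O(|G|\log n)$ when $P$ has many distinct labels. Resolving this requires either the template-sharing refinement indicated above or a tighter accounting that charges macroglue work to the mixings that consume it rather than to each label produced. The remaining ingredients---mismatch-freeness of $\mathcal{S}'$, the simulation correspondence $f, g$, and the scale-$b$ expansion of the fuzzy-replica parameters---follow immediately from the simulation conditions already stated in Section~\ref{sec:sas-defns} and Lemma~\ref{lem:constant-glues}.
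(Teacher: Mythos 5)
Your proposal is correct and matches the paper's proof: both apply Lemma~\ref{lem:constant-glues} to the Theorem~\ref{thm:sas-over-pcfg-upper-bound} system (macrotiles of macrotiles), note it uses only $O(\log n)$ glues so the scale is $O(\log\log n)$, and compose the scales to get the $(O(\log n\log\log n), O(n\log\log n))$-fuzzy replica. The label-bookkeeping obstacle you flag is resolved in the paper exactly as you suggest: all tiles outside the core assemblies already carry a single common fuzz label, so the $\Sigma(T)$ factor is $O(1)$ for them, and the labeled cores are built directly at scale with no asymptotic overhead.
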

}

\later{
\begin{proof}
The construction of Theorem~\ref{thm:sas-over-pcfg-upper-bound} uses $O(\log{n})$ glues, namely for the counter row subconstruction of Lemma~\ref{lem:efficient-incrementors-sas}.
With the exception of the core assemblies, all tiles of $S$ have a common fuzz (gray) label, so creating macrotile versions of these tiles and carrying out all mixings involving these macrotiles and \emph{completed} core assemblies is possible with $O(1 \cdot |T| + |\mathcal{S}|) = O(|\mathcal{S}|)$ mixings and scale $O(\log\log{n})$.
Scaled core assemblies of size $\Theta(n\log\log{n}) \times \Theta(n\log\log{n})$ can be constructed using constant glues and $O(\log(n\log\log{n})) = O(\log{n})$ mixings, the same number of mixings as the unscaled $\Theta(n) \times \Theta(n)$ core assemblies of Theorem~\ref{thm:sas-over-pcfg-upper-bound}.
So in total, this modified construction has system size $O(|S|) = O(|G|\log{n})$ and scale $O(\log\log{n})$.
Thus it produces an assembly with label polyomino that is a $(O(\log{n}\log\log{n}), O(n\log\log{n}))$-fuzzy replica of $P$.
\end{proof}
}

The results in this section and Section~\ref{sec:pcfgs-slightly-better-sas-ssas} achieve a ``one-sided'' correspondence between the smallest PCFG and SSAS encoding a polyomino, i.e. the smallest PCFG is approximately an \emph{upper bound} for the smallest SSAS (or SAS).
Since the separation upper bound proof (Theorem~\ref{thm:sas-over-pcfg-upper-bound}) is constructive, the bound also yields an algorithm for converting a a PCFG into a SSAS.

\section{PCFG over SAS and SSAS Separation Lower Bound}
\label{sec:sass-much-better-than-pcfgs}

\ifabstract
\later{\section{PCFG over SAS and SSAS Separation Lower Bound Details}}
\fi

Here we develop a sequence of PCFGs over SAS and SSAS separation results, all within a polylogarithmic factor of optimal.
The results also hold for polynomially scaled versions of the polyominoes, which is used to prove Theorem~\ref{thm:constant-glues} at the end of the section.
This scale invariance also surpasses the scaling of the fuzzy replicas in Theorems~\ref{thm:sas-over-pcfg-upper-bound} and~\ref{thm:sas-over-pcfg-upper-bound-constant-glues}, implying that this relaxation of the problem statement in these theorems was not unfair.

\begin{figure}[ht]
\centering
\includegraphics[scale=1.0]{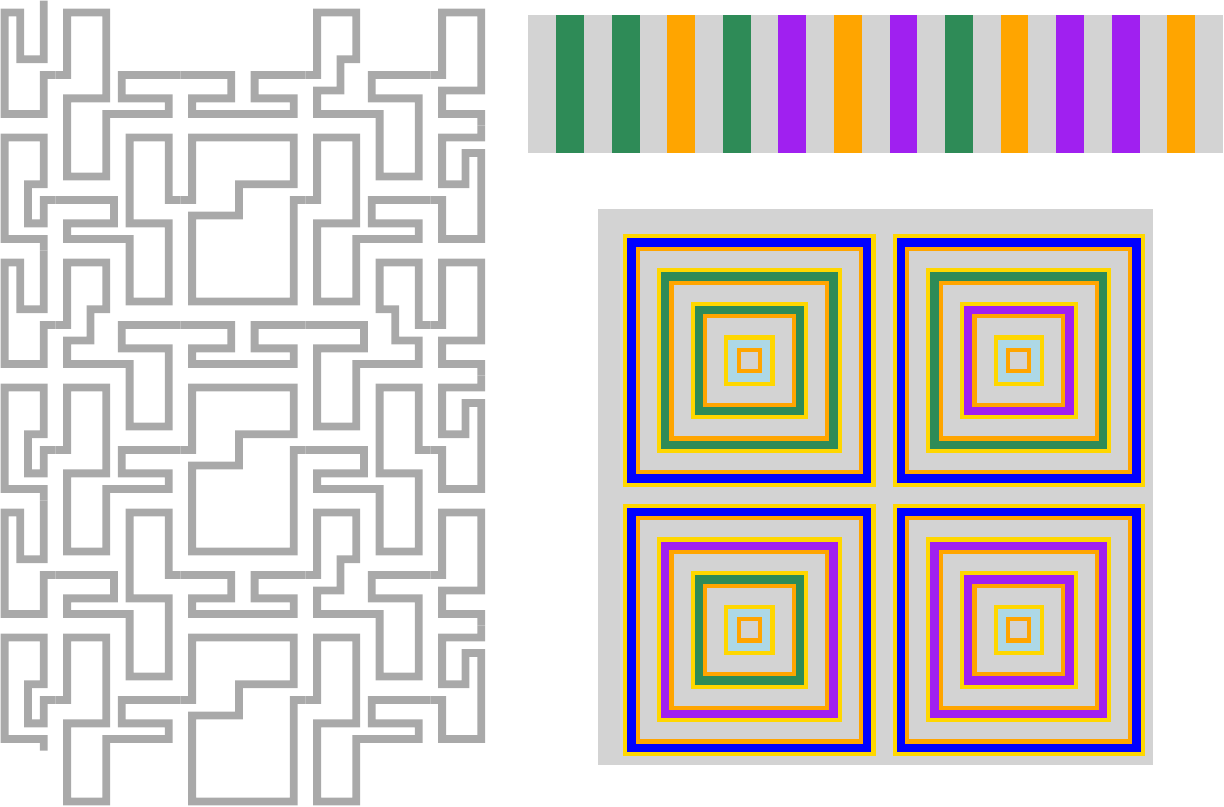}
\caption{Two-bit examples of the weak (left), end-to-end (upper right), and block (lower right) binary counters used to achieve separation of PCFGs over SASs and SSASs in Section~\ref{sec:sass-much-better-than-pcfgs}.}
\label{fig:taste}
\end{figure}

\subsection{General shapes}
\label{sec:sas-much-better-pcfgs-gen}

\ifabstract
\later{\subsection{General shapes}}
\fi

\ifabstract
We show that the separation of PCFGs over SASs and SSASs is $\Omega(n/\log{n})$ using a \emph{weak binary counter}, seen in Figure~\ref{fig:taste}.
These shapes are macrotile versions of the doubly-exponential counters found in~\cite{Demaine-2008} with three modifications:

\begin{enumerate}
\item Each row is a single path of tiles, and any path through an entire row uniquely identifies the row.
\item Adjacent rows do not have adjacent pairs of tiles, i.e. they do not touch.
\item Consecutive rows attach at alternating (east, west, east, etc.) ends.
\end{enumerate}
\fi

\later{
In this section we describe an efficient system for assembling a set of shapes we call \emph{weak counters}.
An example of a rows in the original counter and macrotile weak counter are shown in Figure~\ref{fig:weak-counter-ex}.
These shapes are macrotile versions of the doubly-exponential counters found in~\cite{Demaine-2008} with three modifications:

\begin{enumerate}
\item Each row is a single path of tiles, and any path through an entire row uniquely identifies the row.
\item Adjacent rows do not have adjacent pairs of tiles, i.e. they do not touch.
\item Consecutive rows attach at alternating (east, west, east, etc.) ends.
\end{enumerate}

\begin{figure}[ht]
\centering
\includegraphics[width=\columnwidth]{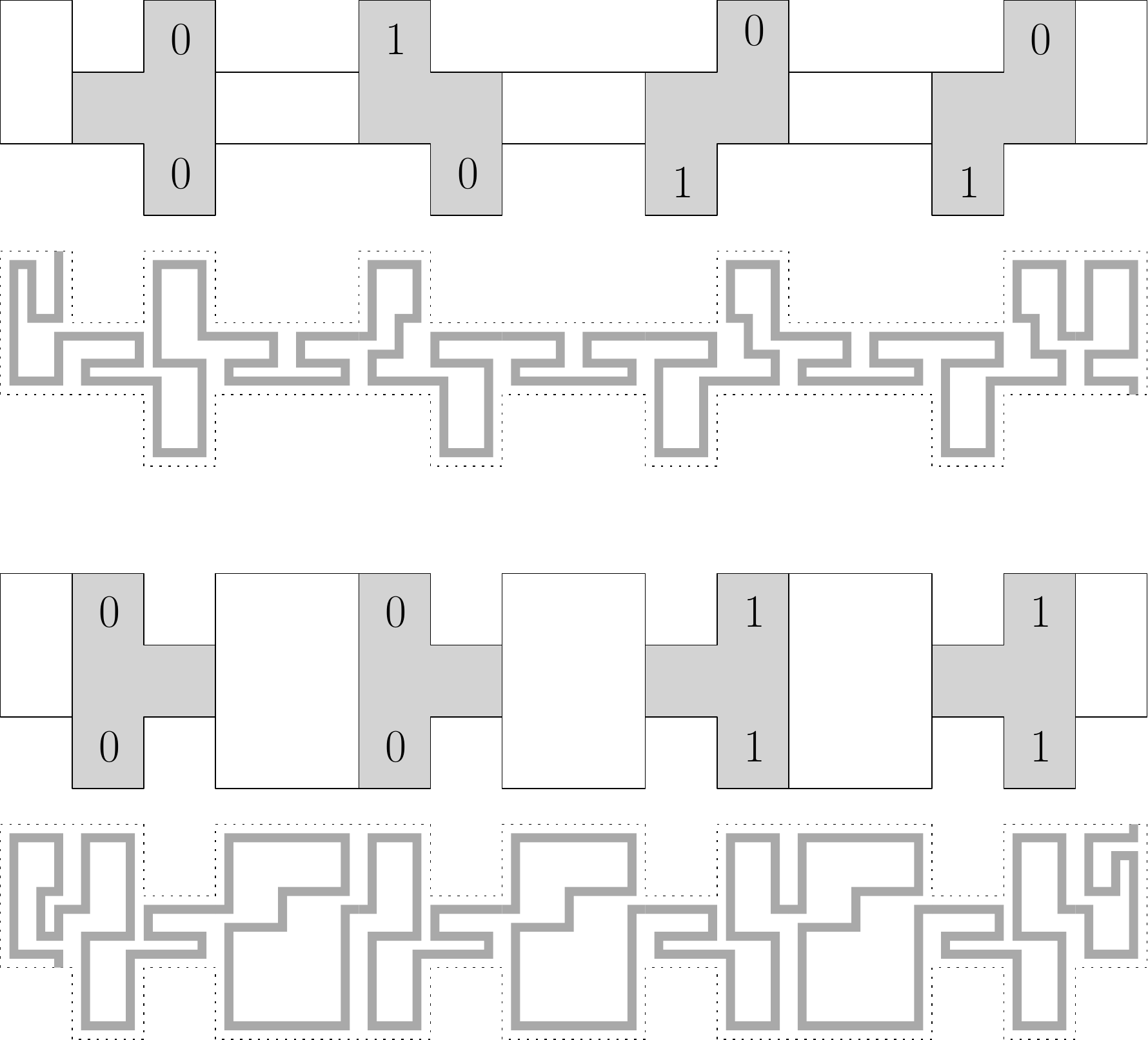}
\caption{Zoomed views of increment (top) and copy (bottom) counter rows described in~\cite{Demaine-2008} and the equivalent rows of a weak counter.}
\label{fig:weak-counter-ex}
\end{figure}

Figure~\ref{fig:weak-counter-mate} shows three consecutive counter rows attached in the final assembly.
Each row of the doubly-exponential counter consists of small, constant-sized assemblies corresponding to 0 or 1 values, along with a 0 or 1 carry bit.
We implement each assembly as a unique path of tiles and assemble the counter as in~\cite{Demaine-2008}, but using these path-based assemblies in place of the original assemblies.
We also modify the glue attachments to alternate on east and west ends of each row.
Because the rows alternate between incrementing a bit string, and simply encoding it, alternating the attachment end is trivial.
Finally, note that adjacent rows only touch at their attachment, but the geometry encoded into the row's path prevents non-consecutive rows from attaching.

\begin{figure}[ht]
\centering
\includegraphics[width=1.0\columnwidth]{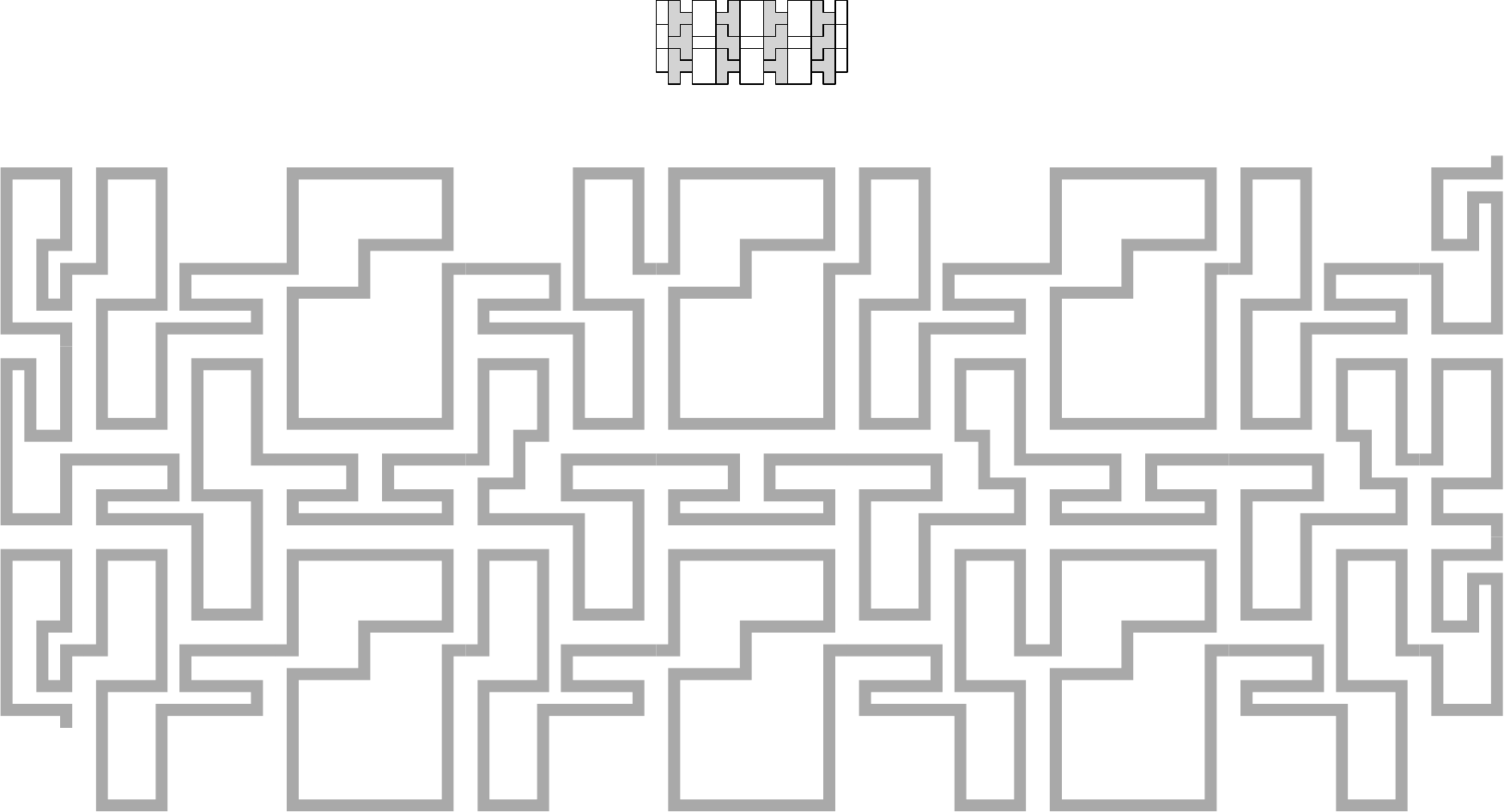}
\caption{Adjacent attached rows of the counter described in~\cite{Demaine-2008} (top) and the equivalent rows in the weak counter (bottom).}
\label{fig:weak-counter-mate}
\end{figure}
}

\both{
\begin{lemma}
\label{lem:sas-weak-counter-const}
There exists a $\tau = 1$ SAS of size $O(b)$ that produces a $2^b$-bit weak counter.
\end{lemma}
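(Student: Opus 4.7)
The plan is to mirror the doubly-exponential counter construction of Demaine et al.~\cite{Demaine-2008} but replace each constant-sized rectangular ``bit gadget'' with a path-shaped assembly, and implement the alternating east/west attachment directly. The claim is that these two modifications affect neither the asymptotic system size nor the correctness of the nondeterministic mix-and-match strategy that produces all $2^{2^b}$ rows. Since the counter has doubly-exponentially many rows, we cannot afford to build them one-by-one; instead every row must be produced in parallel via nondeterministic combination of gadgets, and then all rows must stack simultaneously in a single bin.

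In Phase 1, I would build a constant number of path-shaped bit gadgets, one for each legal combination of (input bit, carry-in, output bit, carry-out) for a binary increment, plus copy gadgets used on the ``stored'' rows. Each gadget is a constant-sized path carrying glues that encode the carry-in on one end and the carry-out on the other, together with a geometric signature along its boundary that uniquely identifies its bit value. Producing all of these requires only $O(1)$ bins.

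In Phase 2, I would assemble full-length rows of $2^b$ bits using a doubling strategy: in stage $i$, two partial rows of length $2^{i-1}$ are mixed to form partial rows of length $2^i$. Because the junction glue at the midpoint matches only when the carry bits agree, the nondeterministic mixing produces every valid row of length $2^i$ in a single bin. After $b$ stages, one bin contains assemblies of all $2^{2^b}$ possible rows, at a cumulative cost of $O(b)$ mixings. I then add a final mixing in which all rows are combined: the geometric encoding on each path forces a row to attach only to its immediate successor (the row with value one greater), the alternating east/west attachment glues are baked into each row so that consecutive rows bond at the correct corner, and the non-touching property from modification 2 is realized by routing the path so that the only glue contact between adjacent rows sits at that corner. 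The resulting product is the $2^b$-bit weak counter, and the total system size is $O(b)$.

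The main obstacle will be verifying that the path-based gadgets preserve the uniqueness guarantees needed for nondeterministic mixing: no two partial-row halves should combine except when they are exactly the left and right halves of some valid row, and no two complete rows should stack except when they are consecutive in the counter. The richer boundary of a path, as opposed to a rectangle, gives more places where a stray glue could cause a parasitic bond, so the argument must check that the geometric signatures of the path gadgets really do rule out all cross-row and cross-pair contamination. This is essentially a careful reuse of the geometric reasoning of~\cite{Demaine-2008}; once it is in place, the size bound follows immediately from the $O(1)$ gadget cost plus $O(b)$ doubling stages plus one final stacking mixing.
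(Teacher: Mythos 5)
Your proposal matches the paper's approach: the paper's proof simply observes that the weak counter is an $O(1)$-scaled, path-based variant of the doubly-exponential counter of Demaine et al.~\cite{Demaine-2008} and inherits its $O(b)$ system size, with the surrounding discussion noting exactly the modifications you describe (path-shaped bit gadgets, alternating east/west attachment, adjacent rows touching only at the attachment corner). You have essentially unpacked the cited construction in more detail, including the correct caveat that one must re-verify the geometric non-interference of the path gadgets.
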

}

\later{
\begin{proof}
The counter is an $O(1)$-scaled version of the counter of Demaine et al~\cite{Demaine-2008}.
They show that such an assembly is producible by a system of size $O(b)$.
\end{proof}
}

\both{
\begin{lemma}
\label{lem:pcfg-weak-counter-lower-bound}
For any PCFG $G$ deriving a $2^b$-bit weak counter, $|G| = \Omega(2^{2^b})$.
\end{lemma}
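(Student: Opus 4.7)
The plan is a pigeonhole-style argument on the derivation tree of $G$: any PCFG deriving a $2^b$-bit weak counter $C$ must introduce a distinct non-terminal symbol for each of the $2^{2^b}$ rows of $C$. Since each non-terminal of $G$ is the left-hand side of exactly one production rule and each rule contributes at least one symbol to $|G|$, this will yield $|G| \geq 2^{2^b}$.

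I would first fix an arbitrary derivation of $C$ by $G$ and consider the resulting ground derivation tree $T$: its leaves are the single-cell terminal symbols placed at the cells of $C$, and each internal node is a use of some non-terminal. For any node $v$ of $T$, the sub-polyomino derived at $v$ is exactly the union of the cell positions of the leaves in $v$'s subtree. For each row $R_i$ of $C$, let $V_i$ be the lowest common ancestor in $T$ of all tile-leaves belonging to $R_i$, and let $N_i$ be the non-terminal of which $V_i$ is a use. By construction, the sub-polyomino derived at $V_i$ contains all of $R_i$ (together with, possibly, a few cells from adjacent rows that get dragged in at attachment points).

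The heart of the proof is showing that the non-terminals $N_i$ are pairwise distinct. Suppose for contradiction that $N_i = N_j$ for some $i \neq j$, and call their common labeled shape $S$. Then $V_i$ and $V_j$ are two uses of the same non-terminal placed at offsets $p_i \neq p_j$ (distinct since $R_i$ and $R_j$ are disjoint), so $S$ placed at $p_i$ covers $R_i$ and $S$ placed at $p_j$ covers $R_j$, with labels matching in both placements. This forces a translated copy of the labeled path of $R_i$, namely the set $R_i + (p_j - p_i)$, to sit inside $C$ and carry exactly the labels of $R_i$. I would then derive a contradiction from the three defining properties of the weak counter: property (1) says that each row's labeled path uniquely encodes its bit string, while properties (2) and (3) say the cells of $C$ outside of rows lie only on the alternating attachment tiles between consecutive rows. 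Consequently, a horizontal labeled path of length $\Theta(2^b)$ carrying $R_i$'s bump/dent pattern cannot cross the empty band separating any two rows, and cannot align with any row $R_k$ without forcing $R_k$ to encode the same bit string as $R_i$. Hence $p_j - p_i = 0$, a contradiction.

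The main obstacle I anticipate is precisely this geometric non-repetition step: one must verify carefully that $R_i$'s labeled path does not re-appear as a translated sub-pattern of $C$ shifted vertically into inter-row gaps, or horizontally across attachment regions, or into the path of a different row. Once that is settled, the $2^{2^b}$ non-terminals $N_i$ are pairwise distinct, so $G$ has at least $2^{2^b}$ non-terminals, giving $|G| = \Omega(2^{2^b})$.
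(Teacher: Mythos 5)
Your overall strategy (assign to each row a non-terminal and show the assignment is injective) is the same as the paper's, but your choice of assignment has a genuine gap at the injectivity step. You set $N_i$ to be the non-terminal used at the lowest common ancestor $V_i$ of row $R_i$'s leaves, and then argue distinctness only in the case where $V_i$ and $V_j$ are two uses of the same non-terminal at \emph{different} offsets, justifying $p_i \neq p_j$ by the disjointness of $R_i$ and $R_j$. That justification does not hold: a single node of the derivation tree can be the LCA of the leaf sets of many distinct rows (for example, if the top-level rule splits the polyomino so that several rows each have cells in both children, the root is the LCA for all of those rows). In that case $V_i = V_j$, $p_i = p_j$, and your translation argument produces no contradiction, so the pigeonhole count of $2^{2^b}$ distinct non-terminals is not established. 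This is exactly the case the paper's proof is built to handle: it defines a \emph{minimal row spanner} (a non-terminal whose polyomino contains an east--west spanning path of the row while neither right-hand-side child does) and then uses the weak counter's connectivity structure --- non-adjacent rows are connected only through intermediate rows, and consecutive rows attach at alternating ends --- to show that one node cannot play this role for two rows. That connectivity argument is the place where the ``weakness'' of the weak counter is actually used at the node level, and it is missing from your proof.

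The second half of your argument --- that a translated copy of $R_i$'s serpentine path cannot reappear elsewhere in the counter, so a single non-terminal cannot spanner two different rows at two different positions --- matches the paper's reasoning and is fine at the paper's level of rigor (note that the counter is single-label, so only the geometry, not the labels, carries the information). To repair the proof you would need either to switch to the paper's minimal-spanner assignment, or to add a separate argument (in the spirit of the paper's adjacent-rows case) showing that no single derivation-tree node can be the LCA of two rows' leaf sets in a weak counter.
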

}

\later{
\begin{proof}

Define a \emph{minimal row spanner} of row $\row_i$ to be a non-terminal symbol $N$ of $G$ with production rule $N \rightarrow (B, (x_1, y_1)) (C, (x_2, y_2))$ such that the polyomino derived by $N$ contains a path between a pair of easternmost and westernmost tiles of the row and the polyominoes derived by $B$ and $C$ do not.
We claim that each row (trivially) has at least one minimal row spanner and each non-terminal of $G$ is a minimal row spanner of at most one unique row.

First, suppose by contradiction that a non-terminal $N$ is a minimal row spanner for two distinct rows.
Because $N$ is connected and two non-adjacent rows are only connected to each other via an intermediate row, $N$ must be a minimal row spanner for two adjacent rows $\row_i$ and $\row_{i+1}$.
Then the polyominoes of $B$ and $C$ each contain tiles in both $\row_i$ and $\row_{i+1}$, as otherwise either $C$ or $B$ is a minimal row spanner for $\row_i$ or $\row_{i+1}$.

Without loss of generality, assume $B$ contains a tile at the end of $\row_i$ not adjacent to $\row_{i+1}$.
But $B$ also contains a tile in $\row_{i+1}$ and (by definition) is connected.
So $B$ contains a path between the east and west ends of row $\row_i$, and thus $N$ is a not a minimal row spanner for $r_i$.
So $N$ is a minimal row spanner for at most one row.

Next, note that the necessarily-serpentine path between a pair of easternmost and westernmost tiles of a row in a minimal row spanner uniquely encodes the row it spans.
So the row spanned by a minimal row spanner is unique.

Because each non-terminal of $G$ is a minimal row spanner for at most one unique row, $G$ must have at least $2^{2^b}$ non-terminal symbols and total size $\Omega(2^{2^b})$.
\end{proof}
}

\begin{theorem}
\label{thm:pcfgs-over-sas-single-label-gen}
The separation of PCFGs over $\tau = 1$ SASs for single-label polyomines is $\Omega(n/(\log\log{n})^2)$.
\end{theorem}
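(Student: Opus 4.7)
The plan is to use the $2^b$-bit weak counter as the witness polyomino, combining directly the two preceding lemmas. Fix $b$ and let $P$ be a $2^b$-bit weak counter, viewed as a single-label polyomino with $n = |P|$ cells. By Lemma~\ref{lem:sas-weak-counter-const} there is a $\tau = 1$ SAS of size $O(b)$ producing $P$; by Lemma~\ref{lem:pcfg-weak-counter-lower-bound} every PCFG deriving $P$ has size $\Omega(2^{2^b})$. Consequently the PCFG-over-SAS ratio for $P$ is $\Omega(2^{2^b}/b)$.

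The remaining work is to translate this $b$-indexed ratio into the $n$-indexed separation demanded by the theorem statement. To do so, I would read off $n$ as a function of $b$ from the construction behind Lemma~\ref{lem:sas-weak-counter-const}: the counter has $2^{2^b}$ rows, and each row, implemented by the macrotile-path gadgets of the weak counter, contributes only a factor polynomial in $b$ (rather than a full factor of $2^b$) once one accounts for the fact that the counter's row-encoding lives at a carefully chosen macrotile scale. Thus $\log n = \Theta(2^b)$ and $b = \Theta(\log\log n)$, and plugging back into $\Omega(2^{2^b}/b)$ and regrouping yields the claimed $\Omega(n/(\log\log n)^2)$.

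The proof's main intellectual content is already packaged inside the two supporting lemmas: the SAS upper bound is the $O(b)$-stage doubly-exponential counter construction, and the PCFG lower bound is the minimal-row-spanner uniqueness argument leveraging the serpentine-path encoding of rows. So the only obstacle left for this theorem is essentially arithmetic bookkeeping — carefully tallying the cell count of the weak counter against its row count to establish $b = \Theta(\log\log n)$ with the right polylogarithmic overhead, so that $2^{2^b}/b$ converts cleanly into $n/(\log\log n)^2$ rather than into a weaker $n/(\log n \cdot \log\log n)$ bound. Once that accounting is done, the theorem follows immediately from the ratio of the two lemmas.
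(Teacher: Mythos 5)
Your proposal follows the paper's proof of Theorem~\ref{thm:pcfgs-over-sas-single-label-gen} essentially verbatim in structure: the witness is the weak counter, Lemma~\ref{lem:sas-weak-counter-const} supplies the $O(b)$ SAS, Lemma~\ref{lem:pcfg-weak-counter-lower-bound} supplies the $\Omega(2^{2^b})$ PCFG lower bound, and the rest is converting the $b$-indexed ratio into an $n$-indexed one. The paper does exactly this, asserting $n = \Theta(2^{2^b}b)$ because the shape ``consists of $2^{2^b}$ rows, each with $b$ subassemblies of constant size'' --- the same per-row count you assert.

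However, the one step you explicitly defer as ``arithmetic bookkeeping'' --- that each row contributes only $\mathrm{poly}(b)$ cells rather than $\Theta(2^b)$ --- is precisely the step that does not hold up, and your stated justification (that the row encoding ``lives at a carefully chosen macrotile scale'') cannot rescue it. The counter produced by a size-$O(b)$ system is a $2^b$-bit counter with $2^{2^b}$ distinct rows, each row built from constant-size bit subassemblies over a constant alphabet; such a row must contain $\Theta(2^b)$ cells, since $\mathrm{poly}(b)$ cells can realize only $2^{\mathrm{poly}(b)} \ll 2^{2^b}$ distinct rows, regardless of scale. With the honest count $n = \Theta(2^{2^b}\cdot 2^b)$ one still gets $b = \Theta(\log\log n)$, but $2^{2^b} = \Theta(n/\log n)$, so the ratio of the two lemmas yields $\Omega(n/(\log n\,\log\log n))$ rather than $\Omega(n/(\log\log n)^2)$. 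So while your route is the paper's route, the conversion you wave off is not mere bookkeeping: as written (in your sketch and in the paper's own proof), the per-row cell count is understated by a factor of $2^b/b$, and the claimed $(\log\log n)^2$ denominator does not follow from the two lemmas alone.
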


\begin{proof}
By the previous two lemmas, there exists a SAS of size $O(b)$ producing a $b$-bit weak counter, and any PCFG deriving this shape has size $\Omega(2^{2^b})$.
The assembly itself has size $n = \Theta(2^{2^b}b)$, as it consists of $2^{2^b}$ rows, each with $b$ subassemblies of constant size.
So the separation is $\Omega((n/b)/b) = \Omega(n/(\log\log{n})^2)$.
\end{proof}

\later{
In~\cite{Demaine-2008}, the $O(\log\log{n})$-sized SAS constructing a $\log{n}$-bit binary counter repeatedly doubles the length of each row (i.e. number of bits in the counter) using $O(1)$ mixings per doubling.
Achieving such a technique in a SSAS seems impossible, but a simpler construction producing a $b$-bit counter with $O(b)$ work can be done by using a unique set of $O(1)$ glues for each bit of the counter.
In this case, mixing these reusable elements along with a previously-constructed pair of first and last counter rows creates a single mixing assembling the entire counter at once.
Modifying the proof of Theorem~\ref{thm:pcfgs-over-sas-single-label-gen} to use this construction gives a similar separation for SSASs:
}

\both{
\begin{corollary}
The separation of PCFGs over $\tau = 1$ SSASs for single-label polyominoes is $\Omega(n/\log^2{n})$.
\end{corollary}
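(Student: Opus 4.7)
The plan is to mimic the proof of Theorem~\ref{thm:pcfgs-over-sas-single-label-gen}, replacing Lemma~\ref{lem:sas-weak-counter-const} with an SSAS that produces a weak counter of reduced size. The SAS doubling trick underlying Lemma~\ref{lem:sas-weak-counter-const} generates many products per mixing and so cannot be used in the singular setting; the right substitute, sketched in the paragraph preceding the corollary, is to allocate a distinct pool of $O(1)$ glues to each bit position of the counter so that the whole assembly emerges from a single final mixing.

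Concretely, I would target an SSAS of size $O(b)$ producing a $b$-bit weak counter ($2^b$ rows, each a path of $b$ constant-sized path subassemblies). For every bit index $i \in \{0, \dots, b-1\}$ I would create a constant number of singleton bins, one per increment/copy subassembly for bit $i$, with all of its glues drawn from a pool reserved for index $i$; this uses $O(b)$ work in total. I would also pre-assemble the designated first and last rows via two sequences of $O(b)$ singular mixings. A final mixing combines the $O(b)$ bit bins with the two row caps: the per-index glue partition forces every horizontal chain to step through indices $0, 1, \dots, b-1$ and terminate at a cap, the carry-bit geometry of each subassembly (identical to that in~\cite{Demaine-2008}) forces consecutive rows into the unique counter stacking order, and the alternating east/west attachment of adjacent rows inherited from the weak counter definition blocks misaligned vertical stacking. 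Hence the sole $\tau$-stable glue-free product is the full counter, and the system is singular of total size $O(b)$. For the PCFG lower bound, the minimal-row-spanner argument of Lemma~\ref{lem:pcfg-weak-counter-lower-bound} uses only that each row is a distinct serpentine path and that adjacent rows touch at a single attachment point, so the same proof gives $\Omega(2^b)$ non-terminals on any deriving PCFG. The counter has $n = \Theta(b \cdot 2^b)$ cells, giving $b = \Theta(\log n)$, an SSAS of size $O(\log n)$, and a PCFG of size $\Omega(2^b) = \Omega(n/\log n)$, yielding the claimed $\Omega(n/\log^2 n)$ separation.

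The main technical obstacle is verifying singularity of the final mixing — ruling out spurious $\tau$-stable subproducts such as isolated row fragments, unanchored copy chains, or stacks missing a cap. The per-index glue partition confines intra-row interactions, and the carry-bit logic combined with east/west alternation confines inter-row interactions, but one must carefully audit exposed glues to ensure no stable fragment can detach as its own product. Once singularity is established, the remaining arithmetic is a direct substitution into the computation of Theorem~\ref{thm:pcfgs-over-sas-single-label-gen}.
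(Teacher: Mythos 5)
Your proposal matches the paper's argument essentially exactly: the paper likewise replaces the row-doubling SAS with an SSAS that dedicates a distinct $O(1)$-glue pool to each bit index, pre-builds the first and last counter rows, and assembles the entire $b$-bit weak counter in one final mixing of these $O(b)$ reusable parts, then reuses the minimal-row-spanner lower bound and the arithmetic $n = \Theta(b\,2^b)$ to get $\Omega(n/\log^2 n)$. If anything, your write-up is more careful than the paper's, which presents this only as a brief sketch and does not explicitly audit the singularity of the final mixing as you propose to do.
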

}

\subsection{Rectangles}
\label{sec:sas-much-better-pcfgs-rectangle}

\ifabstract
\later{\subsection{Rectangles}}
\fi

For the weak counter construction, the lower bound in Lemma~\ref{lem:pcfg-weak-counter-lower-bound} depended on the poor connectivity of the weak counter polyomino.
This dependancy suggests that such strong separation ratios may only be achievable for special classes of ``weakly connected'' or ``serpentine'' shapes.
Restricting the set of shapes to rectangles or squares while keeping an alphabet size of 1 gives separation of at most $O(\log{n})$, as any rectangle of area $n$ can be derived by a PCFG of size $O(\log{n})$.

But what about rectangles with a constant-sized alphabet?
In this section we achieve surprisingly strong separation of PCFGs over SASs and SSASs for rectangular constant-label polyominoes, nearly matching the separation achieved for single-label general polyominoes.
\ifabstract
A separation of $\Omega(n/\log{n})$ is achieved using an \emph{end-to-end binary counter} polyomino, seen in Figure~\ref{fig:taste}.
\fi

\later{

\paragraph{The construction}

The polyominoes constructed resemble binary counters whose rows have been arranged in sequence horizontally, and we call them \emph{$b$-bit end-to-end counters}.
Each row of the counter is assembled from tall, thin macrotiles (called \emph{bars}), each containing a \emph{color strip} of orange, purple, or green.
The color strip is coated on its east and west faces with gray geometry tiles that encode the bar's location within the counter.

\begin{figure}[ht]
\centering
\includegraphics[scale=1.0]{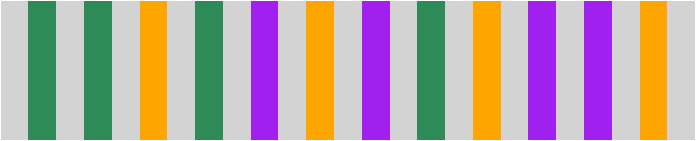}
\caption{The rectangular polyomino used to show separation of PCFGs over SASs when constrained to constant-label rectangular polyominoes.
The green and purple color strips denote 0 and 1 bits in the counter.}
\label{fig:rect-pattern}
\end{figure}

Each row of the counter has a sequence of green and purple \emph{display bars} encoding a binary representation of the row's value and flanked by orange \emph{reset bars} (see Figure~\ref{fig:rect-structure}).
An example for $b = 2$ bits can be seen in Figure~\ref{fig:rect-pattern}.

\begin{figure}[ht]
\centering
\includegraphics[scale=1.0]{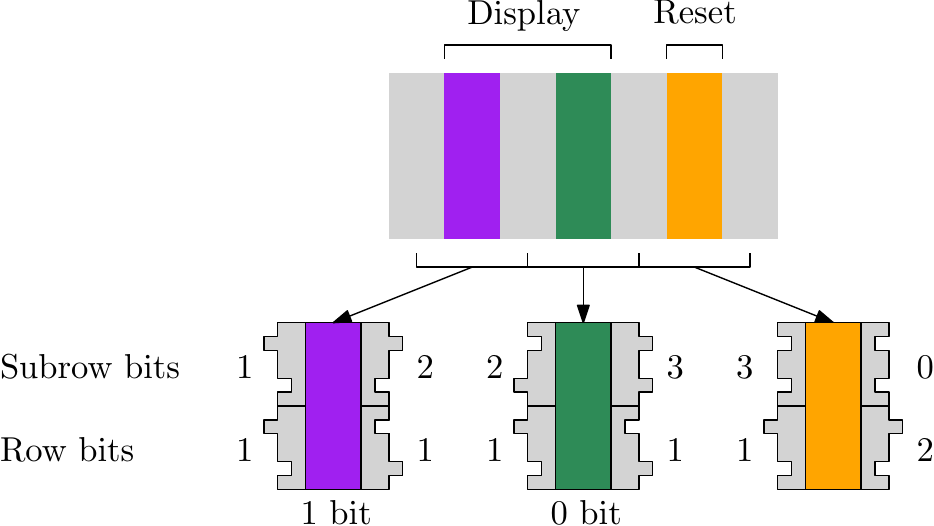}
\caption{The implementation of the vertical bars in row 2 ($01_b$) of an end-to-end counter.}
\label{fig:rect-structure}
\end{figure}

Each bar has dimensions $O(1) \times 3(\log_2{b} + b + 2)$, sufficient for encoding two pieces of information specifying the location of the bar within the assembly.
The \emph{row bits} specify which row the bar lies in (e.g. the $7$th row).
The \emph{subrow bits} specify where within the row the bar lies (e.g. the $4$th bit).
The subrow value starts at $0$ on the east side of a reset bar, and increments through the display bars until reaching $b+1$ on the west end of the next reset bar.
Bars of all three types with row bits ranging from $0$ to $2^b-1$ are produced.

\paragraph{Efficient assembly}

The counter is constructed using a SAS of size $O(b)$ in two phases.
First, sequences of $O(b)$ mixings are used to construct five families of bars:
1. reset bars,
2. 0-bit display bars resulting from a carry,
3. 0-bit display bars without a carry,
4. 1-bit display bars resulting from a carry,
5. 1-bit display bars without a carry,
The mixings product five bins, each containing \emph{all} of the bars in the family.
These five bins are then combined into a final bin where the bars attach to form the $\Theta(2^b) \times \Theta(b)$ rectangular assembly.
The five families are seen in Figure~\ref{fig:rect-schema}.

\begin{figure}[ht]
\centering
\includegraphics[scale=1.0]{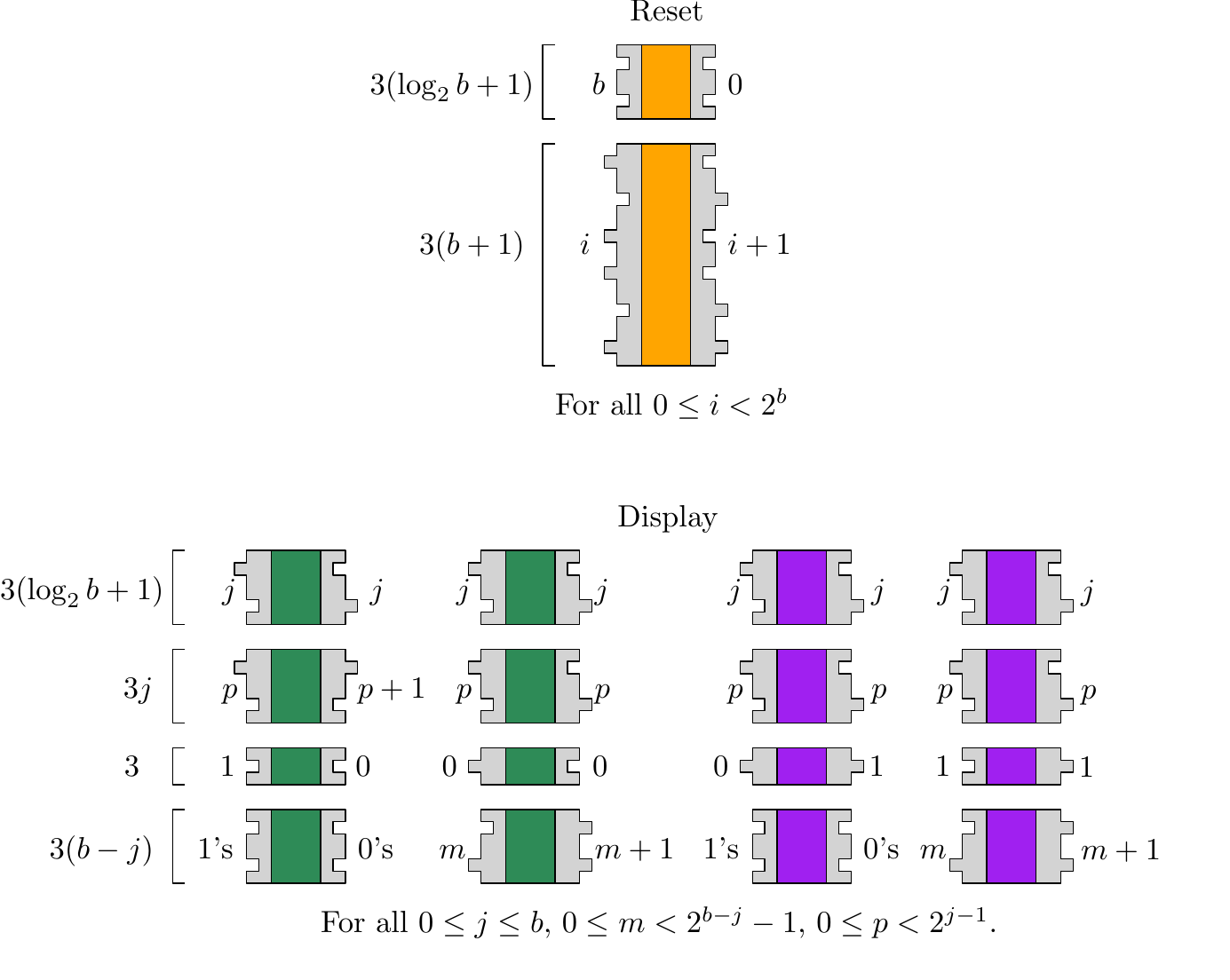}
\caption{The decomposition of bars used assemble a $b$-bit end-to-end counter.}
\label{fig:rect-schema}
\end{figure}

Efficient $O(b)$ assembly is achieved by careful use of the known approach of non-deterministic assembly of single-bit assemblies as done in~\cite{Demaine-2008}.
Assemblies encoding possible input bit and carry bit value combinations for each row bit and subrow bit are constructed and mixed together, and the resulting products are every valid set of input and output bit strings, i.e. every row of a binary counter assembly.

As a warmup, consider the assembly of all reset bars.
For these bars, the west subrow bits encode $b$ and the east subrow bits encode $0$.
The row bits encode a value $i$ on the west side, and $i+1$ on the east side, for all $i$ between $0$ and $2^b-1$.
Constructing all such bars using $O(b)$ work is straightforward.
For each of the $\log_2{b} + 1$ subrow bits, create an assembly where the west and east bits are 1 and 0 respectively, \emph{except} for the most significant bit (bit $\log_2{b}+1$), where the west and east bits are both 0.

For the row bits we use the same technique as in~\cite{Demaine-2008} and extended in Lemma~\ref{lem:efficient-incrementors-value-sas}: create a constant-sized set of assemblies for each bit that encode input and output value and carry bits.
For bits $1$ through $b-1$ (zero-indexed) create four assemblies corresponding to the four combinations of value and carry bits, for bit 0 create two assemblies corresponding to value bits (the carry bit is always 1), for bit $b$ create three assemblies corresponding to all combinations except both value and carry bits valued 1, and for bit $b+1$ create a single assembly with both bits valued 0.
Give each bit assembly a unique south and north glue encoding its location within the bar and carry bit value, and give all bit assemblies a common orange color strip.
Mixing these assemblies produces all reset bars, with subrow west and east values of $b$ and $0$, and row values $i$ and $i+1$ for all $i$ from $0$ to $2^b-1$.

In contrast to producing reset bars, producing display bars is more difficult.
The challenge is achieving the correct color strip relative to the subrow and row values.
Recall that the row value $i$ locates the bar's row and the subrow value $j$ locates the bar within this row.
So the correct color strip for a bar is green if the $j$th bit of $i$ is 0, and purple if the $j$th bit of $i$ is 1.

We produce four families of display bars, two for each value of the $j$th bit of of $i$.
Each subfamily is produced by mixing a subrow assembly encoding $j$ on both east and west ends with three component assemblies of the row value: the \emph{least significant bits (LSB) assembly} encoding bits $1$ through $j-1$ of $i$, the \emph{most significant bits (MSB) assembly} encoding bits $j+1$ through $b$ of $i$, and the constant-sized $j$th bit assembly.
This decomposition is seen in the bottom half of Figure~\ref{fig:rect-schema}.

The four families correspond to the four input and carry bit values of the $j$th bit.
These values determine what collections of subassemblies should appear in the other two components of the row value.
For instance, if the input and carry bit values are both 1, then the LSB assembly must have all 1's on its west side (to set the $j$th carry bit to 1) and all 0's on its east side.
Similarly, the MSB assembly must have some value $p$ encoded on its west side and the value $p+1$ encoded on its east side, since the $j$th bit and and $j$th carry bit were both 1, so the $(j+1)$st carry bit is also 1.

Notice that each of the four families has $b$ subfamilies, one for each value of $j$.
Producing all subfamilies of each family is possible in $O(b)$ work by first recursively producing a set of $b$ bins containing successively larger sets of MSB and LSB assemblies for the family.
Then each subfamily can be produced using $O(1)$ amortized work, mixing one of $b$ sets of LSB assembly subfamilies, one of $b$ sets of MSB assemblies, and the $j$th bit assembly together.
For instance, one can produce the set of $b$ sets of MSB assemblies encoding pairs of values $p$ and $p+1$ on bits $b-1$ through $b$, $b-2$ through $b$, etc. by producing the set on bits $k$ through $b$, then adding four assemblies to this bin (those encoding possible pairs of inputs to the $(k-1)$st bit) to produce a similar set on bits $k-1$ through $b$.
}

\both{
\begin{lemma}
\label{lem:sas-rect-ub}
There exists a $\tau = 1$ SAS of size $O(b)$ that produces a $b$-bit end-to-end counter.
\end{lemma}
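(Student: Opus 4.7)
The plan is to construct the end-to-end counter by first producing, in $O(b)$ mixings each, five families of vertical bars in separate bins: the reset bars and the four kinds of display bars corresponding to the possible input-bit/carry-bit combinations at the displayed bit. Mixing all five bins together then non-deterministically chains bars within each row and links consecutive rows via the reset bars at the ends, producing the full $\Theta(2^b)\times\Theta(b)$ rectangle.

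The reset bars are the easy warmup. Their subrow faces are fixed (east subrow $0$, west subrow $b$) and can be produced by $O(\log b)$ bit assemblies, one per subrow bit. Their row faces must simultaneously encode $i$ on the west and $i+1$ on the east for every $i \in \{0,\dots,2^b-1\}$, which is exactly the task handled by the non-deterministic single-bit construction used in Lemma~\ref{lem:efficient-incrementors-value-sas}: a constant number of $O(1)$-size assemblies per row bit, each encoding one consistent combination of value and carry bits, together with a common orange color strip. Mixing these yields every reset bar in a single product bin using $O(b)$ work.

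The main obstacle is the four display-bar families, because each bar's color strip (green or purple) must match the $j$th bit of the row value $i$, while $j$ is determined by the bar's subrow position within row $i$. To resolve this I would split each display bar along the row-bits direction into an LSB assembly encoding bits $0,\dots,j-1$ of $i$ and its successor, a constant-sized middle assembly for the $j$th bit carrying the correct color strip, and an MSB assembly encoding bits $j+1,\dots,b$. For each of the four fixed families (fixed $j$th input and carry values) the admissible LSB and MSB sets are fully determined; e.g.\ input $=$ carry $=1$ forces the LSB to have all ones on its west face and all zeros on its east, and forces the MSB to increment by one.

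The efficiency comes from the observation that the LSB sets for successive values of $j$ (and symmetrically the MSB sets) are nested: each is obtained from its predecessor by appending $O(1)$ new single-bit assemblies. So I would build $b$ LSB bins and $b$ MSB bins in two chains of $O(b)$ mixings total. Each $j$-indexed subfamily is then produced with $O(1)$ further mixings combining the appropriate LSB bin, MSB bin, middle bit assembly, and the $O(\log b)$-built subrow-$j$ assembly. Summed over the four families and $b$ values of $j$ this gives $O(b)$ mixings. The remaining verification is that the east/west geometry on the bars forces horizontally adjacent bars within a row to agree on $(i,j)$ and on the passed carry bit, while the reset-bar end-glues pair row $i$ with row $i+1$ vertically and alternate the attachment side, so that the unique product of the final all-bins mix is the intended $b$-bit end-to-end counter.
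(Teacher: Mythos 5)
Your construction matches the paper's almost verbatim: the same five bar families, the same reset-bar warmup via non-deterministic single-bit assemblies, the same LSB/middle-bit/MSB decomposition of the display bars with nested bins giving amortized $O(1)$ work per subfamily, and a single final mix of all bins. One small slip in your closing sentence: in the end-to-end counter the rows are chained \emph{horizontally} by the reset bars (which carry row value $i$ on the west face and $i+1$ on the east), not paired vertically with alternating attachment sides --- that description belongs to the weak counter of the general-shapes construction, not to this rectangle.
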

}

\later{
\begin{proof}
This follows from the description of the system.
The five families of bars can each be produced with $O(b)$ work and the bars can be combined together in a single mixing to produce the counter.
So the system has total size $O(b)$.
\end{proof}
}

\both{
\begin{lemma}
\label{lem:pcfg-end-to-end-counter-lower-bound}
For any PCFG $G$ deriving a $b$-bit end-to-end counter, $|G| = \Omega(2^b)$.
\end{lemma}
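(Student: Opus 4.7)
The plan mirrors Lemma~\ref{lem:pcfg-weak-counter-lower-bound}, using unique color patterns in place of unique paths. For each section $i \in \{0, 1, \dots, 2^b - 1\}$ of the end-to-end counter, let $C_i$ denote the set of $b$ display-bar color cells of section $i$; these cells lie on a common horizontal color row, their green/purple colors encode the binary representation of $i$, and distinct sections' color cells occupy disjoint column intervals separated by orange reset bars.

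First I would define a \emph{color section spanner} for section $i$ to be a non-terminal $N$ of $G$ with production rule $N \to (B,(x_1,y_1))(C,(x_2,y_2))$ such that the polyomino derived by $N$ contains all cells of $C_i$, but neither child's translated polyomino contains all of $C_i$. Existence of at least one color section spanner per section follows by the standard descent argument: the start symbol trivially satisfies $C_i \subseteq P_S$, so we iteratively descend to whichever child still contains all of $C_i$; the descent must halt at a non-terminal at which neither child contains all of $C_i$, which is then a color section spanner for $i$.

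The central step is to show that no non-terminal of $G$ is a color section spanner for two distinct sections. Suppose for contradiction that $N$ is a spanner for both sections $i$ and $j$, $i \neq j$. Because the grammar is deterministic, $P_N$ is one fixed connected colored polyomino containing $C_i \cup C_j$ at specific relative positions, with colors simultaneously encoding $i$ and $j$. The production rule at $N$ partitions $P_N$ into connected subpolyominoes $P_B$ and $P_C$ such that each of $C_i$ and $C_j$ is non-trivially split across this partition. I would contradict this by projecting the partition onto the color row and analyzing the resulting ``ownership'' assignment of columns to $\{B, C\}$: a non-trivial split of $C_i$ forces at least one ownership transition within $C_i$'s column interval, and similarly for $C_j$, and each transition must be bridged by cells of both children in rows above or below the color row for $P_B$ and $P_C$ to remain connected. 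A column-counting and bridge argument---using that the counter has height only $O(b)$ and that each reset-bar cell between $C_i$ and $C_j$ must be assigned entirely to one child---then shows that two such splits cannot coexist while both children remain polyominoes.

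The main obstacle will be ruling out pathological ``comb''-shaped children that alternate ownership column-by-column along the color row while staying connected via thin strips above or below; such combs can in principle split many sections' color cells simultaneously without an obvious local obstruction. I plan to handle these by an inductive or charging argument: a comb-shaped child spanning $k$ sections is itself a complex polyomino whose own PCFG derivation must recursively encode a similar splitting structure, and charging $\Omega(1)$ of grammar size per spanned section across the recursion preserves the bound. Combining the existence of one color section spanner per section with the injectivity conclusion yields at least $2^b$ distinct non-terminals in $G$, giving $|G| = \Omega(2^b)$.
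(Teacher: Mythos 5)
Your overall skeleton (one spanner per section, injectivity of the spanner-to-section assignment, hence $2^b$ non-terminals) is the same as the paper's, and your existence-by-descent step is fine. But your definition of spanner creates a genuine gap that the paper's definition is specifically engineered to avoid. You define $N$ to be a spanner for section $i$ when $P_N$ \emph{contains} all color cells of section $i$ and neither child does. Under that definition injectivity is simply false at the level of a single production rule: take the rule that splits the whole rectangle into the part above the color row together with alternating teeth reaching down into it, and the part below together with the complementary teeth (or, in the paper's actual geometry, where each section's color cells form vertical strips of height $\Theta(b)$, just a plain horizontal cut). Both children are connected, and every one of the $2^b$ sections has its color cells split between the two children, so that single non-terminal is a ``color section spanner'' for all $2^b$ sections. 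You correctly identify this comb obstruction as the main obstacle, but the proposed repair --- a recursive charging argument on the comb-shaped child --- is not carried out and is not obviously sound: the child does not contain all of any section's color cells, so it is not itself a spanner in your sense, and it is unclear what quantity you would charge against through the recursion. As written, the proof does not go through.

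The paper avoids the comb entirely by defining the spanner via \emph{horizontal spanning} rather than containment: $N$ is a minimal row spanner for section $i$ if $p_N$ horizontally spans the bounding box $D$ of that section's color strips (reaches from its west boundary to its east boundary) while neither child does. A horizontal cut or comb leaves each child still horizontally spanning every section, so such a rule is never a \emph{minimal} spanner for anything. Since neither child spans $D$, one child is confined west of $D$'s east boundary and the other east of its west boundary, so the interface between the two children lies inside $D$; any other section's bounding box occupies a disjoint column interval and is therefore already spanned by a single child, which kills minimality there. This localization to a single column interval is the key idea your containment-based definition lacks, and uniqueness of the spanned section then follows, as in your sketch, from the green/purple pattern in $D$ encoding $i$. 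I recommend you replace your containment condition with a west-to-east spanning condition on the section's color-strip bounding box and rerun your argument; the comb problem then disappears without any charging scheme.
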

}

\later{
\begin{proof}

Let $G$ be a PCFG deriving a $b$-bit end-to-end counter.
Define a \emph{minimal row spanner} to be a non-terminal symbol $N$ with production rule $N \rightarrow (B, (x_1, y_1)) (C, (x_2, y_2))$ such that the polyomino derived by $N$ (denoted $p_N$) horizontally spans the color strips of all bars in row $\row_i$ including the reset bar at the end of the row, while the polyominoes derived by $B$ and $C$ (denoted $p_B$ and $p_C$) do not.
Consider the bounding box $D$ of these color strips (see Figure~\ref{fig:rect-proof}).

\begin{figure}[ht]
\centering
\includegraphics[scale=1.0]{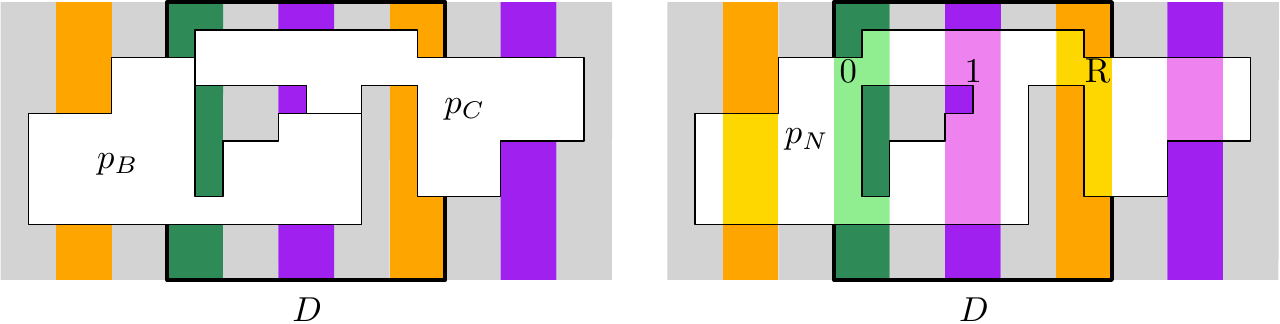}
\caption{A schematic of the proof that a non-terminal is a minimal row spanner for at most one unique row.
(Left) Since $p_B$ and $p_C$ can only touch in $D$, their union non-terminal $N$ must be a minimal row spanner for the row in $D$.
(Right) The row's color strip sequence uniquely determines the row spanned by $N$ ($01_b$).}
\label{fig:rect-proof}
\end{figure}

Without loss of generality, $p_B$ intersects the west boundary of $D$ but does not reach the east boundary, while $p_C$ intersects the east boundary but does not reach the west boundary, so any location at which $p_B$ and $p_C$ touch must lie in $D$.
Then any row spanned by $p_N$ and not spanned by $p_B$ or $p_C$ must lie in $D$, since spanning it requires cells from both $p_B$ and $p_C$.
So $p_N$ is a minimal row spanner for at most one row: row $\row_i$.

Because the sequence of green and purple display bars found in $D$ is distinct and separated by display bars in other rows by orange reset bars, each minimal row spanner spans a unique row $\row_i$.
Then since each non-terminal is a spanner for at most one unique row, $G$ must have $2^b$ non-terminal symbols and $|G| = \Omega(2^b)$.
\end{proof}
}

\both{
\begin{theorem}
The separation of PCFGs over $\tau = 1$ SASs for constant-label rectangles is $\Omega(n/\log^3{n})$.
\end{theorem}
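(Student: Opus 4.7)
The plan is to derive the theorem directly from the two preceding lemmas, converting the bit-parameterized bounds into bounds in terms of $n = |P|$. Lemma~\ref{lem:sas-rect-ub} gives a SAS of size $O(b)$ producing the $b$-bit end-to-end counter, and Lemma~\ref{lem:pcfg-end-to-end-counter-lower-bound} gives the matching PCFG lower bound of $\Omega(2^b)$. So the entire argument reduces to expressing $n$ as a function of $b$ and substituting.

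First I would count cells in a $b$-bit end-to-end counter. Each bar has dimensions $O(1) \times \Theta(b + \log b) = O(1) \times \Theta(b)$, and each row consists of $b+2$ display/reset bars placed side-by-side, so a row has $\Theta(b^2)$ cells. The counter concatenates $2^b$ such rows horizontally, hence
\[
n \;=\; \Theta\!\left(2^{b}\cdot b^{2}\right),
\]
which in particular gives $b = \Theta(\log n)$ and $2^b = \Theta(n/\log^2 n)$. Substituting into Lemma~\ref{lem:pcfg-end-to-end-counter-lower-bound} yields a PCFG lower bound of $\Omega(n/\log^2 n)$, while Lemma~\ref{lem:sas-rect-ub} gives a SAS upper bound of $O(b) = O(\log n)$. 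Dividing the two bounds gives the claimed separation $\Omega(n/\log^3 n)$.

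The main obstacle is the bookkeeping in the cell-counting step: one has to be precise about whether adjacent rows share reset bars, whether the bar height is $3(\log_2 b + b + 2)$ exactly or just $\Theta(b)$, and whether including the thin gray geometry coating on the color strip changes the asymptotics. None of this affects the final exponent, however, because the dominant factor is $2^b$ and all polynomial-in-$b$ factors collapse into additional $\log n$ factors; it suffices to show $n = 2^b \cdot \mathrm{poly}(b)$ to obtain the stated ratio. I would therefore phrase the proof as a one-line invocation of the two lemmas followed by the substitution $b = \Theta(\log n)$ and the arithmetic above, observing that the SAS of Lemma~\ref{lem:sas-rect-ub} uses only the three labels (orange, purple, green) appearing on the color strips plus the fuzz label of the gray geometry, which is a constant-sized alphabet, so the constructed polyomino indeed qualifies as a constant-label rectangle.
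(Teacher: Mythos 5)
Your proposal is correct and matches the paper's own proof essentially verbatim: both count $n = \Theta(2^b b^2)$ cells for the $b$-bit end-to-end counter, set $b = \Theta(\log n)$, and divide the $\Omega(2^b)$ PCFG lower bound by the $O(b)$ SAS upper bound to get $\Omega(n/\log^3 n)$. Your extra remark about the constant-sized label alphabet is a reasonable addition that the paper leaves implicit.
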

}

\later{
\begin{proof}
By construction, a $b$-bit end-to-end counter has dimensions $\Theta(2^bb) \times \Theta(b)$.
So $n = \Theta(2^bb^2)$ and $b = \Theta(\log{n})$. 
Then by the previous two lemmas, the separation is $\Omega((n/b^2)/b) = \Omega(n/\log^3{n})$.
\end{proof}

We also note that a simple replacement of orange, green, and purple color strips with distinct horizontal sequences of black/white color substrips yields the same result but using fewer distinct labels.
}


\subsection{Squares}
\label{sec:sas-much-better-pcfgs-square}

\ifabstract
\later{\subsection{Squares}}
\fi

The rectangular polyomino of the last section has exponential aspect ratio, suggesting that this shape requires a large PCFG because it approximates a patterned one-dimensional assemblies reminiscent of those in~\cite{Demaine-2012a}.
Creating a polyomino with better aspect ratio but significant separation is possible by extending the polyomino's labels vertically.
For a square this approach gives a separation of PCFGs over SASs of $\Omega(\sqrt{n}/\log{n})$, non-trivial but far worse than the rectangle.

\ifabstract
Our final result achieves $\Omega(n/\log{n})$ separation of PCFGs over SASs for squares using a \emph{block binary counter} (seen in Figure~\ref{fig:taste}).
Each ``row'' of the counter is actually a set of concentric square rings called a \emph{block}.
\fi

\later{

\paragraph{The construction}

In this section we describe a polyomino that is square but contains an exponential number of distinct subpolyominoes such that each subpolyomino has a distinct ``minimal spanner'', using the language of the proof of Lemma~\ref{lem:pcfg-end-to-end-counter-lower-bound}.
These subpolyominoes use circular versions of the vertical bars of the construction in Section~\ref{sec:sas-much-better-pcfgs-rectangle} arranged concentrically rather than adjacently.
We call the polyomino a \emph{$b$-bit block counter}, and an example for $b=2$ is seen in Figure~\ref{fig:square-pattern}.

Each \emph{block} of the counter is a $\Theta(b^2) \times \Theta(b^2)$ square subpolyomino encoding a sequence of $b$ bits via a sequence of concentric rectangular \emph{rings} of increasing size.
Each ring has a \emph{color loop} encoding the value of a bit, or  the start or end of the bit sequence (the interior or exterior of the block, respectively).
The color loop actually has three subloops, with the center loop's color (green, purple, light blue, or dark blue in Fig.~\ref{fig:square-pattern}) indicating the bit value or sequence information, and two surrounding loops (light or dark orange in Fig.~\ref{fig:square-pattern}) indicating the interior and exterior sides of the loop.

\begin{figure}[ht]
\centering
\includegraphics[scale=1.0]{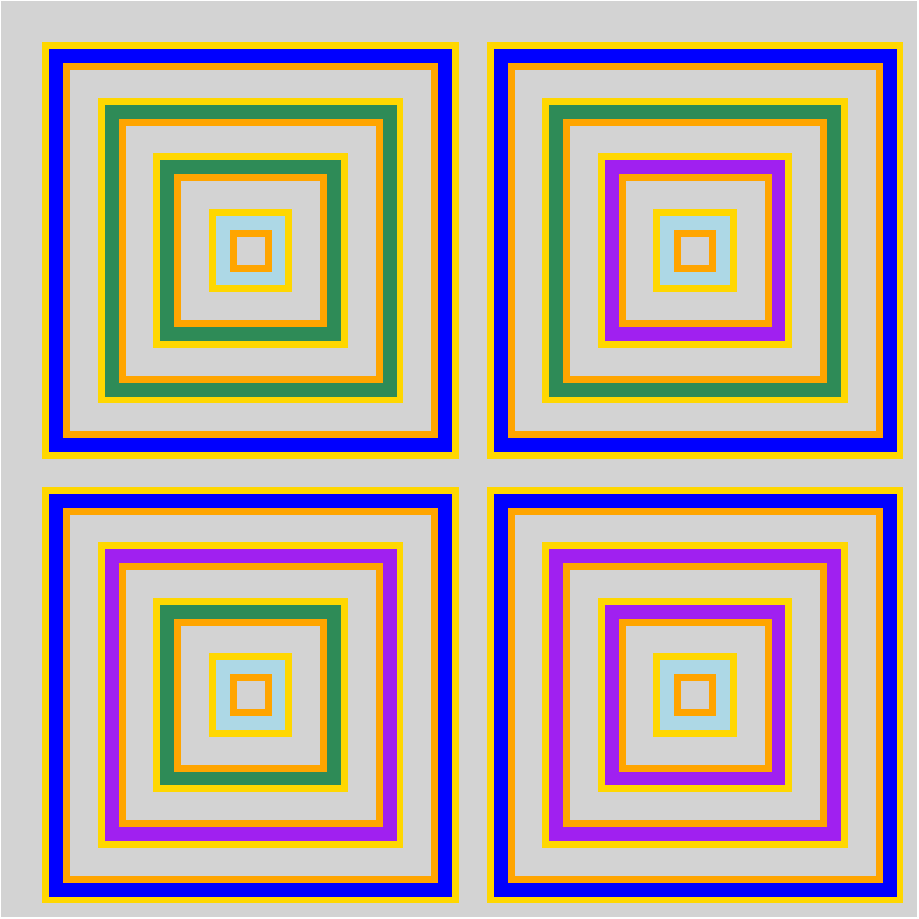}
\caption{The square polyomino used to show separation of PCFGs over SASs when constrained to constant-label square polyominoes.
The green and purple color subloops denote 0 and 1 bits in the counter, while the light and dark blue color subloops denote the start and end of the bit string.
The light and dark orange color subloops indicate the interior and exterior of the other subloops.}
\label{fig:square-pattern}
\end{figure}

\paragraph{Efficient assembly of blocks}

Though each counter block is square, they are constructed similarly to the end-to-end counter rows of Section~\ref{sec:sas-much-better-pcfgs-rectangle} by assembling the vertical \emph{bars} of each ring together into horizontal stacks of assemblies.
Horizontal \emph{slabs} are added to ``fill in'' the remaining portions of each block.

\begin{figure}[ht]
\centering
\includegraphics[scale=0.8]{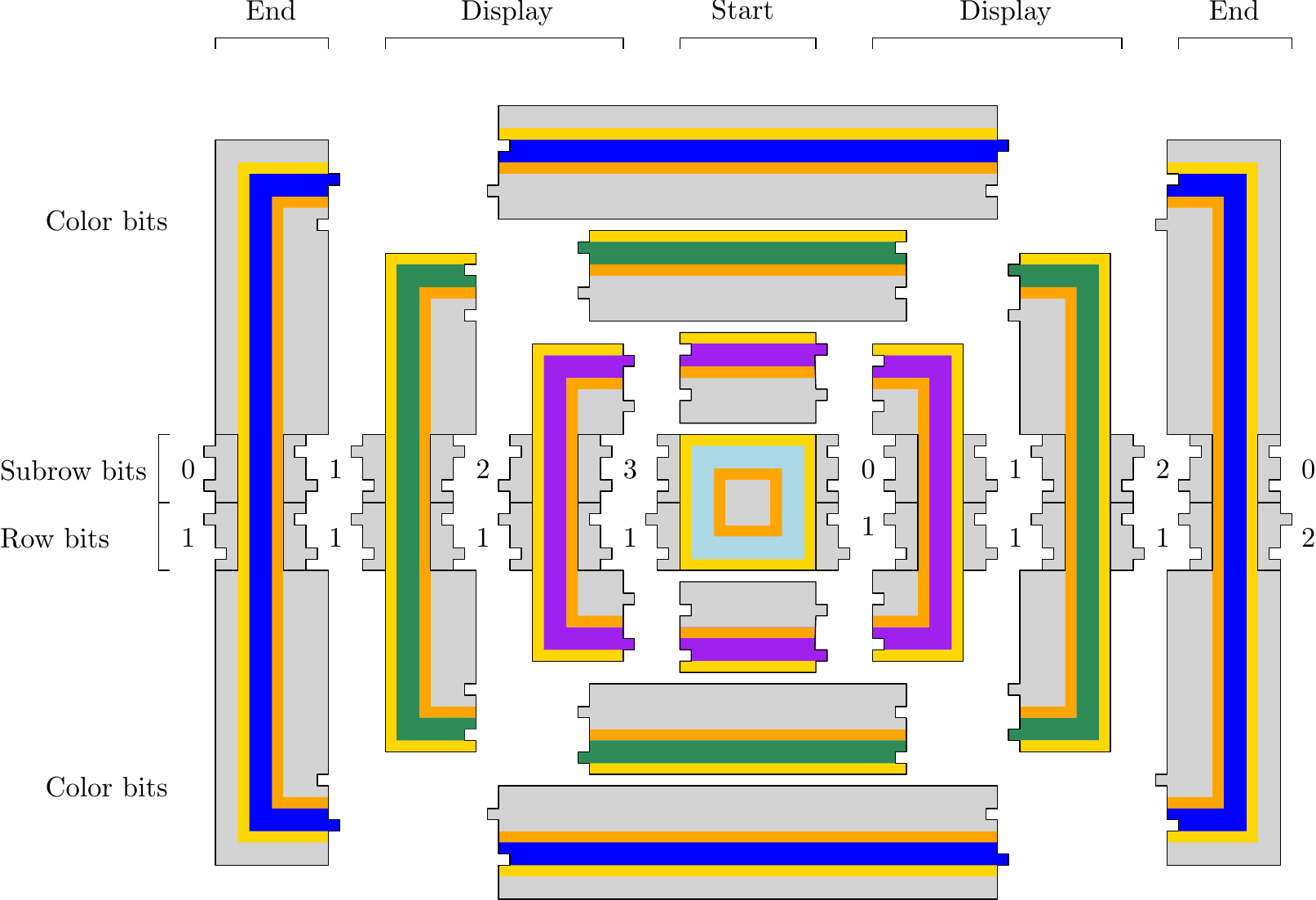}
\caption{The implementation of rings in each block of the block counter.}
\label{fig:square-structure}
\end{figure}

The bars are identical to those found in Section~\ref{sec:sas-much-better-pcfgs-rectangle} with three modifications (seen in Figure~\ref{fig:square-structure}).
First, each bar has additional height according to the value of the subrow bits (8 tiles for every increment of the bits).
Second, each bar has four additional layers of tiles on the side (east or west) facing the interior of the block, with \emph{color bits} at the north and south ends of the side encoding three values: $11_b$ (if the center color subloop is purple, a 1-bit), $00_b$ (if the center color subloop is green, a 0-bit), or $01_b$ (if the center color subloop is dark blue, the end of the bit sequence).
The additional layers are used to fill in gaps between adjacent rings left by protruding geometry, and the bit values are used to control the attachment of the horizontal slabs of each ring.

\begin{figure}[ht]
\centering
\includegraphics[scale=0.8]{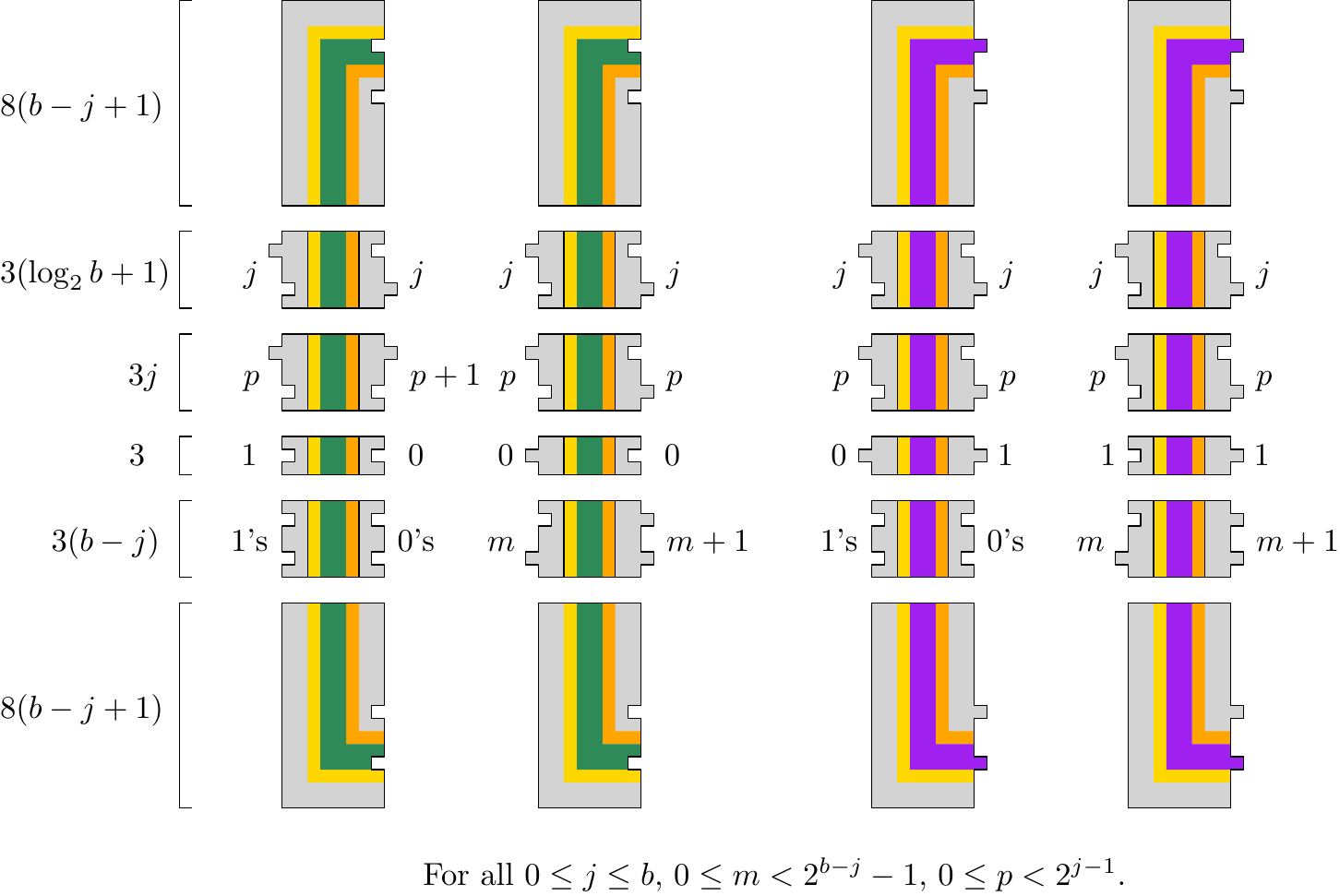}
\caption{The decomposition of vertical display bars used to assemble blocks in the $b$-bit block counter.
Only the west bars are shown, with east bars identical but color bits and color loops reflected.}
\label{fig:square-schema-display}
\end{figure}

Third, the reset bars used in Section~\ref{sec:sas-much-better-pcfgs-rectangle} are replaced with two kinds of bars: \emph{start bars} and \emph{end bars}, seein in Figure~\ref{fig:square-schema-starts-ends}.
End bars form the outermost rings of each block, and the start bars form the square cores of each block.
Both start and end bars ``reset'' the subrow counters, and the east end bars increment the row value.

\begin{figure}[ht]
\centering
\includegraphics[scale=1.0]{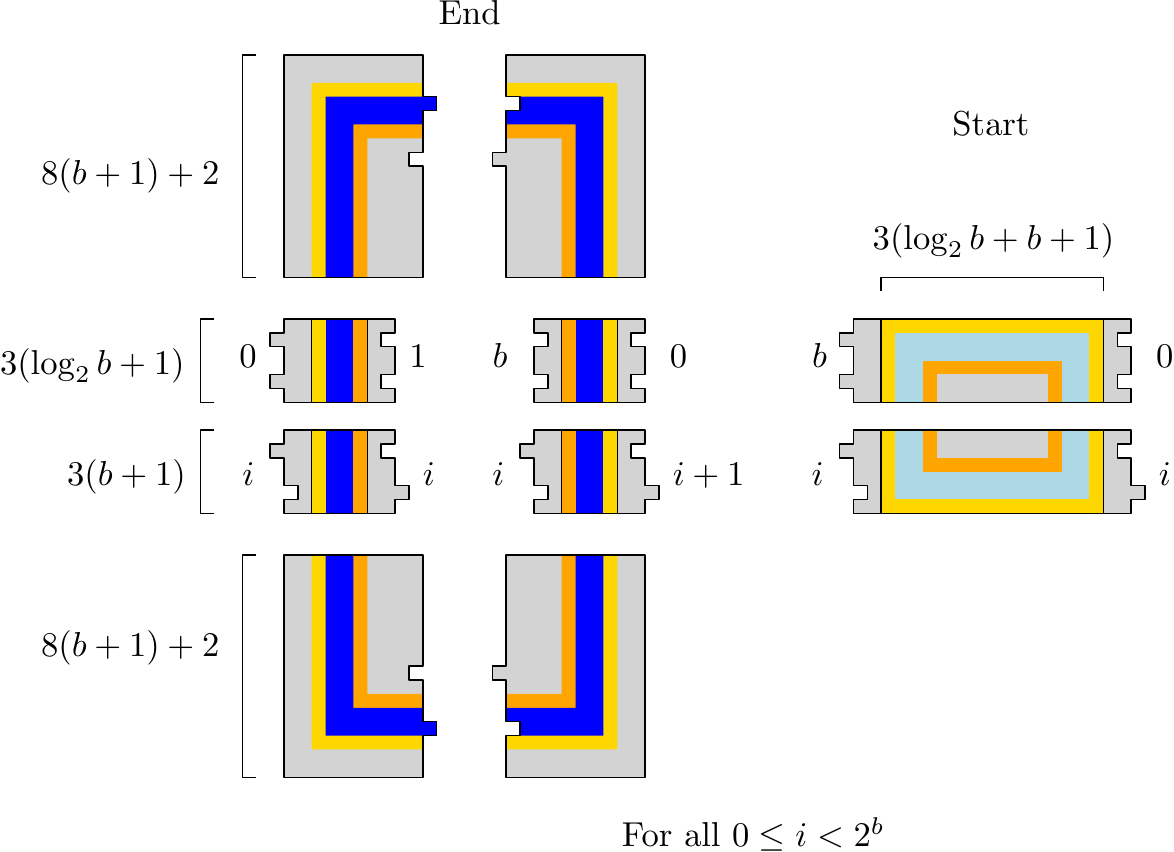}
\caption{The decomposition of vertical start and end bars used to assemble blocks in the $b$-bit block counter.}
\label{fig:square-schema-starts-ends}
\end{figure}

Recall that the vertical bars of the end-to-end counter in Section~\ref{sec:sas-much-better-pcfgs-rectangle} were constructed using $O(b)$ total work by amortizing the constructing subfamilies of MSB and LSB assemblies for each subrow value $j$.
We use the same trick here for these assemblies as well as the new assemblies on the north and south ends of each bar containing the color bits.
In total there are twelve families of vertical bar assemblies (four families of west display bars, four families of east display bars, and two families each of start and end bars), and each is assembled using $O(b)$ work.

\begin{figure}[ht]
\centering
\includegraphics[scale=1.0]{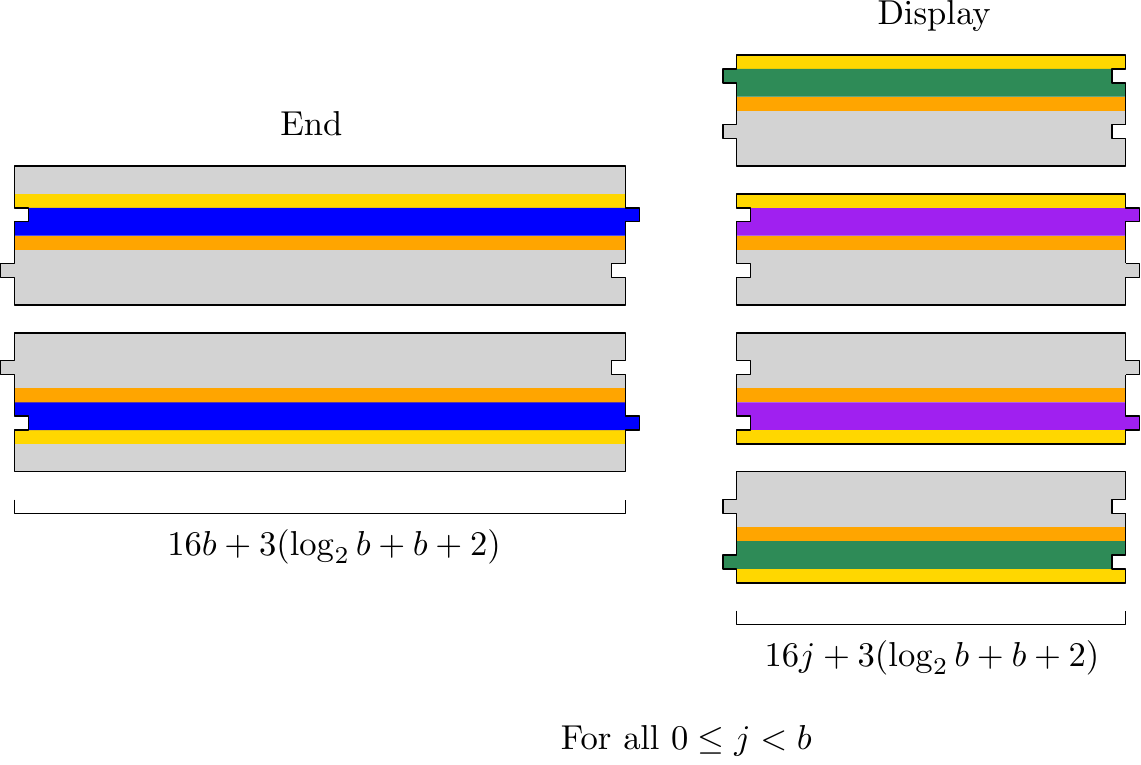}
\caption{The decomposition of horizontal slabs of each ring the $b$-bit block counter.}
\label{fig:square-schema-horizontals}
\end{figure}

Finally, the horizontal slabs of each ring are constructed as six families, each using $O(b)$ work, as seen in Figure~\ref{fig:square-schema-horizontals}.

\paragraph{Efficient assembly of the counter}

Once the families of vertical bars and horizontal slabs are assembled into blocks, we are ready to arrange them into a completed counter.
Each row of the counter has $\sqrt{2^b} = 2^{b/2}$ blocks.
So assuming $b$ is even, the $b/2$ least significant bits of the westmost block of each row are $0$'s, and of the eastmost block are $1$'s.
Before mixing the vertical bar families together, we ``cap'' the east end bar of each block at the east end of a row by constructing a set of thin assemblies (right part of Figure~\ref{fig:square-schema-caps}) and mixing them with the family of east end bars.

\begin{figure}[ht]
\centering
\includegraphics[scale=1.0]{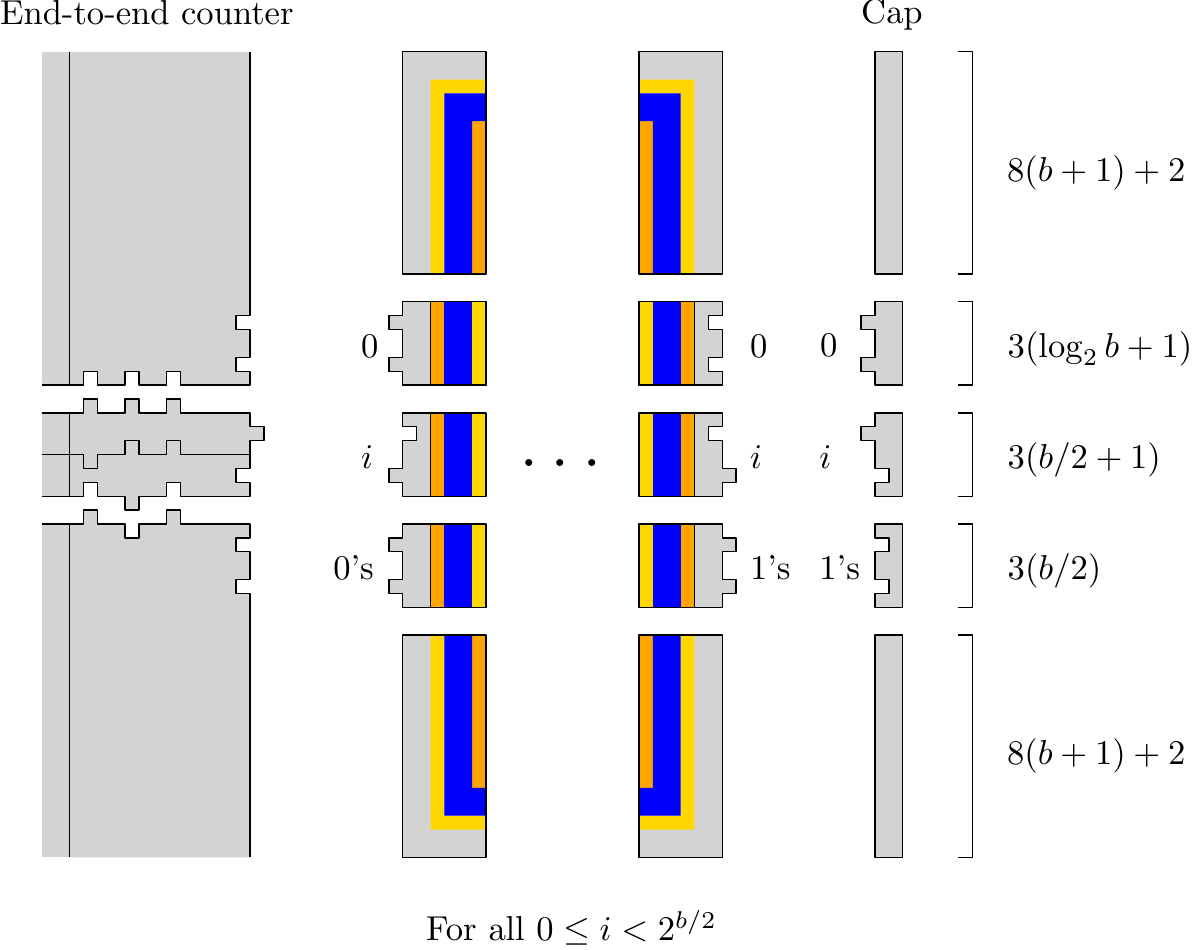}
\caption{(Left) The interaction of a vertical end-to-end counter with the westernmost block in each row. (Right) The cap assemblies built to attach to the easternmost block in each row.}
\label{fig:square-schema-caps}
\end{figure}

After this modification to the east end bar family, mixing all vertical bar families results in $2^{b/2}$ assemblies, each forming most of a row of the block counter.
Mixing these assemblies with the families of horizontal slabs results in a completed set of block counter rows, each containing $2^{b/2}$ square assemblies with dimensions $\Theta(b^2) \times \Theta(b^2)$, forming $2^{b/2}$ rectangles with dimensions $\Theta(2^{b/2}b^2) \times \Theta(b^2)$.

To arrange the rows vertically into a complete block counter, a vertically-oriented version of the end-to-end counter of Section~\ref{sec:sas-much-better-pcfgs-rectangle} with geometry instead of color strips (left part of Fig.~\ref{fig:square-schema-caps}) is assembled and used as a ``backbone'' for the rows to attach into a combined assembly.
This modified end-to-end counter (see Figure~\ref{fig:square-schema-vert-counter}) has subrow values from $0$ to $b/2$, for the $b/2$ most signficant bits of the row value of each block, and row values from $0$ to $2^{b/2}$.
Modified versions of reset bars with height (width in the horizontal end-to-end counter) $\Theta(b^2)$ are used to bridge across the geometry-less portions of the west sides of the blocks, as well as the always-zero $b/2$ least significant bits of the block's row value and subrow $\log_2{b}$ bits.

\begin{figure}[ht]
\centering
\includegraphics[scale=1.0]{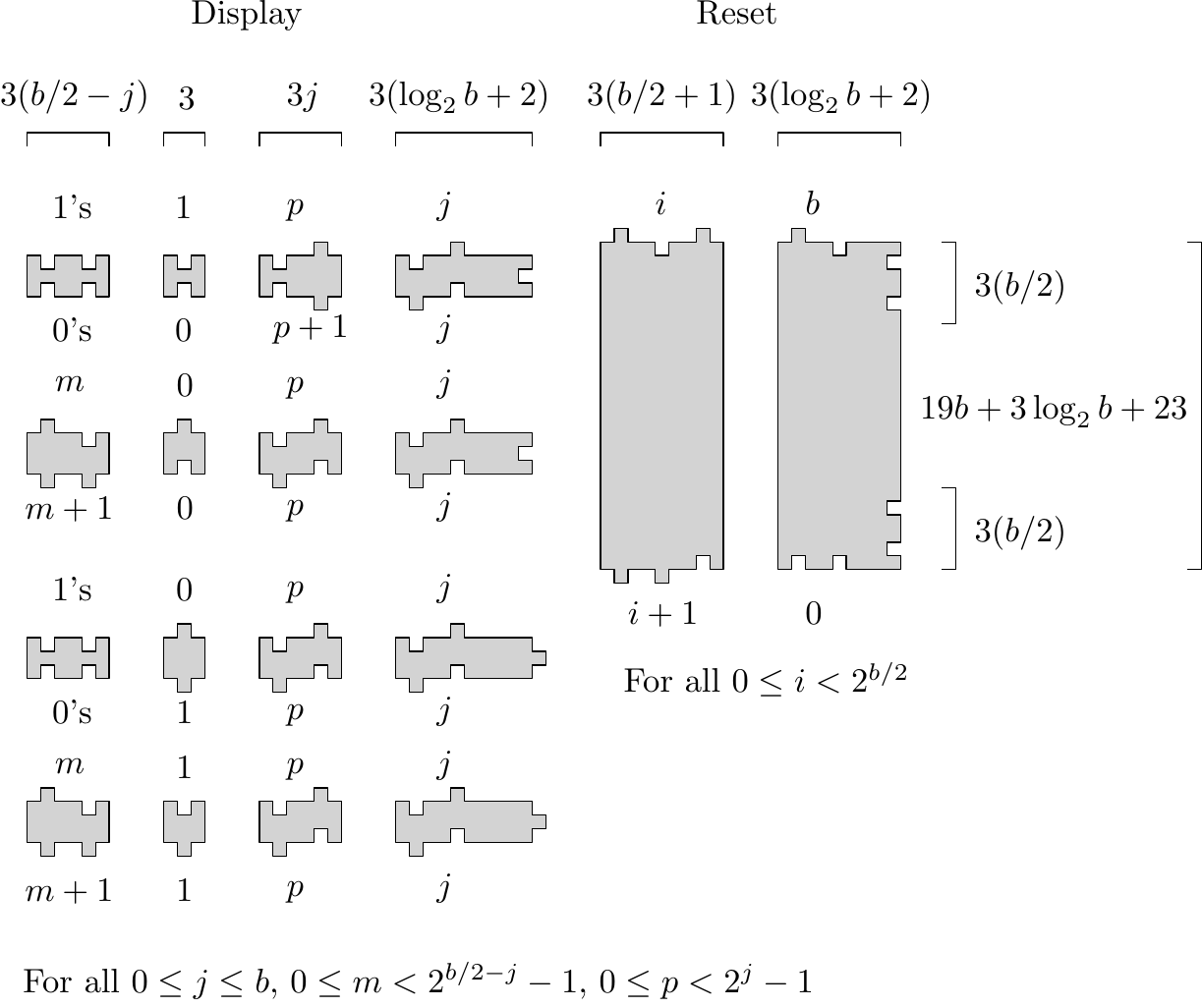}
\caption{The decomposition of the bars of a vertically-oriented end-to-end counter used to combine rows of blocks in a block counter.}
\label{fig:square-schema-vert-counter}
\end{figure}

This modified end-to-end counter can be assembled using $O(b)$ work as done for the original end-to-end counter, since the longer reset bars only add $O(\log(b^2)) = O(\log{b})$ work to the assembly process.
After the vertical end-to-end counter has been combined with the blocks to form a complete block counter, a horizontal end-to-end counter is attached to the top of the assembly to produce a square assembly.
}

\both{
\begin{lemma}
For even $b$, there exists a $\tau=1$ SAS of size $O(b)$ that produces a $b$-bit block counter.
\end{lemma}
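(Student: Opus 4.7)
The plan is to assemble the block counter in three phases: first, construct the families of vertical bar and horizontal slab subassemblies used to form each block; second, combine these into the rows of blocks; third, stack the rows using a vertically-oriented end-to-end counter as a backbone and cap the result with a horizontal end-to-end counter to close the square. The total size is checked by showing each phase uses $O(b)$ mixings.

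For the first phase, I would build twelve families of vertical bars (four families each of west and east display bars, plus two each of start and end bars) and six families of horizontal slabs, exactly as described in the construction preceding the lemma. The central technique is the amortized construction from Lemma~\ref{lem:sas-rect-ub}: build up the MSB and LSB subassemblies in a chain of $O(b)$ mixings, so that for each subrow index $j$ a full subfamily is obtained by combining the precomputed LSB assembly on bits $1$ through $j-1$, the precomputed MSB assembly on bits $j+1$ through $b$, the constant set of assemblies for the $j$th bit, and the $j$th subrow assembly. The new feature compared with Section~\ref{sec:sas-much-better-pcfgs-rectangle} is the set of \emph{color bits} at the north and south ends of each bar; these are produced by the same amortized chain mechanism and cost only an additional $O(b)$ work per family, so each of the eighteen families is built in $O(b)$ steps and the totals telescope to $O(b)$.

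In the second phase I cap the east end bar family with the $O(b)$ thin cap assemblies and then mix all vertical bar families into a single bin. Because each bit subassembly's input/carry/output glues are unique to its bit index and color role, the only assemblies that can complete are valid bars, and two bars attach only when their row and subrow bits are compatible; furthermore, the color bit values at each ring force compatible bars of adjacent rings into the correct concentric arrangement. The result is the set of almost-complete rows (one per row value of the block counter). Mixing with the six horizontal slab families fills in the geometry-free portions of each ring, producing complete block-counter rows. This is again $O(b)$ mixings. In the third phase I build the vertical end-to-end counter backbone exactly as in Section~\ref{sec:sas-much-better-pcfgs-rectangle}, with taller reset bars of length $\Theta(b^2)$ that add only $O(\log b)$ overhead, so the backbone also has size $O(b)$. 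Mixing the backbone with the completed rows produces the stack of rows, and attaching a horizontal end-to-end counter (also $O(b)$) across the top closes the assembly into the final square.

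The main obstacle is verifying that the non-deterministic mixing produces only the intended $b$-bit block counter and no spurious assemblies, since many compatible-looking subassemblies are present in each bin simultaneously. The key invariants to check are: (i) uniqueness of the glues encoding bit index, bit value, carry value, and color role forces every completed bar to have a consistent row/subrow/color labeling; (ii) the extra interior layers introduced in the modified bars seal the geometry gaps between adjacent rings, so no ring can attach to a non-consecutive ring; (iii) the capped east end bars together with the backbone's reset bars pin down the row value of each block, preventing two rows from swapping positions; and (iv) the bar-to-slab attachment constraints ensure that the horizontal slabs complete each ring into a simply-connected square. Once these invariants are established, correctness is inherited from the end-to-end counter argument, and the $O(b)$ size bound follows from summing the constant number of $O(b)$-sized subconstructions.
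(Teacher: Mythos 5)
Your proposal is correct and follows essentially the same route as the paper: the paper's proof likewise appeals to the constant number of bar and slab families, each assembled with $O(b)$ work via the amortized MSB/LSB technique, plus the vertical and horizontal end-to-end counters from Lemma~\ref{lem:sas-rect-ub}, summing to $O(b)$. Your version is more detailed (the paper's proof is a brief summary deferring to the preceding construction description), and your explicit list of correctness invariants is a useful elaboration rather than a departure.
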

}

\later{
\begin{proof}
The construction described builds families of vertical bars and horizontal slabs that are used to assemble each the rings forming all blocks in the counter.
There are a constant number of families, and each family can be assembled using $O(b)$ work.
The vertical and horizontal end-to-end counters can also be assembled using $O(b)$ work each by Lemma~\ref{lem:sas-rect-ub}.
Then the $b$-bit block counter can be assembled by a SAS os size $O(b)$.
\end{proof}

We now consider a lower bound for any PCFG $G$ deriving the counter, using a similar approach as Lemma~\ref{lem:pcfg-end-to-end-counter-lower-bound}.
}

\both{
\begin{lemma}
\label{lem:pcfg-block-counter-lower-bound}
For any PCFG $G$ deriving a $b$-bit block counter, $|G| = \Omega(2^b)$.
\end{lemma}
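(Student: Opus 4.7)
The plan is to adapt the minimal row spanner argument of Lemma~\ref{lem:pcfg-end-to-end-counter-lower-bound} to the concentric-ring structure of the block counter. For each block $B$, call its innermost (light blue) start-bar ring the \emph{core} and its outermost (dark blue) end-bar ring the \emph{shell} of $B$. Define a \emph{minimal block spanner} of $B$ to be a non-terminal $N$ with rule $N \to (R,(x_1,y_1))(R',(x_2,y_2))$ such that $p_N$ contains at least one core cell and at least one shell cell of $B$, while neither $p_R$ nor $p_{R'}$ individually satisfies both conditions.

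First I would show that every block admits a minimal block spanner by descending from $S$: the start symbol spans every block, and as long as one of the two right-hand children of the current rule also spans $B$ we recurse into it. The descent terminates in a minimal block spanner because the derivation tree is finite and a terminal symbol (a single cell) cannot be a spanner of a block of size $\Theta(b^2) \times \Theta(b^2)$.

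The heart of the argument is showing that each non-terminal is a minimal block spanner for at most one block. Suppose for contradiction that $N$, with rule $N \to (R,\cdot)(R',\cdot)$, is minimal for two distinct blocks $B$ and $B'$. Minimality applied to $B$ forces the core cells and shell cells of $p_N$ that lie in $B$ to split between $p_R$ and $p_{R'}$; up to relabeling, $p_R$ contains a core cell of $B$ but no shell cell of $B$, and $p_{R'}$ contains a shell cell of $B$ but no core cell of $B$. Viewing the shell of $B$ as a union of unit squares in the plane, it is a rectangular annulus whose complement has a bounded component (containing the core) and an unbounded component (containing everything outside $B$). Hence any connected polyomino that contains a core cell of $B$ together with any cell outside $B$ must contain a shell cell of $B$. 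Applying the same split to $B'$, whichever of $p_R$, $p_{R'}$ contains the core cell of $B$ must also contain at least one cell of $B'$ (either a core cell or a shell cell of $B'$). Since $B' \neq B$, that cell lies outside $B$, forcing the same piece to contain a shell cell of $B$ — a contradiction.

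The main obstacle I anticipate is the Jordan-curve step: I must verify that the dark-blue end-bar ring really forms a closed $4$-connected loop of cells enclosing the interior of its block, even accounting for the protruding bit-encoding bumps on the east and west sides of the vertical end bars and the horizontal end slabs that close the corners. Once this topological separation is in place the contradiction above is immediate, and since the $b$-bit block counter contains $2^b$ blocks and each requires its own minimal block spanner, $G$ must have $\Omega(2^b)$ non-terminal symbols, so $|G| = \Omega(2^b)$.
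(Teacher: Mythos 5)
There is a genuine gap, and it is not the Jordan-curve verification you flag as the main obstacle. Your contradiction argument only rules out a \emph{single placement} of $N$ spanning two blocks: it treats $p_N$, $p_R$, $p_{R'}$ as fixed subsets of the plane and argues that the piece holding a core cell of $B$ must also hold a cell of $B'$. But in a PCFG a non-terminal derives one fixed labeled polyomino that may be \emph{instantiated at many positions} in the derivation, and nothing in your argument prevents $N$ from being a minimal block spanner for $B$ at one occurrence and for $B'$ at another; at the occurrence spanning $B$, $p_R$ need not contain any cell of $B'$ at all. The tell is that your proof never uses the fact that the $2^b$ blocks carry distinct ring-color patterns. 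If the argument were complete as written, it would equally show that a square tiled by $2^b$ \emph{identical} blocks requires a grammar of size $\Omega(2^b)$, which is false: derive one block once and replicate it by repeated doubling with $\mathrm{poly}(b)$ rules, and the resulting minimal block spanner is reused for every block.

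The paper closes exactly this hole with a second step that you are missing: given any path in $p_N$ from a gray cell inside the start loop to a gray cell outside the end loop, traverse it while maintaining a stack of the color loops crossed; since each color loop is a simple closed curve, the Jordan curve theorem forces the final stack to list, in order, the ring colors of the block being spanned. That sequence is read off the labels of $p_N$ itself, so the fixed labeled polyomino derived by $N$ determines the one block it can span, and distinct blocks have distinct color sequences. You need to add this (or an equivalent label-based identification of the spanned block) to conclude that $2^b$ distinct non-terminals are required. For the part you do prove, your route differs mildly from the paper's: you use the annulus separation of the shell to trap the core-containing child, whereas the paper argues that both children would have to lie inside end-ring loops of different blocks, disconnecting $p_N$; both are sound for that sub-step.
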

}

\later{
\begin{proof}

Define a \emph{minimal block spanner} as to be a non-terminal symbol $N$ in $G$ with production rule $N \rightarrow (B, (x_1, y_1)) (C, (x_2, y_2))$ such that the polyomino derived by $N$ (denoted $p_N$) contains a path from a gray cell outside the color loop of the end ring of the counter to a gray cell inside the start color loop of the counter, and the polyominoes derived by $B$ and $C$ (denoted $p_B$ and $p_C$) do not.

First we show that any minimal block spanner is a spanner for at most one block.
Assume by contradiction and that $N$ is a minimal block spanner for two blocks $\block_i$ and $\block_j$ and that $p_B$ contains a gray cell inside the start color loop of $\block_i$.
Then $B$ must be entirely contained in the color loop of the end ring of $\block_i$, as otherwise $N$ is not a minimal block spanner for $\block_i$.
Similarly, $C$ must then be entirely contained in the color loop of the end ring of $\block_j$.
Since no pair of color loops from distinct blocks have adjacent cells, $p_N$ is not a connected polyomino and so $G$ is not a valid PCFG.

\begin{figure}[ht]
\centering
\includegraphics[scale=1.0]{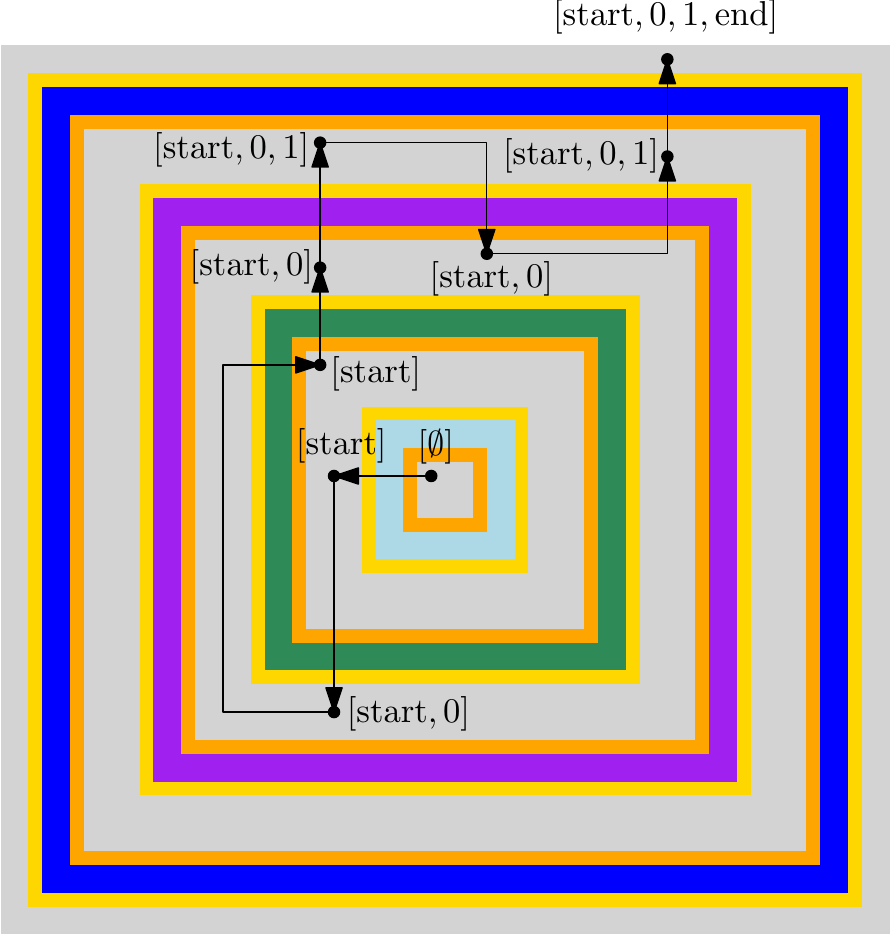}
\caption{A schematic of the proof that the block spanned by a minimal row spanner is unique.
Maintaining a stack while traversing a path from the interior of the start ring to the exterior of the end ring uniquely determines the block spanned by any minimal block spanner containing the path. }
\label{fig:square-proof1}
\end{figure}

Next we show that the block spanned by $N$ is unique, i.e. $N$ cannot be reused as a minimal spanner for multiple blocks.
See Figure~\ref{fig:square-proof1}.
Let $N$ be a minimal spanner for a block $\block_i$ and $p$ be a path of cells in $p_N$ starting at a gray cell contained in the start ring of $\block_i$ and ending at a gray cell outside the end ring of $\block_i$.
Consider a traversal of $p$, maintaining a stack containing the color loops crossed during the traversal.
Crossing a color loop from interior to exterior (a sequence of dark orange, then green, purple, or blue, then light orange cells) adds the center subloop's color to the stack, and traversing from exterior to interior removes the topmost element of the stack.

We claim that the sequence of subloop colors found in the stack after traversing an end ring from interior to exterior encodes a unique sequence of display rings and thus a unique block.
To see why, first consider that the color loop of every ring forms a simple closed curve.
Then the Jordan curve theorem implies that entering or leaving each region of gray cells between adjacent color loops requires traversing the color loop.
Then by induction on the steps of $p$, the stack contains the set of rings \emph{not containing} the current location on $p$ in innermost to outermost order.
So the stack state after exiting the exterior of the end ring uniquely identifies the block containing $p$ and $N$ is a minimal spanner for this unique block.

Since there are $2^b$ distinct blocks in a $b$-bit block counter, any PCFG that generates a counter has at least $2^b$ non-terminal symbols and size $\Omega(2^b)$.
\end{proof}
}

\both{
\begin{theorem}
The separation of PCFGs over $\tau=1$ SASs for constant-label squares is $\Omega(n/\log^3{n})$.
\end{theorem}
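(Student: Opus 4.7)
The plan mirrors the rectangular case (proof of the previous separation theorem). From the preceding $O(b)$-size $\tau=1$ SAS construction together with the $\Omega(2^b)$ PCFG lower bound of Lemma~\ref{lem:pcfg-block-counter-lower-bound}, the raw PCFG-over-SAS ratio for a $b$-bit block counter is immediately $\Omega(2^b/b)$. Both lemmas are already stated, so the only remaining task is to re-express this ratio in terms of $n$, the cell count of the polyomino.

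To do so, I would compute the total area. The counter is organized as a $2^{b/2}\times 2^{b/2}$ grid of blocks, plus the backbone vertical end-to-end counter along one side and a horizontal end-to-end counter along the top, both of which contribute only lower-order area. Each block has side length polynomial in $b$: its $b$ nested rings each contribute $O(1)$ radial thickness from the bar width, and the $8$-tiles-per-subrow padding contributes an additional $O(b)$ summed across all rings. Thus the per-block side is $\Theta(b)$ and the per-block area is $\Theta(b^2)$, giving $n=\Theta(2^b\cdot b^2)$. Inverting, $b=\Theta(\log n)$ and $2^b=\Theta(n/\log^2 n)$, so the separation becomes
\[
\Omega\!\left(\frac{2^b}{b}\right)=\Omega\!\left(\frac{n/\log^2 n}{\log n}\right)=\Omega\!\left(\frac{n}{\log^3 n}\right),
\]
matching the claim.

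The main (modest) obstacle is the area bookkeeping: one has to verify that stacking $b$ concentric rings of increasing height does not blow the block past $\Theta(b)\times\Theta(b)$, and that the vertical/horizontal backbone end-to-end counters really are asymptotically subdominant to the block grid. Both follow from the ring-thickness bound above and from the $\Theta(2^{b/2}b^2)\times\Theta(b^2)$ bounding boxes of the backbone counters (which are dominated by the $\Theta(2^{b/2}b)\times\Theta(2^{b/2}b)$ block region). Once these are verified, the separation bound is immediate by substitution, exactly as in the end-to-end-counter argument for rectangles.
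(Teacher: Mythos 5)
Your proposal is correct and follows essentially the same route as the paper: combine the $O(b)$ SAS upper bound with the $\Omega(2^b)$ PCFG lower bound of Lemma~\ref{lem:pcfg-block-counter-lower-bound}, compute $n = \Theta(2^b b^2)$, and substitute $b = \Theta(\log n)$ to get $\Omega(n/\log^3 n)$. The only remark worth making is that your per-block side estimate of $\Theta(b)$ is what makes the area formula $n=\Theta(2^b b^2)$ (the one the paper's proof also uses) come out right, even though the construction text nominally describes each block as $\Theta(b^2)\times\Theta(b^2)$; your explicit ring-thickness bookkeeping is the version consistent with the claimed separation.
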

}

\later{
\begin{proof}
By construction, a $b$-bit block counter has size $\Theta(2^bb^2) = n$ and so $b = \Theta(\log{n})$.
By the previous two lemmas, the separation is $\Omega((n/b^2)/b) = \Omega(n/\log^3{n})$.
\end{proof}

Unlike the previous rectangle construction, it does not immediately follow that a similar separation holds for 2-label squares.
Finding a construction that achieves nearly-linear separation but only uses two labels remains an open problem.
}

\subsection{Constant-glue constructions}
\label{sec:constant-glues}

Lemma~\ref{lem:constant-glues} proved that any system $\mathcal{S}$ can be converted to a slightly larger system (both in system size and scale) that simulates $\mathcal{S}$.
Applying this lemma to the constructions of Section~\ref{sec:sass-much-better-than-pcfgs} yields identical results for constant-glue systems:

\both{
\begin{theorem}
\label{thm:constant-glues}
All results in Section~\ref{sec:sass-much-better-than-pcfgs} hold for systems with $O(1)$ glues.
\end{theorem}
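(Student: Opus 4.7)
The plan is a direct application of Lemma~\ref{lem:constant-glues} to each SAS and SSAS constructed in Section~\ref{sec:sass-much-better-than-pcfgs}, followed by verification that the resulting polylog scale-up does not spoil the PCFG lower bounds. First I would check that each of the weak-counter, end-to-end-counter, and block-counter systems is mismatch-free at $\tau = 1$: all three are assembled by non-deterministically combining constant-sized bit, bar, and slab subassemblies whose glues are designed so that every realized attachment is between equal glues, fulfilling the hypothesis of Lemma~\ref{lem:constant-glues}.

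Second, I would check the parameters. In each construction, $|\Sigma(T)| = O(1)$ (a single label for the weak counter, a constant palette of color-strip/loop labels plus background for the end-to-end and block counters) and $|T|,|\mathcal{S}| = O(b)$. Lemma~\ref{lem:constant-glues} then yields a simulating system $\mathcal{S}'$ with $O(1)$ glues, total system size $O(|\Sigma(T)|\,|T| + |\mathcal{S}|) = O(b)$, and scale factor $O(\log|G|) = O(\log b)$; so $\mathcal{S}'$ produces an $O(\log b)$-scaled replica of the original counter, which is larger than the original by only a polylogarithmic factor in $n$.

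Third, I would transfer the PCFG lower bounds of Lemmas~\ref{lem:pcfg-weak-counter-lower-bound},~\ref{lem:pcfg-end-to-end-counter-lower-bound}, and~\ref{lem:pcfg-block-counter-lower-bound} to these scaled polyominoes using the scale-invariance announced in the opening of Section~\ref{sec:sass-much-better-than-pcfgs}. Since $b = \Theta(\log n) = \Theta(\log n')$ when $n' = n\cdot\mathrm{polylog}(n)$, and since the $O(b)$ SAS upper bound is preserved verbatim by the simulation, each of the separation ratios $\Omega(n/(\log\log n)^2)$, $\Omega(n/\log^2 n)$, and $\Omega(n/\log^3 n)$ of the section carries over, with the extra polylog factors from the macrotile scale absorbed into the polylog denominators already present in the statements.

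The main obstacle is verifying that the minimal-spanner arguments underlying the three PCFG lower bounds remain valid on the scaled versions of the counters. Scaling gives a non-terminal's derived polyomino more room in which to straddle two distinct rows or blocks, and one must rule out the possibility that this room allows a single non-terminal to serve as a minimal spanner for many features at once. The key observation will be that the gray gaps between weak-counter rows, the orange reset bars separating end-to-end-counter rows, and the nested dark/light orange loops separating blocks all scale up together with the polyomino, so the path-uniqueness and Jordan-curve arguments of Lemmas~\ref{lem:pcfg-weak-counter-lower-bound},~\ref{lem:pcfg-end-to-end-counter-lower-bound}, and~\ref{lem:pcfg-block-counter-lower-bound} still force any minimal row/block spanner to be dedicated to a unique row or block, giving the same $\Omega(2^{2^b})$ and $\Omega(2^b)$ PCFG lower bounds on the scaled counters.
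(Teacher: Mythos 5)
Your proposal is correct and follows essentially the same route as the paper: apply Lemma~\ref{lem:constant-glues} to each construction, observe that the constant label alphabet keeps the system size at $O(|\mathcal{S}|) = O(b)$, and note that the $O(\log b)$ scale factor leaves $b = \Theta(\log n)$ and hence all separation ratios unchanged. You are somewhat more explicit than the paper about verifying the mismatch-free hypothesis and about why the minimal-spanner lower bounds survive scaling (the paper simply asserts that label preservation and the section's announced scale-invariance suffice), but this is elaboration of the same argument rather than a different one.
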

}

\later{
\begin{proof}
Lemma~\ref{lem:constant-glues} describes how to convert any SAS or SSAS $\mathcal{S} = (T, G, \tau, M)$ into a macrotile version of the system $\mathcal{S}'$ that uses a constant number of glues, has system size $O(\Sigma(T)|T| + |\mathcal{S}|)$, and scale factor $O(\log{|G|})$.
Additionally, the construction achieves matching labels on \emph{all} tiles of each macrotile, including the glue assemblies. 
Because the labels are preserved, the polyominoes produced by each macrotile system $\mathcal{S}'$ simulating an assembly system $\mathcal{S}$ in Section~\ref{sec:sass-much-better-than-pcfgs} preserves the lower bounds for PCFGs (Lemmas~\ref{lem:pcfg-weak-counter-lower-bound},~\ref{lem:pcfg-end-to-end-counter-lower-bound}, and~\ref{lem:pcfg-block-counter-lower-bound}) of each construction.
Moreover, the number of labels in the polyomino is constant and so $|\mathcal{S}'| = O(|T| + |\mathcal{S}|) = O(|\mathcal{S}|)$ and the system size of each construction remains the same.
Finally, the scale of the macrotiles is $O(\log{|G|}) = O(\log{|\mathcal{S}|} = O(\log{b})$, so $n$ is increased by a $O(\log^2{b})$-factor, but since $n$ was already exponential in $b$, it is still the case that $b = \Theta(\log{n})$ and so the separation factors remain unchanged.
\end{proof}
}

\section{Conclusion}

As the results of this work show, efficient staged assembly systems may use a number of techniques including, but not limited to, those described by local combination of subassemblies as captured by PCFGs.
It remains an open problem to understand how the efficient assembly techniques of Section~\ref{sec:pcfgs-never-much-better-sas} and Section~\ref{sec:sass-much-better-than-pcfgs} relate to the general problem of optimally assembling arbitrary shapes.

\section*{Acknowledgements}
We thank Benjamin Hescott and anonymous reviewers for helpful comments and feedback that greatly improved the presentation of the paper.

\bibliographystyle{plain}
\bibliography{nanogrammar2}

\begin{thebibliography}{10}

\bibitem{Adleman-2001}
L.~Adleman, Q.~Cheng, A.~Goel, and M.-D. Huang.
\newblock Running time and program size for self-assembled squares.
\newblock In {\em Proceedings of Symposium on Theory of Computing (STOC)},
  2001.

\bibitem{Cannon-2013}
S.~Cannon, E.~D. Demaine, M.~L. Demaine, S.~Eisenstat, M.~J. Patitz, R.~T.
  Schweller, S.~M. Summers, and A.~Winslow.
\newblock Two hands are better than one (up to constant factors): Self-assembly
  in the {2HAM} vs. {aTAM}.
\newblock In {\em Proceedings of International Symposium on Theoretical Aspects
  of Computer Science (STACS)}, volume~20 of {\em LIPIcs}, pages 172--184,
  2013.

\bibitem{Charikar-2005}
M.~Charikar, E.~Lehman, A.~Lehman, D.~Liu, R.~Panigrahy, M.~Prabhakaran,
  A.~Sahai, and a.~shelat.
\newblock The smallest grammar problem.
\newblock {\em IEEE Transactions on Information Theory}, 51(7):2554--2576,
  2005.

\bibitem{Chen-2012}
H.L. Chen and D.~Doty.
\newblock Parallelism and time in hierarchical self-assembly.
\newblock In {\em Proceedings of ACM-SIAM Symposium on Discrete Algorithms
  (SODA)}, 2012.

\bibitem{Cook-2011}
M.~Cook, Y.~Fu, and R.~Schweller.
\newblock Temperature 1 self-assembly: determinstic assembly in {3D} and
  probabilistic assembly in {2D}.
\newblock In {\em Proceedings of ACM-SIAM Symposium on Discrete Algorithms
  (SODA)}, 2011.

\bibitem{Czeizler-2012}
E.~Czeizler and A.~Popa.
\newblock Synthesizing minimal tile sets for complex patterns in the framework
  of patterned {DNA} self-assembly.
\newblock In D.~Stefanovic and A.~Turberfield, editors, {\em DNA 18}, volume
  7433 of {\em LNCS}, pages 58--72. 2012.

\bibitem{Demaine-2008}
E.~D. Demaine, M.~L. Demaine, S.~Fekete, M.~Ishaque, E.~Rafalin, R.~Schweller,
  and D.~Souvaine.
\newblock Staged self-assembly: nanomanufacture of arbitrary shapes with
  ${O}(1)$ glues.
\newblock {\em Natural Computing}, 7(3):347--370, 2008.

\bibitem{Demaine-2012a}
E.~D. Demaine, S.~Eisenstat, M.~Ishaque, and A.~Winslow.
\newblock One-dimensional staged self-assembly.
\newblock {\em Natural Computing}, 2012.

\bibitem{Doty-2012b}
D.~Doty.
\newblock Theory of algorithmic self-assembly.
\newblock {\em Communications of the ACM}, 55(12):78--88, 2012.

\bibitem{Doty-2011}
D.~Doty, L.~Kari, and B.~Masson.
\newblock Negative interactions in irreversible self-assembly.
\newblock In Y.~Sakakibara and Y.~Mi, editors, {\em DNA 16}, volume 6518 of
  {\em LNCS}, pages 37--48. 2011.

\bibitem{Doty-2010}
D.~Doty, J.~H. Lutz, M.~J. Patitz, R.~T. Schweller, S.~M. Summers, and
  D.~Woods.
\newblock Intrinsic universality in self-assembly.
\newblock In {\em Proceedings of Symposium on Theoretical Aspects of Computer
  Science (STACS)}, volume~5 of {\em LIPIcs}, pages 275--286, 2010.

\bibitem{Doty-2012a}
D.~Doty, J.~H. Lutz, M.~J. Patitz, R.~T. Schweller, S.~M. Summers, and
  D.~Woods.
\newblock The tile assembly model is intrinsically universal.
\newblock In {\em Proceedings of Foundations of Computer Science (FOCS)}, pages
  302--310, 2012.

\bibitem{Goos-2011}
M.~G\"{o}\"{o}s and P.~Orponen.
\newblock Synthesizing minimal tile sets for patterned dna self-assembly.
\newblock In Y.~Sakakibara and Y.~Mi, editors, {\em DNA 16}, volume 6518 of
  {\em LNCS}, pages 71--82. 2011.

\bibitem{Jez-2013}
A.~Je\.{z}.
\newblock Approximation of grammar-based compression via recompression.
\newblock Technical report, arXiv, 2013.

\bibitem{Lehman-2002}
E.~Lehman.
\newblock {\em Approximation Algorithms for Grammar-Based Data Compression}.
\newblock PhD thesis, MIT, 2002.

\bibitem{Patitz-2012}
M.~J. Patitz.
\newblock An introduction to tile-based self-assembly.
\newblock In J.~Durand-Lose and N.~Jonoska, editors, {\em UCNC 2012}, volume
  7445 of {\em LNCS}, pages 34--62. 2012.

\bibitem{Patitz-2011}
M.~J. Patitz, R.~T. Schweller, and S.~M. Summers.
\newblock Exact shapes and turing universality at temperature 1 with a single
  negative glue.
\newblock In L.~Cardelli and W.~Shih, editors, {\em DNA 17}, volume 6937 of
  {\em LNCS}, pages 175--189. 2011.

\bibitem{Rothemund-2000}
P.~W.~K. Rothemund and E.~Winfree.
\newblock The program-size complexity of self-assembled squares.
\newblock In {\em Proceedings of Symposium on Theory of Computing (STOC)},
  pages 459--468, 2000.

\bibitem{Schweller-2012}
R.~T. Schweller and M.~Sherman.
\newblock Fuel efficient computation in passive self-assembly.
\newblock Technical report, arXiv, 2012.

\bibitem{Seki-2013}
S.~Seki.
\newblock Combinatorial optimization in pattern assembly.
\newblock Technical report, arXiv, 2013.

\bibitem{Soloveichik-2005}
D.~Soloveichik and E.~Winfree.
\newblock Complexity of self-assembled shapes.
\newblock In Claudio Ferretti, Giancarlo Mauri, and Claudio Zandron, editors,
  {\em DNA 11}, volume 3384 of {\em LNCS}, pages 344--354. 2005.

\bibitem{Winfree-1998}
E.~Winfree.
\newblock {\em Algorithmic Self-Assembly of DNA}.
\newblock PhD thesis, Caltech, 1998.

\end{thebibliography}

\appendix
\magicappendix

\end{document}